\definecolor{Red}{rgb}{1,0,0}
\definecolor{Blue}{rgb}{0,0,1}
\definecolor{Olive}{rgb}{0.41,0.55,0.13}
\definecolor{Green}{rgb}{0,1,0}
\definecolor{MGreen}{rgb}{0,0.8,0}
\definecolor{DGreen}{rgb}{0,0.55,0}
\definecolor{Yellow}{rgb}{1,1,0}
\definecolor{Cyan}{rgb}{0,1,1}
\definecolor{Magenta}{rgb}{1,0,1}
\definecolor{Orange}{rgb}{1,.5,0}
\definecolor{Violet}{rgb}{.5,0,.5}
\definecolor{Purple}{rgb}{.75,0,.25}
\definecolor{Brown}{rgb}{.75,.5,.25}
\definecolor{Grey}{rgb}{.5,.5,.5}
\definecolor{Black}{rgb}{0,0,0}
\newtheorem{thm}{Theorem}
\newtheorem{Algo}[thm]{Algorithm}
\newtheorem{cor}[thm]{Corollary}
\newtheorem{lem}[thm]{Lemma}
\newtheorem{proposition}[thm]{Proposition}
\newtheorem{remark}[thm]{Remark}
\numberwithin{equation}{section}
\newcommand{\BP}{\ensuremath{\operatorname{BP}}}
\newcommand{\vecone}{\vec{1}}
\newcommand{\tensor}{\otimes}
\newcommand{\id}{\bc{\begin{array}{ccc}1&0&0\\0&1&0\\0&0&1\end{array}}}
\long\def\comment#1{}
\def\@cite#1#2{[\if@tempswa #2 \fi #1]}
\newcommand{\eps}{\varepsilon}
\newcommand{\Consistent}[1]{\ensuremath{A}}
\newcommand{\Reachable}[1]{\ensuremath{B}}
\newcommand{\ra}{\rightarrow}
\newcommand\pr{\mathrm{P}}
\newcommand{\bink}[2]
    {{{#1}\choose {#2}}}
\newcommand\bc[1]{\left({#1}\right)}
\newcommand\cbc[1]{\left\{{#1}\right\}}
\newcommand{\bck}[1]{\left\langle{#1}\right\rangle}
\newcommand\brk[1]{\left\lbrack{#1}\right\rbrack}
\newcommand\scal[2]{\bck{{#1},{#2}}}
\newcommand\RR{\mathbf{R}}
\newcommand\algstyle{\sffamily\small}
\newcommand{\LL}{\mathcal{L}}
\newcommand{\MM}{\mathcal{M}}
\newcommand{\BB}{\mathcal{B}}
\newcommand{\KK}{\mathcal{K}}
\newcommand{\AAA}{\mathcal{A}}
\newcommand{\RRR}{\mathcal{R}}
\newcommand{\EE}{\mathcal{E}}
\def\eps{\epsilon}
\def\N{\hbox{I\kern-.2em\hbox{N}}}
\def\R{\hbox{I\kern-.2em\hbox{R}}}
\def\L{{\cal{L}}}
\def\0{\hat{0}}
\def\1{\hat{1}}
\def\|{\, || \, }
\newcommand{\anote}[1]{}
\newcommand{\enote}[1]{}
\newcommand{\mnote}[1]{}
\newsavebox{\s}
\newsavebox{\sz}
\newsavebox{\speicher}
\newlength{\laenge}
\newlength{\laengez}
\sbox{\s}{\(\cup\)} \sbox{\sz}{\(\cdot\)}
\sbox{\speicher}{\usebox{\s}\hspace*{-.5\laenge}\usebox{\sz}\hspace*{.5\laenge}\hspace*{-.5\laengez}}
\newcommand{\du}{\usebox{\speicher}}
\title{ {A Spectral Approach to Analyzing Belief Propagation for 3-Coloring} \ }
\author{ Amin Coja-Oghlan\thanks{Carnegie Mellon University, Department of Mathematical Sciences. Email: {\tt amincoja@andrew.cmu.edu}. Supported by DFG CO~646.} \and
Elchanan Mossel\thanks{U.C. Berkeley. E-mail: {\tt
        mossel@stat.berkeley.edu}. Supported by a Sloan fellowship
  in Mathematics, by NSF Career award DMS-0548249, NSF grant
  DMS-0528488 and ONR grant N0014-07-1-05-06}
\and Dan Vilenchik\thanks{Tel-Aviv University. E-mail: {\tt
        vilenchi@post.tau.ac.il}.}
        }
\begin{document}
\maketitle
\begin{abstract}
Belief Propagation ($\BP$) is a message-passing
algorithm that computes the exact marginal
distributions at every vertex of a graphical model without cycles.
While $\BP$ is designed to work correctly on trees, it is routinely
applied to general graphical models that may contain cycles, in
which case neither convergence, nor correctness in the case of
convergence is guaranteed.
Nonetheless, $\BP$ gained popularity as it seems to remain effective in many cases of
interest, even when the underlying graph is ``far" from being a tree.
However, the theoretical understanding of $\BP$ (and its new relative Survey
Propagation) when applied to CSPs is poor.

Contributing to the rigorous understanding of BP,
in this paper we relate the convergence of BP
to spectral properties of the graph.
This encompasses a result for random graphs with a ``planted'' solution;
thus, we obtain the first rigorous result on BP for graph coloring in the
case of a complex graphical structure (as opposed to trees).
In particular, the analysis shows
how Belief Propagation breaks the symmetry between the $3!$ possible
permutations of the color classes.

\end{abstract}
\bigskip

\noindent\textbf{Keywords:} Belief Propagation,
Survey Propagation, graph coloring, spectral algorithms.


\section{Introduction and Results}

\subsection{Message Passing Algorithms}

This paper deals with a rigorous analysis of the Belief Propagation (``BP'' for short) algorithm on certain instances
of the 3-coloring problem.
Originally BP was introduced by Pearl~\cite{Pearl} as a message passing algorithm to compute the marginals at the vertices
of a probability distribution described by an acyclic ``graphical model'', i.e., a
representation of the distribution's dependency structure as an acyclic graph.
Although in the worst case BP will fail if the graphical representation features cycles,
various version of BP are in common use as heuristics in artificial intelligence
and statistics, where they frequently perform well empirically
as long as the underlying model does at least not contain (many) ``short'' cycles.
However, there is currently no general theory that could explain the empirical
success of BP (with the notable exception of the use of BP in LDPC decoding~\cite{LMSS98, LMSS01,RSU01}).

A striking recent application of BP is to instances of NP-hard constraint satisfaction problems such
as 3-SAT or 3-coloring; this is the type of problems that we are dealing with in the present work.
In this case the primary objective is \emph{not} to compute the marginals of some distribution,
but to construct a solution to the constraint satisfaction problem.
For example, BP can be used to (attempt to) compute a proper 3-coloring of a given graph.
Indeed, empirically BP (and its sibling Survey Propagation ``SP'')
seems to perform well on problem instances that are notoriously ``hard'' for other current algorithmic approaches,
including the case of sparse \emph{random graphs}.

For instance, let $G(n,p)$ be the random graph with vertex set $V=\{1,\ldots,n\}$ that is obtained
by including each possible edge with probability $0<p=p(n)<1$ independently.
Thus, the expected degree of any vertex in $G(n,p)$ is $(n-1)p\sim np$.
Then there exists a threshold $\tau=\tau(n)$ such that for any $\eps>0$ the random graph
$G(n,p)$ is $3$-colorable with probability $1-o(1)$
if $np<(1-\eps)\tau$, whereas $G(n,p)$ is not $3$-colorable if $np>(1+\eps)\tau$~\cite{AchFried}.
In fact, random graphs $G(n,p)$ with average degree $np$ just below $\tau$
were considered \emph{the} example of ``hard'' instances of the 3-coloring problem,
until statistical physicists discovered that BP/SP can solve these graph problems efficiently
in a regime considered ``hard'' for any previously known algorithms (possibly right up to the threshold density)
\cite{Weigt,BMPWZCol}.
While there are exciting and deep arguments from statistical physics that provide a plausible
explanation of why these message passing algorithms succeed, these arguments are non-rigorous, and indeed
no mathematically rigorous analysis is currently known.

The difficulty in understanding the performance of BP/SP on $G(n,p)$ actually lies in two aspects.
The first aspect is the combinatorial structure of the random graph $G(n,p)$ with respect to the 3-coloring problem,
which is not very well understood.
In fact, even the basic problem of obtaining the precise value of the threshold $\tau$ is one of the current challenges
in the theory of random graphs.
Furthermore, we lack a rigorous understanding of
the ``solution space geometry'', i.e., the structure of the set of all proper 3-colorings of a typical random graph $G(n,p)$
(e.g., how many proper 3-colorings are there typically, and what is the typical Hamming distance between any two).
But according to the statistical physics analysis,
the solution space geometry affects the behavior of BP significantly.

The second aspect, which we focus on in the present work, is the actual BP algorithm:
given a graph $G$, how/why does the BP algorithm ``construct'' a 3-coloring?
Thus far there has been no rigorous analysis of BP that applies to graph coloring
instances except for graphs that are \emph{globally} tree-like (such as trees or forests).
However, it seems empirically that BP performs well
on many graphs that are just \emph{locally} tree like (i.e., do not contain ``short'' cycles).
Therefore, in the present paper our goal is to analyze BP rigorously on a class
of graphs that may have a complex combinatorial structure globally, but that have a very simple solution space geometry.
More precisely, we shall relate the success of BP to spectral properties of the adjacency matrix of the input graph.
In addition, we point out that the analysis comprises a natural random graph model (namely, a ``planted solution'' model).

\subsection{Belief Propagation and Spectral Techniques}\label{Sec_BPSpec}

The main contribution of this paper is a rigorous analysis of BP for 3-coloring.
We basically show that if a certain (simple) spectral heuristic for 3-coloring succeeds, then so does BP.
Thus, the result does not refer to a specific random graph model,
but to a special class of graphs -- namely graphs that satisfy a certain spectral condition.
More precisely, we say that a graph $G=(V,E)$ on $n$ vertices is \emph{$(d,\eps)$-regular} if
there exists a 3-coloring of $G$ with color classes $V_1,V_2,V_3$ such that the following is true.
Let $\vecone_{V_i}\in\RR^V$ be the vector whose entries equal $1$ on coordinates $v\in V_i$,
and $0$ on all other coordinates; then
\begin{description}
\item[R1.] for all $1<i<j<3$ the vector $\vecone_{V_i}-\vecone_{V_j}$ is an eigenvector of the adjacency matrix $A(G)$
	with eigenvalue $-d$, and
\item[R2.] if $\xi\perp\vecone_{V_i}$ for all $i=1,2,3$, then $\|A(G)\xi\|\leq\eps d\|\xi\|$.
\end{description}
We shall state a few elementary properties of $(d,\eps)$-regular graphs in Proposition~\ref{Prop_dominant} below
(assuming that $\eps$ is sufficiently small -- $\eps<0.01$, say).
For instance, we shall see that $(d,\eps)$-regularity implies that each vertex $v\in V_i$ has precisely $d$
neighbors in each other color class $V_j$ ($i\not=j$).
Moreover, $(V_1,V_2,V_3)$ is the only 3-coloring of $G$ (up to permutations of the color classes, of course),
and for each pair $i\not=j$ the bipartite graph consisting of the $V_i$-$V_j$-edges is an expander.

Furthermore, if a graph $G$ is $(d,\eps)$-regular for any $\eps<0.01$, say, then the following spectral heuristic is easily seen to produce a 3-coloring.
\begin{enumerate}
\item Compute a pair of perpendicular eigenvectors $\chi^{1},\chi^{2}\in\RR^V$ of $A(G)$ with eigenvalue $-d$.
\item Define an equivalence relation $\approx$ on $V$ by letting
	$v\approx w$ iff $\chi_v^{i}=\chi_w^{i}$ for $i=1,2$.
	Output the equivalence classes of $\approx$ as a 3-coloring of $G$.
\end{enumerate}
The equivalence classes of $\approx$ are precisely the three color classes $V_1,V_2,V_3$.
For if $v,w$ belong to the same color class, then their entries in all three vectors $\vecone_{V_i}-\vecone_{V_j}$ ($i<j$) coincide;
hence, as the space spanned by these vectors contains $\chi^{1},\chi^{2}$, we have $v\approx w$.
Conversely, if $v\approx w$, then the entries of $v$ and $w$ in all the vectors $\vecone_{V_i}-\vecone_{V_j}$ coincide,
because these vectors lie in the space spanned by $\chi^{1},\chi^{2}$;
consequently, $v,w$ belong to the same color class $V_k$.

The main result of this paper is that BP can 3-color $(d,0.01)$-regular graphs in polynomial time,
provided that $d$ is not too small and the number of vertices is sufficiently large.
We defer the description of the actual (randomized, polynomial time) BP coloring algorithm \texttt{BPCol}, which the following theorem
refers to, to Section~\ref{Sec_BPCol}.

\begin{thm}\label{thm:BPforCol}
There exist constants $d_0,\kappa>0$ such that for each $d\geq d_0$ there is a number $n_0=n_0(d)$ so that the following holds.
If $G=(V,E)$ is a $(d,0.01)$-regular graph on $n=|V|\geq n_0$ vertices, then
with probability $\geq\kappa n^{-1}$ over the coin tosses of the algorithm,
\texttt{BPCol}$(G)$ outputs a proper 3-coloring of $G$.
\end{thm}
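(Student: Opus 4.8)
The plan is to analyze the trajectory of the BP messages on a $(d,0.01)$-regular graph $G$ and show that, with the help of a small amount of randomization in the initialization, the messages converge to a fixed point that reveals the (essentially unique) 3-coloring $(V_1,V_2,V_3)$ guaranteed by Proposition~\ref{Prop_dominant}. The conceptual heart of the matter is that BP, once it is nudged in a particular direction by the random seed, performs something resembling power iteration against the adjacency operator $A(G)$, and the spectral gap from R2 makes the components of the messages orthogonal to $\mathrm{span}\{\vecone_{V_i}-\vecone_{V_j}\}$ decay geometrically, while the $-d$-eigenspace component is the one that carries the coloring information. So the overall strategy is: (i) write the BP update for 3-coloring messages in a linearized form around the uniform (uninformative) message $(1/3,1/3,1/3)$; (ii) show the linearization is governed by a matrix built from $A(G)\tensor M_0$ for a fixed $3\times 3$ matrix $M_0$ (this is where the macros \onehalf and \ones come in — the relevant color-space operator has eigenvalues that, combined with the eigenvalue $-d$ of $A(G)$ in the good directions versus $\le \eps d$ in the bad directions, give the needed separation of scales); (iii) track the nonlinear correction terms and show they remain lower-order throughout the relevant number of iterations.

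Concretely, I would proceed as follows. First, fix the BP update rule for 3-coloring (messages are probability vectors on $\{1,2,3\}$ along directed edges, updated by the standard product-over-incoming-messages rule with the "not-equal" constraint) and expand each message as $\frac13\vecone + \delta_{u\to v}$ with $\sum_c (\delta_{u\to v})_c = 0$. Second, derive that to first order $\delta_{v\to w} \approx c_d \sum_{u\in N(v)\setminus w} Q\,\delta_{u\to v}$ for an explicit constant $c_d$ (of order $1/d$) and an explicit $3\times 3$ matrix $Q$ acting on the mean-zero subspace, so that the linearized dynamics on the full space $\RR^{E}\tensor\RR^3$ is governed by an operator whose spectrum factorizes into (eigenvalue of a matrix close to $A(G)$) $\times$ (eigenvalue of $Q$). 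Third, choose the random initialization of \texttt{BPCol} so that, with probability $\Omega(n^{-1})$, the projection of the initial message vector onto one specific one-dimensional subspace inside the $(-d)$-eigenspace $\tensor$ (top eigenvector of $Q$) is nonzero and of the correct sign — this is the $\kappa n^{-1}$ in the theorem, and it is exactly the step where BP "breaks the symmetry" among the $3!$ permutations of color classes: the random seed picks out which permutation. Fourth, run the iteration for $t = \Theta(\log n)$ steps: the good component grows like $(\text{const})^t$ while the bad component (living in the $\eps d$-bounded part of $A(G)$, tensored with anything, or in the good part tensored with the non-top eigenvalues of $Q$) grows by at most a factor $0.01 \cdot d \cdot |c_d| < 1$ times slower, so after normalization the message vector aligns with $\vecone_{V_i}-\vecone_{V_j}\tensor(\text{eigvec of }Q)$ up to $o(1)$ error. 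Fifth, feed the (now informative) messages into the rounding step — essentially the equivalence-relation reading of the spectral heuristic already described in the excerpt, but applied to the BP marginals — and verify that the resulting partition is exactly $(V_1,V_2,V_3)$ and hence a proper 3-coloring; here Proposition~\ref{Prop_dominant}'s structural facts (each $v\in V_i$ has exactly $d$ neighbors in each $V_j$, the coloring is unique, the bipartite graphs are expanders) are used to certify that no other partition can be a near-fixed-point and that the rounding is unambiguous.

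I expect the main obstacle to be step three-into-four: controlling the \emph{nonlinear} terms of the BP update over $\Theta(\log n)$ iterations. The linearized picture is clean, but BP's update is a ratio of products, and the second-order terms could in principle either swamp the signal (if the messages grow too fast before the bad directions have died out) or drag the system toward a spurious fixed point. The delicate point is that the good component must be allowed to grow — otherwise it never escapes the $o(1)$-noise floor — yet it must not grow so fast that quadratic terms of size $\|\delta\|^2$ overtake the linear terms of size $\|\delta\|$ before the bad/expander component (of relative size at most $\eps = 0.01$ per step, so shrinking geometrically) has been driven below the eventual signal level. Making this work requires a carefully staged argument: an initial "growth phase" where one shows $\|\delta_{u\to v}\|$ stays $\le$ some slowly-growing bound (say $d^{-1/2}$) and the ratio of bad-to-good mass contracts by a definite factor each step, followed by a "saturation/cleanup phase" where one shows the fixed point is reached and is the correct one. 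Getting the constant $d_0$ (how large $d$ must be) and the interplay between $d_0$ and the $0.01$ in R2 right — so that $c_d \cdot 0.01 \cdot d$ is safely below the good-direction growth rate while the quadratic error is safely below everything — is the quantitative crux, and I would organize the bookkeeping with a potential function that simultaneously tracks the good-direction mass, the bad-direction mass, and an $\ell_\infty$ (or $\ell_2$) bound on the total message perturbation.
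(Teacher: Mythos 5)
Your high-level plan — linearize around the uniform message, argue that the linearization amplifies a coloring-aligned component, control nonlinear error — is the right shape, but two structural points are off in a way that would sink the argument as sketched.

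First, the linearized BP operator does \emph{not} factor as $A(G)\tensor M_0$ for a $3\times 3$ matrix $M_0$. BP messages $\eta_{v\to w}^a$ are indexed by \emph{directed edges}, and the update sums over $u\in N(v)\setminus\{w\}$, i.e., it excludes backtracking. After writing $\Delta=\eta-\frac13$ and noting the mean-zero constraint kills the $J_3$ part of the color-space matrix, the linearization is $\LL=-\frac12\cdot(\text{identity on }\RR^3)\tensor(\MM-\KK)$, where $\MM,\KK$ act on $\RR^{\AAA}$ (the directed-edge space) and $\KK$ is precisely the backtracking correction. Dropping $\KK$ — which is what you would implicitly do by pretending the graph-space operator is $A(G)$ on $\RR^V$ — is the \texttt{pseudo-bp} simplification discussed in Remark~\ref{Rem_Osamu}, and the paper specifically sets out to avoid it. The coloring-carrying eigenvalue of $\LL$ is $\lambda=\frac d4+\frac14\sqrt{d^2-8d+4}$, not (a scalar times) $-d$; getting there requires a genuine analysis of $\MM-\KK$ on a $6$-dimensional invariant subspace $\EE_0$ (Lemma~\ref{Lemma_eigM}) plus a bound on the complementary space via $T\oplus U$ and a lift back to $A(G)$ (Lemma~\ref{Lemma_boundEperp}) — not a clean tensor factorization over $A(G)$.

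Second, and more damaging: the linearized operator has an eigenvalue of magnitude $d-\frac12$, strictly larger than $\lambda\approx d/2$, whose eigenvectors $e^a$ carry \emph{no} coloring information — they are constant across all directed edges for a fixed color. If the initialization has any nonzero component along $e^1,e^2,e^3$, this component grows fastest and swamps the coloring signal. The paper's condition {\bf F1} requires $\Delta(0)\perp e^a$ for all $a$, which for the given random initialization forces the random partition $(W_1,W_2,W_3)$ to be exactly balanced; this event has probability $\Theta(n^{-1})$ (Lemma~\ref{Lemma_F1}), and \emph{this} is where the $\kappa n^{-1}$ in the theorem comes from. Your proposal attributes the $n^{-1}$ to getting the sign and magnitude of the projections onto the coloring eigenvectors right, but that part (condition {\bf F2}) is a constant-probability event. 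Without identifying and neutralizing the $e^a$ direction, the "growth phase" you describe would converge to a useless fixed point regardless of the spectral-gap bookkeeping. Your instincts about the nonlinear-error control being the delicate quantitative part do match the paper's Sections~\ref{Sec_err} and \ref{Sec_Endgame}, but they come after these two issues are resolved.
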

Observe that Theorem~\ref{thm:BPforCol} deals with ``sparse'' graphs, since the lower bound $n_0$ on the number
of vertices depends on $d$.
The proof yields an exponential dependence, i.e., $n_0=\exp(\Theta(d))$.
Conversely, this means that the average degree of $G$ is at most logarithmic in $n$,
which is arguably the most relevant regime to analyze BP (cf.~Section~\ref{Sec_BPCol}).
Moreover, by applying \texttt{BPCol} $O(n)$ times independently, the success probability can be boosted to $1-\alpha$
for any $\alpha>0$.
Besides, there is an easy way to modify the (initialization step of) \texttt{BPCol} so that
the success probability of one iteration is at least $\kappa$ (rather than $\kappa n^{-1}$), cf.\ Remark~\ref{Rem_Success} for details.

Let us emphasize that the contribution of Theorem~\ref{thm:BPforCol} is \emph{not} that we can now 3-color a class
of graphs for which no efficient algorithms were previously known, as
the aforementioned spectral heuristic 3-colors $(d,0.01)$-regular graphs in polynomial time.
Instead, the new aspect is that we can show that the \emph{Belief Propagation algorithm} 3-colors
$(d,0.01)$-regular instances, thus shedding new light on this 
heuristic.
Indeed, the proof of Theorem~\ref{thm:BPforCol}, which we present in Section~\ref{Sec_thm:BPforCol},
shows that in a sense \texttt{BPCol} ``emulates'' the spectral heuristic (although no spectral techniques
occur in the description of \texttt{BPCol}).
Thus, we establish a connection between spectral methods and BP.
Besides, we note that no ``purely combinatorial'' algorithm (that avoids the use of advanced techniques
such as Semidefinite Programming or spectral methods) is known to 3-color $(d,0.01)$-regular graphs.

To illustrate Theorem~\ref{thm:BPforCol}, and to provide an example of $(d,0.01)$-regular graphs,
we point out that the main result comprises a regular random graph model with a ``planted'' 3-coloring.
Let $G_{n,d,3}$ be the random graph with vertex set $V=\{1,\ldots,3n\}$ obtained as follows.
\begin{enumerate}
\item Let $V_1,V_2,V_3$ be a random partition of $V$ into three pairwise disjoint sets of equal size.
\item For any pair $1<i<j<3$ independently choose a $d$-regular bipartite graph
	with vertex set $V_i\du V_j$ uniformly at random.
\end{enumerate}
For a fixed $d$ we say that $G_{n,d,3}$ has a certain property $\mathcal{P}$
\emph{with high probability} (``w.h.p''),
if the probability that $G_{n,d,3}$ enjoys $\mathcal{P}$ tends to $1$ as $n\ra\infty$.
Concerning $G_{n,d,3}$, Theorem~\ref{thm:BPforCol} implies the following.

\begin{cor}\label{cor:BPforCol}
Suppose that $d\geq d_0$ is fixed.
With high probability a random graph $G=G_{n,d,3}$ has the following property:
with probability $\geq\kappa n^{-1}$ over the coin tosses of the algorithm,
$\texttt{BPCol}(G)$ outputs a proper 3-coloring of $G$.
\end{cor}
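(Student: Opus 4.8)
The plan is to show that w.h.p. the random graph $G_{n,d,3}$ is $(d,0.01)$-regular in the sense of conditions R1 and R2, and then to apply Theorem~\ref{thm:BPforCol} verbatim. By construction $G=G_{n,d,3}$ comes with a distinguished balanced partition $(V_1,V_2,V_3)$, and each of the three bipartite pieces between $V_i$ and $V_j$ is a uniformly random $d$-regular bipartite graph on $n+n$ vertices. So the structural part of R1 -- that every vertex of $V_i$ has exactly $d$ neighbours in each $V_j$ -- holds \emph{deterministically}, not just w.h.p. This already forces the three vectors $\vecone_{V_i}-\vecone_{V_j}$ to be eigenvectors of $A(G)$ with eigenvalue $-d$: for $v\in V_1$, the $v$-th coordinate of $A(G)(\vecone_{V_1}-\vecone_{V_2})$ counts (neighbours of $v$ in $V_1$) minus (neighbours of $v$ in $V_2$) $= 0 - d = -d$, and $(\vecone_{V_1}-\vecone_{V_2})_v = 1$, and similarly for the other color classes; the remaining two eigenvector identities are checked the same way. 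Hence R1 holds for \emph{every} instance in the support of $G_{n,d,3}$.

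The only probabilistic statement needed is therefore R2: if $\xi\perp\vecone_{V_i}$ for $i=1,2,3$, then $\|A(G)\xi\|\le 0.01\, d\,\|\xi\|$. This is a spectral-gap (expansion) estimate for the bipartite pieces. The standard fact here is that a uniformly random $d$-regular bipartite graph on $n+n$ vertices is w.h.p. a near-Ramanujan expander: with $B_{ij}$ denoting the $V_i$-$V_j$ biadjacency matrix, all singular values of $B_{ij}$ other than the trivial one (equal to $d$, with singular vectors proportional to the all-ones vectors on $V_i$ and $V_j$) are $O(\sqrt d)$ w.h.p.; one may invoke the Friedman-type bound (second eigenvalue $2\sqrt{d-1}+o(1)$), or, more cheaply, any polynomial bound of the form $O(\sqrt d\,\mathrm{polylog})$, since we only need the ratio to be below $0.01$ once $d\ge d_0$. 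Taking a union bound over the three pairs $\{i,j\}$, w.h.p. every $B_{ij}$ has this property simultaneously. Now write $A(G)$ in $3\times3$ block form with the $(i,j)$ off-diagonal block equal to $B_{ij}$ (diagonal blocks zero). If $\xi$ is orthogonal to all three $\vecone_{V_i}$, then the restriction of $\xi$ to each $V_i$ is orthogonal to the all-ones vector on $V_i$, i.e. orthogonal to the trivial singular vector of each $B_{ij}$; hence $\|B_{ij}\xi|_{V_j}\| \le \lambda\,\|\xi|_{V_j}\|$ with $\lambda = O(\sqrt d)$. Summing the contributions over the off-diagonal blocks and using Cauchy--Schwarz gives $\|A(G)\xi\| \le C\sqrt d\,\|\xi\|$ for an absolute constant $C$, which is $\le 0.01\, d\,\|\xi\|$ as soon as $d\ge d_0 := (100C)^2$. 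Thus R2 holds w.h.p., and we are free to enlarge $d_0$ to also satisfy the requirement of Theorem~\ref{thm:BPforCol}.

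Combining the two parts: for $d\ge d_0$ fixed, w.h.p. $G_{n,d,3}$ is $(d,0.01)$-regular on $3n$ vertices. For $n$ large enough that $3n\ge n_0(d)$ -- which holds for all sufficiently large $n$, since $n_0(d)$ is a fixed number once $d$ is fixed -- Theorem~\ref{thm:BPforCol} applies and gives that $\texttt{BPCol}(G)$ outputs a proper 3-coloring with probability $\ge \kappa (3n)^{-1} \ge \kappa' n^{-1}$ over its internal coin tosses (absorbing the factor $3$ into the constant). Intersecting the "$(d,0.01)$-regular" event with the "$3n\ge n_0(d)$" event, both of which have probability $1-o(1)$, yields the claim. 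I expect the only real work to be the spectral bound for R2, i.e. citing or reproving the near-Ramanujan property of random regular bipartite graphs and converting the three per-block singular-value bounds into the single operator-norm bound on the complement of the span of $\{\vecone_{V_1},\vecone_{V_2},\vecone_{V_3}\}$; everything else is bookkeeping, since R1 is automatic from $d$-regularity of the planted construction.
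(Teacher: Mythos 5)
Your proposal is correct and follows the same high-level plan as the paper: verify \textbf{R1} deterministically from the exact $d$-regularity of the planted bipartite pieces, establish \textbf{R2} with high probability by bounding the nontrivial spectrum of each $V_i$--$V_j$ block, then apply Theorem~\ref{thm:BPforCol} (absorbing the $3n$-vs-$n$ normalization into the constant). The one substantive difference is the tool you use for \textbf{R2}. The paper does not invoke a near-Ramanujan bound. It instead proves a combinatorial discrepancy estimate (Lemma~\ref{Lemma_disc}), namely $|e_G(S,T)-|S||T|p|\le d^{0.51}\sqrt{|S||T|}$ for each pair $V_i,V_j$ --- which is obtained by first-moment/concentration arguments from the theory of random regular graphs --- and then converts it to the operator-norm bound $\scal{A^{ij}\xi}{\eta}\le d^{0.52}\|\xi\|\|\eta\|$ via the Bilu--Linial ``converse expander mixing lemma'' (Lemma~\ref{Thm_Bilu}), before summing over the three blocks exactly as you do. This route is deliberately cheaper than Friedman-type spectral estimates: a discrepancy bound is elementary to prove, while $O(\sqrt d)$ second-eigenvalue bounds for random regular bipartite graphs are a much deeper theorem. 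Since only $o(d)$ is needed (to beat $0.01d$ once $d\ge d_0$), the looser $d^{0.52}$ suffices, and the two routes land in the same place. One caution about your hedge ``any polynomial bound of the form $O(\sqrt d\,\mathrm{polylog})$'': a polylogarithm in $n$ would not work here, because $d$ is fixed while $n\to\infty$; you need a bound depending on $d$ alone, which is exactly what the discrepancy/Bilu--Linial route (or a genuine near-Ramanujan bound) delivers.
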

\noindent
To prove Corollary~\ref{cor:BPforCol}, we show that w.h.p.\ $G_{n,d,3}$ is $(d,0.01)$-regular,
cf.~Section~\ref{sec:cor_BPforCol}.



\subsection{Related Work} \label{subsec:related}

Alon and Kahale~\cite{AlonKahale97} were the first to employ spectral techniques for 3-coloring sparse random graphs.
They present a spectral heuristic and show that this heuristic finds a 3-coloring
in the so-called ``planted solution model''.
This model is somewhat more difficult to deal with algorithmically than the $G_{n,d,3}$ model that we study in the present work.
For while in the $G_{n,d,3}$-model each vertex $v\in V_i$ has \emph{exactly} $d$ neighbors in each of the
other color classes $V_j\not=V_i$, in the planted solution model of Alon and Kahale
the number of neighbors of $v\in V_i$ in $V_j$ has a Poisson distribution with mean $d$.
In effect, the spectral algorithm in~\cite{AlonKahale97} is more sophisticated than the spectral heuristic
from Section~\ref{Sec_BPSpec}.
In particular, the Alon-Kahale algorithm succeeds on $(d,0.01)$-regular graphs
(and hence on $G_{n,d,3}$ w.h.p.).

There are numerous papers on the performance of message passing algorithms
for constraint satisfaction problems
(e.g., Belief Propagation/Survey Propagation) by authors from the statistical physics community
(cf.~\cite{Weigt,BrMeZe:03,pnas} and the references therein).
While these papers provide rather plausible (and insightful) explanations for the success
of message passing algorithms on problem instances
such as  random graphs $G_{n,p}$ or random $k$-SAT formulae,
the arguments (e.g., the replica or the cavity method) are mathematically non-rigorous.
To the best of our knowledge, no connection between spectral methods and BP
has been established in the physics literature.

Feige, Mossel, and Vilenchik~\cite{WP} showed that the Warning Propagation (WP) algorithm for 3-SAT
converges in polynomial time to a satisfying assignment on a model of
random 3-SAT instances with a planted solution.
Since the messages in WP are additive in nature, and not multiplicative as in BP,
the WP algorithm is conceptually much simpler.
Moreover, on the model studied in~\cite{WP}  a fairly simple combinatorial algorithm
(based on the ``majority vote'' algorithm) is known to succeed.
By contrast, no purely combinatorial algorithm
(that does not rely on spectral methods or semi-definite programming)
is known to 3-color $G_{n,d,3}$ or even arbitrary $(d,0.01)$-regular instances.

A very recent  paper by Yamamoto and Watanabe~\cite{Osamu} deals with a
spectral approach to analyzing BP for the Minimum Bisection problem.
Their work is similar to ours in that they point out that a BP-related algorithm
\texttt{pseudo-bp} emulates spectral methods.
However, a significant difference is that  \texttt{pseudo-bp} is a 
simplified version of BP that is easier to analyze,
whereas in the present work we make a point of analyzing the BP algorithm
for coloring as it is stated in~\cite{Weigt}
(cf.\ Remark~\ref{Rem_Osamu} for more detailed comments).
Nonetheless, an interesting aspect of~\cite{Osamu} certainly is that this paper shows that BP
can be applied to an actual optimization problem,
rather than to the problem of just finding any feasible solution (e.g., a $k$-coloring).

The effectiveness of
message passing algorithms for amplifying local information in order
to decode codes close to channel capacity was recently established
in a number of papers, e.g.~\cite{LMSS98, LMSS01,RSU01}. Our results are
similar in flavor, however the analysis provided here
allows to recover a proper 3-coloring of the entire graph,
whereas in the random LDPC codes setting, message passing allows to
recover only a $1-o(1)$ fraction of the codeword correctly. In
\cite{LMSS01} it is shown that for the erasure channel, all bits may
be recovered correctly using a message passing algorithm, however in
this case the message passing algorithm is of combinatorial nature
(all messages are either $0$ or $1$) and the LDPC code is designed
so that message passing works for it.

\section{The Belief Propagation Algorithm for 3-Coloring}\label{Sec_BPCol}

\begin{figure}
\begin{center}
        \psfrag{v}{$w$}
        \psfrag{w}{$v$}
        \psfrag{Nwohnev}{$\hspace{-3mm}N(v)\setminus\{w\}$}
        \includegraphics[height=3cm]{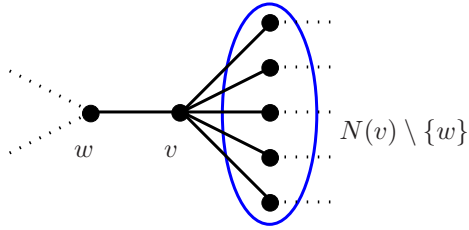}
\end{center}
\caption{the BP equation.}\label{Fig_BP}
\end{figure}

Following~\cite{Weigt}, in this section we will describe the basic ideas behind the BP algorithm.
Since BP is a heuristic based on non-rigorous ideas (mainly from artificial intelligence and/or statistical physics),
the discussion of its main ideas will lack mathematical rigor a bit; in fact, some of the assumptions
that BP is based on (e.g., ``asymptotic independence'') may seem ridiculous at first glance.
Nonetheless, as we pointed out in the introduction, BP makes up for this by being very successful empirically.
At the end of this section, we will state the version of BP that we are going to work with precisely.

The basic strategy behind the BP algorithm for 3-coloring is
to perform a fixed point iteration for certain ``messages'', starting from a suitable initial assignment.
In the case of 3-coloring the messages correspond to the edges of the graph and to the three available colors.
More precisely, to each (undirected) edge $\{v,w\}$ of the
graph $G=(V,E)$ and each color $a\in\{1,2,3\}$ we associate two messages
$\eta_{v\ra w}^a$ from $v$ to $w$ about $a$, and $\eta_{w\ra v}^a$ from $w$ to $v$ about $a$;
in general, we will have $\eta_{v\ra w}^a\not=\eta_{w\ra v}^a$.
Thus, the messages are \emph{directed} objects.
Each of these messages $\eta_{v\ra w}^a$ is a number between $0$ and $1$, which we interpret as the ``probability'' that
vertex $v$ takes the color $a$ in the graph obtained from $G$ by removing $w$.
Here ``probability'' refers to the choice of a random (proper) 3-coloring of $G-w$, while the graph $G$ is considered fixed.
(There is an obvious symmetry issue with this definition, which we will discuss shortly.)

Having introduced the variables $\eta_{v\ra w}^a$, we can set up the
\emph{Belief Propagation Equations} for coloring, which
are the basis of the BP algorithm.
The BP equations reflect a relationship that the probabilities $\eta_{v\ra w}^a$
should (approximately) satisfy under certain assumptions on the graph $G$, namely that
	\begin{equation}\label{eqBP}
	\eta_{v\ra w}^a=\frac{\prod_{u\in N(v)\setminus w}1-\eta_{u\ra v}^a}
		{\sum_{b=1}^3\prod_{u\in N(v)\setminus w}1-\eta_{u\ra v}^b}
	\end{equation}
for all edges $\{v,w\}$ of $G$ and all $a\in\{1,2,3\}$ (cf.~Figure~\ref{Fig_BP}).

The idea behind~(\ref{eqBP}) is that $v$ takes color $a$ in the graph $G-w$ iff none of
its neighbors $u\in N(v)\setminus w$ has color $a$ in $G-v$.
Furthermore, the probability of this event (``no $u$ has color $a$'') is assumed to be (asymptotically)
equal to the \emph{product} $\prod_{u\in N(v)\setminus w}1-\eta_{u\ra v}^a$ of the individual probabilities;
that is, the neighbors $u\not=w$ of $v$ are assumed to be \emph{asymptotically independent}.
Of course, this assumption 
does not hold for arbitrary graphs $G$. 
Finally, the numerator on the r.h.s.\ of~(\ref{eqBP}) is just a normalizing term, which ensures
that $\sum_{a=1}^3\eta_{v\ra w}^a=1$.

The reason why in the above discussion we refer to the probability that $v$ takes color $a$
in the graph $G-w$ \emph{obtained by removing} $w$ rather than just to the probability that $v$ takes color $a$ in $G$
is that in the latter case the neighbors $u\in N(v)$ would \emph{never} be (asymptotically) independent -- not even if $G$ is a tree.
For in this case the presence of $v$ -- more precisely, the existence of the short path $(u,v,u')$ for any two neighbors $u,u'\in N(v)$ of $v$ --
would render the colors within the neighborhood $N(v)$ heavily dependent.
Similarly, if $G$ contains triangles, so that for some vertices $v$ the neighborhood $N(v)$ is not an independent set,
then the independence assumption that is implicit in~(\ref{eqBP}) will be violated.
Nonetheless, if $G$ does not feature (many) short cycles -- say, all the cycles are of length
$\Omega(\log|V|)$ as $|V|\rightarrow\infty$ -- then the BP equations~(\ref{eqBP}) may at least be asymptotically valid.
The random graph model $G_{n,d,3}$ provides an example of graphs (essentially) without such short cycles.

Now, the basic idea behind the BP algorithm is the following.
We start with a ``reasonable'' initial assignment $\eta_{v\ra w}^a(0)$ and use~(\ref{eqBP}) to perform
a fixed point iteration by letting
	\begin{equation}\label{eqBPOp}
	\eta_{v\ra w}^a(l+1)=\frac{\prod_{u\in N(v)\setminus\{w\}}1-\eta^a_{u\ra v}(l)}
		{\sum_{b=1}^3\prod_{u\in N(v)\setminus\{w\}}1-\eta^b_{u\ra v}(l)}
	\end{equation}
for all $\{v,w\}\in E$ and $a\in\{1,2,3\}$.
As soon as some of the values $\eta_{v\ra w}^a(l+1)$ are strongly ``biased'' toward either $0$ or $1$,
we try to exploit this information to obtain a coloring.

Before we state the BP algorithm precisely, we need to discuss
an important issue with the BP equations~(\ref{eqBP}).
Namely, in the case of  3-coloring the set of all 3-colorings is symmetric under permuting the color classes.
Therefore, if we actually define $\eta_{v\ra w}^a$ to equal the probability w.r.t.\ a random 3-coloring of $G-w$,
then trivially $\eta_{v\ra w}^a=\frac13$ for all $a,v,w$.
In fact, this trivial solution is actually a fixed point of~(\ref{eqBPOp}).
Hence, we need to ``break symmetry''.
In particular, it is not a good idea to choose the initial assignment $\eta_{v\ra w}^a(0)=\frac13$ for all $a,v,w$.
Therefore, we do not start from $\eta_{v\ra w}^a(0)=\frac13$,
but we assign to each $\eta_{v\ra w}^a$ the value $\frac13$ plus a small random error $\delta$.
The hope is that this random error will cause the fixed point iterations~(\ref{eqBPOp}) to converge to a non trivial
fixed point (other than $\eta_{v\ra w}^a(0)=\frac13$ for all $a,v,w$), and that this fixed point
yields sufficient information to 3-color $G$.
For instance, if $\chi:V\ra\{1,2,3\}$ is a 3-coloring of $G$, then
	$$\eta_{v\ra w}^a=\left\{\begin{array}{cl}1&\mbox{ if }\chi(v)=a\\0&\mbox{ otherwise}\end{array}\right.
		\qquad(a=1,2,3;\,\{v,w\}\in E)$$
is a fixed point of~(\ref{eqBPOp}), and clearly the 3-coloring $\chi$ can be read out of the above messages easily.
The algorithm \texttt{BPCol} is shown in Fig.~\ref{Fig_BPCol}.
Observe that Step~1 ensures that
	\begin{eqnarray}\label{eqIniI}
	\sum_{a=1}^3\eta_{v\ra w}^a(0)&=&1\qquad\mbox{for all }\{v,w\}\in E.
	\end{eqnarray}


\begin{figure}
\begin{Algo}\label{Alg_Color}\upshape\texttt{BPCol$(G)$}\\\sloppy
\emph{Input:} A graph $G=(V,E)$.\
\emph{Output:} An assignment of colors to the vertices of $G$.
\begin{tabbing}
mmm\=mm\=mm\=mm\=mm\=mm\=mm\=mm\=mm\=\kill
{\algstyle1.}	\> \parbox[t]{38em}{\algstyle
			Let $\delta=\exp(-\log^{3}n)$.\\
			For each $v\in V$ perform the following independently:
			}\\
\> \> \parbox[t]{36em}{\algstyle
			choose $a\in\{1,2,3\}$ uniformly at random and assign $\eta_{v\rightarrow w}^a(0)=\frac13+\delta$
				and $\eta_{v\rightarrow w}^b(0)=\frac13-\frac{\delta}2$ for all $b\in\{1,2,3\}\setminus\{a\}$ and $w\in N(v)$.
			}\\
{\algstyle2.}	\> \parbox[t]{42em}{\algstyle For $l=1,\ldots,l^*=\lceil\log^4 n\rceil$}\\
\> \> \parbox[t]{40em}{\algstyle
		compute
			$\eta_{v\ra w}^a(l+1)$ using~(\ref{eqBPOp}) for all $a$, $v$, and $w$.}\\
{\algstyle3.}	\> \parbox[t]{42em}{\algstyle For each $v\in V$ and each $a\in\{1,2,3\}$
		compute $\beta^a_v=|N(v)|^{-1}\sum_{u\in N(v)}1-\eta^a_{u\ra v}(l^*)$.\\
		Assign to each $v\in V$ a color $a\in\{1,2,3\}$ such that $\beta^a_v=\max_{b\in\{1,2,3\}}\beta^b_v$.
		}
\end{tabbing}
\end{Algo}
\caption{the algorithm \texttt{BPCol}.}\label{Fig_BPCol}
\end{figure}

\begin{remark}\upshape\label{Rem_Success}
Theorem~\ref{thm:BPforCol} states that the probability (over the random decisions in Step~1)
that \texttt{BPCol} yields a proper 3-coloring of its $(d,0.01)$-regular input graph
is $\Omega(n^{-1})$.
This can be boosted to $\Omega(1)$ by means of the following slightly more careful initialization.
Instead of choosing a random $a$ for each $v\in V$ independently, we choose a random permutation
$\sigma$ of $V$ and let
	$W_a=\{\sigma((a-1)n/3+1),\ldots,\sigma(an/3)\}$ ($a=1,2,3$).
Then, for each $v\in W_a$ we set 
$\eta_{v\rightarrow w}^a(0)=\frac13+\delta$ and
$\eta_{v\rightarrow w}^b(0)=\frac13-\frac{\delta}2$ ($b\in\{1,2,3\}\setminus\{a\}$, $w\in N(v)$).
The proof of Proposition~\ref{Prop_initial} below shows that this leads to a success probability of $\Omega(1)$.
Nonetheless, we chose to state \texttt{BPCol} with independent decisions in its initalization,
because this appears more natural (and generic) to us.
\end{remark}

\begin{remark}\upshape
Although in the above discussion of the BP equation~(\ref{eqBPOp}) we referred to ``local'' properties (such as
the absence of short cycles), such local properties will not occur explicitly in our analysis of \texttt{BPCol}.
Indeed, relating \texttt{BPCol} to spectral graph properties, the analysis has a ``global'' character.
Nonetheless, various local conditions (e.g., a relatively small number of short cycles) are implicit in
the ``global'' assumption that the graph $G$ is $(d,0.01)$-regular (cf.~Theorem~\ref{thm:BPforCol}).
For more background on spectral vs.\ combinatorial graph properties cf.\ Chung and Graham~\cite{ChungGraham}. 
\end{remark}

\begin{remark}\upshape
\texttt{BPCol} updates the messages $\eta_{v\ra w}^a$ ``in parallel'', i.e, the messages carry ``time stamps'' (cf.~(\ref{eqBPOp})).
An alternative, equally common option would be ``serial'' updates, e.g., by choosing each time a random pair $v,w$ of adjacent vertices
along with a color $a\in\{1,2,3\}$ and updating $\eta_{v\ra w}^a$ via~(\ref{eqBP}).
\end{remark}

\begin{remark}\upshape
\texttt{BPCol} exploits the result of the fixed point iteration~(\ref{eqBPOp}) in a more straightforward fashion than
the version of BP described in~\cite{Weigt}.
Namely, after performing a fixed point iteration of~(\ref{eqBPOp}),
the algorithm in~\cite{Weigt} does not assign colors to \emph{all} vertices (as Step~3 of \texttt{BPCol} does),
but only to a small fraction (the most decisive ones with respect to the calculated values).
Then, the algorithm performs another fixed point iteration, etc.
The reason is that in the
random graph model considered in~Ê\cite{Weigt} typically the number of proper 3-colorings is exponential in the number of vertices,
whereas $(d,0.01)$-regular graphs have only one 3-coloring (up to permutations of the colors).
\end{remark}

\begin{remark}\upshape\label{Rem_Osamu}
Let us discuss the essential differences between \texttt{BPCol} for $k=2$ and the algorithm
\texttt{pseudo-bp} analyzed in~\cite{Osamu}.
\begin{enumerate}
\item In \texttt{pseudo-bp} the products in~(\ref{eqBP}) are taken over \emph{all} neighbors of $v$, including $w$.
	This apparently minor modification has a major impact on the analysis.
	For including $w$ causes the messages $\eta_{v\ra w}^a$ to be independent of $w$.
	Consequently, in \texttt{pseudo-bp} the messages at time $l$ are $2|V|$-dimensional objects, whereas in the present
	work the dimension is $2k|E|$.
\item \texttt{pseudo-bp} actually works with the logarithms $\ln(\eta_{v\ra w}^a)$ of the messages instead of the original $\eta_{v\ra w}^a$.
	Of course, the equation~(\ref{eqBP}) can be phrased in terms of $\ln(\eta_{v\ra w}^a)$
	as $\ln(\eta_{v\ra w}^a)=F(\ln(\eta_{u\ra v}^a))_{u\in N(v)}$ for some function $F$.
	Now, in \texttt{pseudo-bp} this non-linear function $F$ is replaced by a truncated linear function $\hat F$.
\end{enumerate}
\end{remark}

\section{Proof of Theorem \ref{thm:BPforCol}}\label{Sec_thm:BPforCol}

\subsection{Preliminaries and Notation}

Throughout this section, we let $\eps>0$ be a sufficiently small constant (whose value will be determined implicitly in the course
of the proof).
Moreover, we keep the assumptions from Theorem~\ref{thm:BPforCol}.
Thus, we let $d>d_0$ for a sufficiently large constant $d_0$; in particular, we assume that $d_0>\exp(\eps^{-2})$.
In addition, we assume that $n>n_0$ for some sufficiently large number $n_0=n_0(d)$,
and that $G=(V,E)$ is a $(d,0.01)$-regular graph on $n=|V|$ vertices.
This is reflected by the use of asymptotic notation in the analysis, which always refers to $n$ being sufficiently large.

Furthermore, we let $(V_1,V_2,V_3)$ be a 3-coloring of $G$ with respect to which the conditions {\bf R1} and {\bf R2}
from the definition of $(d,0.01)$-regularity hold.
(Actually a $(d,0.01)$-regular graph has a unique 3-coloring up to permutations of the color classes, but
we will not use this fact.)
The following easy observation will be used frequently.
\begin{lem}\label{Lemma_reg}
Let $i,j\in\{1,2,3\}$, $i\not=j$.
Then in $G$ each vertex $v\in V_i$ has precisely $d$ neighbors in $V_j$.
Consequently, $|N(v)|=2d$.
\end{lem}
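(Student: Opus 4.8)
The plan is to extract the neighbor count directly from the eigenvalue condition \textbf{R1}, by reading off a single coordinate of the eigenvector equation. Fix $v\in V_1$, say, and consider the eigenvector $\xi=\vecone_{V_1}-\vecone_{V_2}$, which by \textbf{R1} satisfies $A(G)\xi=-d\xi$. The $v$-th coordinate of $A(G)\xi$ equals $\sum_{u\sim v}\xi_u = |N(v)\cap V_1| - |N(v)\cap V_2|$, while the $v$-th coordinate of $-d\xi$ equals $-d$ (since $v\in V_1$, $\xi_v=1$). Because $(V_1,V_2,V_3)$ is a proper coloring, $v$ has no neighbor in its own class, so $|N(v)\cap V_1|=0$; hence $|N(v)\cap V_2|=d$. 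Repeating the argument with $\vecone_{V_1}-\vecone_{V_3}$ gives $|N(v)\cap V_3|=d$ as well, and the same computation works verbatim for $v$ in $V_2$ or $V_3$ using the appropriate pair of eigenvectors. This establishes that every $v\in V_i$ has exactly $d$ neighbors in each $V_j$ with $j\neq i$.

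The claim $|N(v)|=2d$ then follows immediately: since $(V_1,V_2,V_3)$ partitions $V$ and $v$ has no neighbors in its own class, $|N(v)| = |N(v)\cap V_j| + |N(v)\cap V_k| = d+d = 2d$, where $\{j,k\}=\{1,2,3\}\setminus\{i\}$.

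There is essentially no obstacle here — the only thing to be slightly careful about is the sign bookkeeping in \textbf{R1} (the eigenvalue is $-d$, not $d$), and the fact that one must invoke that $(V_1,V_2,V_3)$ is a genuine proper $3$-coloring so that the diagonal-type term $|N(v)\cap V_i|$ vanishes; both are given in the hypotheses. I would write this up in three or four lines, treating only the case $v\in V_1$ in detail and remarking that the other two cases are symmetric.
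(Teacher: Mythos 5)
Your proof is correct and is essentially the same argument the paper gives: read off the $v$-th coordinate of the eigenvector equation $A(G)(\vecone_{V_i}-\vecone_{V_j})=-d(\vecone_{V_i}-\vecone_{V_j})$ and use the properness of the coloring to kill the $|N(v)\cap V_i|$ term. The only cosmetic difference is that you make the vanishing of $|N(v)\cap V_i|$ an explicit intermediate step, whereas the paper absorbs it silently into its sum over $N(v)\cap V_j$; the substance is identical.
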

\begin{proof}
Assume w.l.o.g.\ that $i=1$ and $j=2$.
By condition {\bf R1} $\xi=\vecone_{V_i}-\vecone_{V_j}$ is an eigenvector of the adjacency matrix
$A(G)=(a_{vw})_{v,w\in V}$ with eigenvalue $-d$.
Hence, letting $\eta=-d\xi=A(G)\xi$, we have
	$-d=\eta_v=-\sum_{w\in N(v)\cap V_j}a_{vw}=-|N(v)\cap V_j|.$
\end{proof}

Following~\cite{Weigt}, we will denote the elements $(v,w)\in\AAA$ as $v\ra w$.
Furthermore, we shall frequently work with the vector space $\RRR=\RR^3\tensor\RR^{\AAA}$.
Each element $\Gamma\in\RRR$ has a unique representation
	$$\Gamma=\bc{\begin{array}{c}1\\0\\0\end{array}}\tensor\Gamma^1
		+\bc{\begin{array}{c}0\\1\\0\end{array}}\tensor\Gamma^2
		+\bc{\begin{array}{c}0\\0\\1\end{array}}\tensor\Gamma^3$$
with $\Gamma^i=(\Gamma_{v\ra w}^i)_{v\ra w\in\AAA}\in\RR^{\AAA}$ ($i=1,2,3$).
Hence, we shall denote such a vector as $\Gamma=(\Gamma_{v\ra w}^i)_{v\ra w\in\AAA,i\in\{1,2,3\}}$.
Semantically, one can think of $\Gamma_{v\ra w}^i$ as the ``message'' that $v$ sends to $w$ about color $i$.
Note that the messages $\eta_{v\ra w}^a(l)$ defined from Section~\ref{Sec_BPCol} constitute
vectors $\eta(l)=(\eta_{v\ra w}^a(l))_{v\ra w\in\AAA,a\in\{1,2,3\}}\in\RRR$.

We will denote the scalar product of vectors $\xi,\eta$ as $\scal{\xi}{\eta}$.
Moreover, $\|\xi\|=\sqrt{\scal{\xi}{\xi}}$ denotes the $\ell_2$-norm.
In addition, if $M:\RR^{n_1}\ra\RR^{n_2}$ is linear, then we let
$\|M\|=\max_{\xi\in\RR^{n_1},\,\|\xi\|=1}\|M\xi\|$ signify the operator norm of $M$.
Further, $M^T$ denotes the transpose of $M$, i.e., the unique linear operator
$\RR^{n_2}\ra\RR^{n_1}$ such that
$\scal{M\xi}{\eta}=\scal{\xi}{M^T\eta}$ for all $\xi\in\RR^{n_1}$, $\eta\in\RR^{n_2}$.

\subsection{Outline of the Analysis}

In order to analyze \texttt{BPCol}, we shall relate the fixed point iteration of (\ref{eqBPOp}) to the spectral coloring algorithm
from Section~\ref{Sec_BPSpec}.
More precisely, we will approximate the fixed point iteration of the non-linear operation~(\ref{eqBPOp})
by a fixed point iteration for a linear operator. 
One of the key ingredients in the analysis is to show how symmetry is broken (i.e., convergence to the all-$\frac13$ fixed point is avoided).
Indeed, it may not be clear \emph{a priori} that this will happen at all,
because the random bias generated in Step~1 of \texttt{BPCol} is uncorrelated to the planted coloring.
The analysis is based on the following crucial observation (cf.\ Corollary \ref{Cor_Startgame} below):
after a logarithmic number of iterations,
for all $v\in V_i,w\in V_j,i\neq j$ the messages $\eta_{v\ra w}^a$ are dominated by eigenvectors of the linear operator
which we use to approximate~(\ref{eqBPOp}).
Furthermore, these eigenvectors
mirror the coloring $(V_1,V_2,V_3)$ and
are (almost) constant on every color class $V_i$ (with basically $0,1,-1$ values on the different color classes).
Hence, the (random) initial bias gets amplified so that 
the planted 3-coloring can eventually be read out of the messages.

To carry out this analysis precisely, we set
	$$\Delta_{v\ra w}^a(l)=\eta_{v\ra w}^a(l)-\frac13.$$
Moreover, we let $\BB:\RRR\ra\RRR$ denote the (non-linear) operator defined by
	$$(\BB\Gamma)_{v\ra w}^a=-\frac13+\frac{\prod_{u\in N(v)\setminus w}1-\frac32\Gamma^a_{u\ra v}}
				{\sum_{b=1}^3\prod_{u\in N(v)\setminus w}1-\frac32\Gamma^b_{u\ra v}}\qquad(\Gamma\in\RRR).$$
Then~(\ref{eqBPOp}) can be rephrased in terms of the vectors
$\Delta(l)=(\Delta_{v\ra w}^a(l))_{v\ra w\in\AAA,\,a\in\{1,2,3\}}\in\RRR$ as
	\begin{equation}\label{eqBPOpDelta}
	\Delta(l+1)=\BB \Delta(l).
	\end{equation}
We shall see that we can approximate the non-linear operator $\BB$ in~(\ref{eqBPOpDelta}) by the following \emph{linear} operator
$\BB'$ if $\|\Delta(l)\|_{\infty}$ is small;
the operator $\BB'$ maps a vector $\Gamma=(\Gamma_{v\ra w}^a)_{a\in\{1,2,3\},v\to w\in \AAA}\in\RRR$ to the vector
$\BB'(\Gamma)=(\BB'(\Gamma)_{v\ra w}^a)_{a,v\to w}\in\RRR$ with entries
	\begin{equation}\label{eqBprime}
	\BB'(\Gamma)_{v\ra w}^a=-\frac12\sum_{u\in N(v)\setminus w}
		\Gamma_{u\ra v}^a+\frac16\sum_{b=1}^3\sum_{u\in N(v)\setminus w}\Gamma_{u \ra v}^b.
	\end{equation}
Indeed, $\BB':\RRR\ra\RRR$ is just the total derivative of $\BB$ at $0$.

We define a sequence $\Xi(l)$ by letting $\Xi(0)=\Delta(0)$ and
$\Xi(l)={\BB'}^l\Xi(0)$ for $l\geq1$, thinking of
$\Xi(l)$ as a ``linear approximation'' to $\Delta(l)$.
As a first step, we shall simplify the operator $\BB'$ a little.

\begin{lem}\label{Lemma_Bprime}
We have $(\BB'(\Xi(l)))_{v\ra w}^a=-\frac12\sum_{u\in N(v)\setminus w} \Xi_{u\ra v}^a(l)$ for all $l\geq0$,
	$v\ra w\in\AAA$, $a\in\{1,2,3\}$.
\end{lem}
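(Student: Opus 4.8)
The plan is to show that the second summand in the definition~(\ref{eqBprime}) of $\BB'$ contributes nothing when $\BB'$ is applied to $\Xi(l)$; this reduces everything to the invariant
$$\sum_{a=1}^3\Xi_{u\ra v}^a(l)=0\qquad\text{for all }u\ra v\in\AAA\text{ and all }l\geq0.$$
Once this invariant is in hand, for every arc $v\ra w\in\AAA$ and every $a\in\{1,2,3\}$ we have $\sum_{b=1}^3\sum_{u\in N(v)\setminus w}\Xi_{u\ra v}^b(l)=\sum_{u\in N(v)\setminus w}\sum_{b=1}^3\Xi_{u\ra v}^b(l)=0$, so~(\ref{eqBprime}) collapses to $(\BB'(\Xi(l)))_{v\ra w}^a=-\frac12\sum_{u\in N(v)\setminus w}\Xi_{u\ra v}^a(l)$, which is precisely the assertion.

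To establish the invariant I would induct on $l$. For $l=0$ we have $\Xi(0)=\Delta(0)$ with $\Delta_{u\ra v}^a(0)=\eta_{u\ra v}^a(0)-\frac13$, so summing over $a$ and invoking~(\ref{eqIniI}) gives $\sum_{a=1}^3\Delta_{u\ra v}^a(0)=1-1=0$. For the inductive step, assume $\sum_{a=1}^3\Xi_{u\ra v}^a(l)=0$ for every arc $u\ra v$. Using $\Xi(l+1)=\BB'(\Xi(l))$ together with~(\ref{eqBprime}), for any arc $v\ra w$ we compute
$$\sum_{a=1}^3\Xi_{v\ra w}^a(l+1)=-\frac12\sum_{u\in N(v)\setminus w}\sum_{a=1}^3\Xi_{u\ra v}^a(l)+\frac16\sum_{a=1}^3\sum_{b=1}^3\sum_{u\in N(v)\setminus w}\Xi_{u\ra v}^b(l).$$
In the first term the inner sum vanishes by hypothesis; in the second term the $\sum_{a}$ merely produces a factor $3$, and the remaining $\sum_{b}\sum_{u}\Xi_{u\ra v}^b(l)$ vanishes again by hypothesis after exchanging the order of summation. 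Hence $\sum_{a=1}^3\Xi_{v\ra w}^a(l+1)=0$, closing the induction.

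I do not expect any genuine obstacle here: the content of the lemma is simply that the linear operator $\BB'$ leaves invariant the subspace $\{\Gamma\in\RRR:\sum_{a=1}^3\Gamma_{u\ra v}^a=0\ \forall\, u\ra v\in\AAA\}$ in which the initial perturbation $\Delta(0)$ lies, so that the ``centering'' term $\frac16\sum_b$ in~(\ref{eqBprime}) is inert along the trajectory $\Xi(l)$. The only point requiring a little care is to phrase the argument as an induction so that the invariant is available at every time step rather than for a single $l$. Isolating this simplified form of $\BB'$ is worthwhile because it is what will subsequently allow $\BB'$ restricted to the relevant subspace to be recognized as (a scalar multiple of) a tensor power of the adjacency matrix $A(G)$, which is the bridge to the spectral heuristic of Section~\ref{Sec_BPSpec}.
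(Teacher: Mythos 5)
Your proof is correct and follows essentially the same route as the paper's: establish by induction that $\sum_{a=1}^3\Xi_{u\ra v}^a(l)=0$ for all arcs and all $l\geq0$ (base case from~(\ref{eqIniI}), inductive step from~(\ref{eqBprime})), then observe that this kills the centering term in~(\ref{eqBprime}). You simply spell out the inductive step, which the paper leaves implicit.
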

\begin{proof}
Step~1 of \texttt{BPCol} ensures that the initial vector satisfies
	$$\sum_{b=1}^3 \Xi_{u \ra v}^b(0) =\sum_{b=1}^3 \Delta_{u \ra v}^b(0) = 0\qquad\mbox{for all }\{u,v\}\in E\quad(\mbox{cf.~(\ref{eqIniI})}).$$
Therefore, by induction and by the definition~(\ref{eqBprime}) of $\BB'$ we see that
	$\sum_{b=1}^3 \Xi_{u \ra v}^b(l)=0$ for all $l\geq0$.
Consequently, $\sum_{b=1}^3\sum_{u\in N(v)\setminus w}\Xi_{u \ra v}^b(l)=0$ for all $l\geq0$,
i.e., the second summand on the r.h.s.\ of~(\ref{eqBprime}) vanishes.
\end{proof}
\noindent
Due to Lemma~\ref{Lemma_Bprime}, we may just replace $\BB'$ by the simpler linear
operator $\LL:\RRR\ra\RRR$ defined by
	\begin{equation}\label{eqDefL}
	(\LL\Gamma)_{v\ra w}^a=-\frac12\sum_{u\in N(v)\setminus w}\Gamma_{u\ra v}^a\qquad(v\ra w\in\AAA,a\in\{1,2,3\}),
	\end{equation}
which satisfies
	\begin{equation}\label{eqXiLL}
	\Xi(l)=\LL^l\Xi(0)=\LL^l\Delta(0).
	\end{equation}
We also note for future reference that
	\begin{equation}\label{eqXiLLsum}
	\sum_{a=1}^3\Xi_{v\ra w}^a(l)=0\qquad\mbox{for all }v\ra w\in\AAA,\,l\geq0,
	\end{equation}
because~(\ref{eqIniI}) entails that (\ref{eqXiLLsum}) is true for $l=0$, whence the definition~(\ref{eqDefL}) of $\LL$ shows
that~(\ref{eqXiLLsum}) holds for all $l>0$.

In order to prove Theorem~\ref{thm:BPforCol}, we shall first analyze the sequence $\Xi(l)$ and then bound
the error $\|\Xi(l)-\Delta(l)\|_\infty$ resulting from linearization.
To study the sequence $\Xi(l)$, we investigate the dominant eigenvalues of $\LL$ and their corresponding eigenvectors.
More precisely, we shall see that our assumption on the spectrum of the adjacency matrix $A(G)$
implies that the dominant eigenvectors of $\LL$ mirror a 3-coloring of $G$.
We defer the proof of the following proposition to Section~\ref{Sec_dominant}.

\begin{proposition}\label{Prop_dominant}
Let $e_{ij}^a\in\RRR$ be the vector with entries
	\begin{eqnarray*}
	(e_{ij}^a)_{v\ra w}^b&=&\left\{
			\begin{array}{cl}
			1&\mbox{ if $b=a$, $v\in V_i$, and $w\in N(v)\cap V_j$},\\
			0&\mbox{ otherwise}
			\end{array}\ (v\ra w\in\AAA,a,b,i,j\in\{1,2,3\},i\not=j).
			\right.
	\end{eqnarray*}
Moreover, let $\EE$ be the space spanned by the 18 vectors $e_{ij}^a$ ($a,i,j\in\{1,2,3\},i\not=j$).
Then $\LL$ operates on $\EE$ as follows.
\begin{description}
\item[S1.] There are precisely six linearly independent eigenvectors $\{\zeta_{2}^a ,\zeta_{3}^a:a=1,2,3\}$ with eigenvalue
	$\lambda=\frac{d}4+\frac14\sqrt{d^2-8d+4}$, which satisfy
	\begin{equation}\label{eqzeta}
	\|\zeta_{2}^a-e_{12}^a-e_{13}^a+e_{21}^a+e_{23}^a\|_\infty\leq100d^{-1},\quad
		\|\zeta_{3}^a-e_{12}^a-e_{13}^a+e_{31}^a+e_{32}^a\|_\infty\leq100d^{-1}.
	\end{equation}
	These eigenvectors are symmetric with respect to the colors $a=1,2,3$, i.e., for any two distinct $a,b\in\{1,2,3\}$ and all
		$v\ra w\in\AAA$ we have
	\begin{equation}\label{eqzetasymm}
	(\zeta_j^a)_{v\ra w}^a=(\zeta_j^b)_{v\ra w}^b\mbox{, and }(\zeta_j^a)_{v\ra w}^b=0.
	\end{equation}
	In addition,
		\begin{equation}\label{eqzetanorm}
		\|\zeta_2^1\|=\|\zeta_j^a\|\mbox{ for all }j\in\{2,3\},a\in\{1,2,3\}.
		\end{equation}
\item[S2.] The three vectors $e^a=\sum_{i\not=j}e_{ij}^a$ with $a=1,2,3$ are eigenvectors
	with eigenvalue $\frac12-d$.
\item[S3.] For all $\xi\in\EE$ such that
	$\xi\perp\{e^a,\zeta_j^a:a=1,2,3,j=2,3\}$ we have
		$\|\LL\xi\|\leq\frac12\|\xi\|$.
\item[S4.] Furthermore, $\LL\EE\subset\EE$ and $\LL^T\EE\subset\EE$.
\end{description}
Finally, we have
\begin{description}
\item[S5.] $\|\LL^2\xi\|\leq0.01d^2\|\xi\|\quad\mbox{for all }\xi\perp\EE$.
\end{description}
\end{proposition}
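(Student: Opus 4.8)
The plan is to reduce everything to the analysis of a single $6\times6$ matrix, exploiting that $\LL$ acts identically on each of the three colour coordinates and that the combinatorial structure of the planted colouring is $S_{3}$-symmetric. First, using Lemma~\ref{Lemma_reg} (each $v\in V_{i}$ has exactly $d$ neighbours in $V_{j}$, and $|N(v)|=2d$) one computes directly that $\LL e_{ij}^{a}=-\tfrac{d-1}{2}e_{ji}^{a}-\tfrac d2 e_{jk}^{a}$ and $\LL^{T}e_{ij}^{a}=-\tfrac{d-1}{2}e_{ji}^{a}-\tfrac d2 e_{ki}^{a}$, where $k$ is the third colour; in particular $\LL\EE\subseteq\EE$ and $\LL^{T}\EE\subseteq\EE$, which is {\bf S4}. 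Moreover the $18$ vectors $e_{ij}^{a}$ have pairwise disjoint supports and common norm $\|e_{ij}^{a}\|^{2}=|V_{i}|\,d=nd/3$, so $\EE$ is, up to a global scalar, an orthogonal direct sum of three isometric copies of $\RR^{6}$, on each of which $\LL$ is represented by one and the same matrix $M=-\tfrac12\bigl((d-1)R+dS\bigr)$, where $R$ permutes the six basis vectors by $e_{ij}\mapsto e_{ji}$ and $S$ by $e_{ij}\mapsto e_{jk}$.

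One checks that $R^{2}=S^{3}=(RS)^{2}=\mathrm{id}$ and that $\langle R,S\rangle$ acts transitively on the six vectors $e_{ij}$; hence $\langle R,S\rangle\cong S_{3}$ and $\RR^{6}$ is its regular representation, which decomposes orthogonally as $\RR\vecone\oplus\Sigma\oplus(W_{1}\oplus W_{2})$ (trivial, sign, and two copies of the $2$-dimensional standard representation), with $M$ preserving each summand. On the trivial line $M\vecone_{\AAA}=-\tfrac12(2d-1)\vecone_{\AAA}$, i.e.\ $e^{a}=\sum_{i\neq j}e_{ij}^{a}=\vecone_{\AAA}$ (in colour $a$) is an eigenvector of $\LL$ with eigenvalue $\tfrac12-d$: this is {\bf S2}. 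On $\Sigma$, $M$ acts by the scalar $-\tfrac12\bigl((d-1)(-1)+d\bigr)=-\tfrac12$. On $W_{1}\cong W_{2}$, $M$ acts by the same $2\times2$ matrix $\widetilde M=-\tfrac12\bigl((d-1)\rho_{\mathrm{st}}(R)+d\rho_{\mathrm{st}}(S)\bigr)$, with $\rho_{\mathrm{st}}(R)$ a reflection and $\rho_{\mathrm{st}}(S)$ a rotation by $120^{\circ}$; here $\operatorname{tr}\widetilde M=d/2$ and $\det\widetilde M=(2d-1)/4$, so the eigenvalues are $\lambda_{\pm}=\tfrac d4\pm\tfrac14\sqrt{d^{2}-8d+4}$. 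Since $\lambda_{+}\neq\lambda_{-}$ for $d\ge d_{0}$, the $\lambda_{+}$-eigenspace of $M$ is exactly one line in each of $W_{1},W_{2}$, hence two-dimensional per colour copy and $6$-dimensional in $\EE$; this gives the ``precisely six'' part of {\bf S1}.

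To exhibit these six eigenvectors explicitly and verify (\ref{eqzeta})--(\ref{eqzetanorm}) I would use the ansatz $\gamma_{v\ra w}=2\lambda x_{v}+x_{w}$: a short computation shows that if $A(G)x=\mu x$ then $\LL\gamma=\lambda\gamma$ provided $4\lambda^{2}+2\mu\lambda+(2d-1)=0$, and by {\bf R1} the vectors $x=\vecone_{V_{1}}-\vecone_{V_{2}}$ and $x=\vecone_{V_{1}}-\vecone_{V_{3}}$ are $(-d)$-eigenvectors of $A(G)$ that are constant on colour classes, so the corresponding (linearly independent) $\gamma$'s lie in $\EE$ and, taking the root $\lambda=\lambda_{+}$, are eigenvectors with eigenvalue $\lambda_{+}$. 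Defining $\zeta_{2}^{a},\zeta_{3}^{a}$ as $(2\lambda)^{-1}$ times these two vectors placed in colour slot $a$, the bound $\|\zeta_{j}^{a}-\dots\|_{\infty}\le100d^{-1}$ in (\ref{eqzeta}) follows from $\lambda=\lambda_{+}\ge d/4$ (valid for $d\ge d_{0}$); the colour symmetry (\ref{eqzetasymm}) is immediate because $\LL$ is colour-diagonal; and (\ref{eqzetanorm}) holds since the three differences $\vecone_{V_{i}}-\vecone_{V_{j}}$ are permuted by the $S_{3}$-action while the construction is $S_{3}$-equivariant.

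The heart of the matter is {\bf S3}. Because $\LL$ preserves the isotypic summands, for $\xi\in\EE$ orthogonal to all $e^{a}$ and $\zeta_{j}^{a}$ one splits $\xi=\xi_{\Sigma}+\xi_{\mathrm{st}}$ orthogonally, with $\xi_{\Sigma}$ in the sign part and $\xi_{\mathrm{st}}$ in the (per-colour) orthogonal complement, inside $W_{1}\oplus W_{2}$, of the $\lambda_{+}$-eigenlines; then $\|\LL\xi\|^{2}=\tfrac14\|\xi_{\Sigma}\|^{2}+\|\widetilde M\xi_{\mathrm{st}}\|^{2}$, so {\bf S3} reduces to a single $2\times2$ estimate: once its dominant eigendirection (the $\zeta$'s) is removed, $\widetilde M$ contracts by $\tfrac12$, which I would verify by an explicit computation using $\lambda_{+}+\lambda_{-}=d/2$ and $\lambda_{+}\lambda_{-}=(2d-1)/4$. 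I expect this step to be the main obstacle, because $\widetilde M$ is far from normal (a rotation plus a reflection) and carries the spurious ``small'' eigenvalue $\lambda_{-}\approx1$, so this is exactly where one must argue carefully that $\LL$ genuinely contracts away from the dominant directions $\{e^{a},\zeta_{j}^{a}\}$ rather than merely having them as eigenvectors. Finally {\bf S5} is of a different flavour: for $\xi\perp\EE$ one passes to vertex space through the source/target projections $\pi_{s},\pi_{t}$ (which then satisfy $\pi_{s}\xi,\pi_{t}\xi\perp\vecone_{V_{i}}$ for every $i$), uses the standard identity expressing the two-step non-backtracking operator $\LL^{2}$ through $A(G)^{2}$ and local degree terms, and invokes {\bf R2}: the $A(G)^{2}$-contribution is bounded by $(\eps d)^{2}$ and the local terms by $O(d)$, so $\|\LL^{2}\xi\|\le\bigl(O(\eps^{2})+O(d^{-1})\bigr)d^{2}\|\xi\|\le0.01\,d^{2}\|\xi\|$ once $\eps\le0.01$ and $d\ge d_{0}$.
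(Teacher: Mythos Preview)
Your representation-theoretic route to {\bf S1}, {\bf S2}, {\bf S4} is correct and genuinely different from the paper's. The paper writes $\LL=-\tfrac12\,\id\otimes(\MM-\KK)$, shows (Lemma~\ref{Lemma_MMT}) that $\MM,\MM^{T},\KK,\KK^{T}$ preserve $\EE_{0}=\mathrm{span}\{e_{ij}\}$, then writes out the explicit $6\times6$ matrix of $\MM-\KK$ on $\EE_{0}$ and simply asserts (Lemma~\ref{Lemma_eigM}, ``a very tedious direct computation'') that it is diagonalisable with the stated eigenvalues and approximate eigenvectors. Your $S_{3}$-decomposition explains \emph{why} the spectrum is what it is (trivial $\to\tfrac12-d$, sign $\to-\tfrac12$, two copies of the standard representation $\to\lambda_{\pm}$), and your ansatz $\gamma_{v\to w}=2\lambda x_{v}+x_{w}$ makes the link between the $\lambda_{+}$-eigenvectors of $\LL$ and the $(-d)$-eigenvectors of $A(G)$ from {\bf R1} completely transparent---more informative than the paper's brute-force diagonalisation. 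For {\bf S5} your outline is essentially the paper's Lemma~\ref{Lemma_boundEperp}: there one decomposes $\EE_{0}^{\perp}=T\oplus U$ with $U=\ker\MM$, observes $\MM T\subset S$, and then identifies $\MM^{2}$ on $T$ (via the source map) with $A(G)$ acting on vertex vectors orthogonal to all $\vecone_{V_{i}}$, where {\bf R2} applies.

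There is, however, a real gap at {\bf S3}, and your instinct that this is ``the main obstacle'' is exactly right---but your proposed fix will not go through. Because $\widetilde M$ is non-normal, the orthogonal complement of its $\lambda_{+}$-line is not $\widetilde M$-invariant, and on that line $\|\widetilde M\|$ is of order $d$, not $\tfrac12$. Concretely, take $\xi=(1,-1,1,-1,0,0)$ in the basis $e_{12},e_{13},e_{21},e_{23},e_{31},e_{32}$: it is orthogonal to $\vecone$, to the sign vector, and to the exact $\zeta_{2}'$ coming from your ansatz, and orthogonal to $\zeta_{3}'$ up to an $O(d^{-1})$ correction; yet $\|(\MM-\KK)\xi\|^{2}=12d^{2}-12d+4$ while $\|\xi\|^{2}=4$, so $\|\LL\xi\|\sim(\sqrt{3}/2)\,d\,\|\xi\|$. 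In fact {\bf S3} as written is false; the paper's own ``proof'' is just the sentence ``all other eigenvalues of $\LL$ restricted to $\EE$ are $\leq\tfrac12$ in absolute value'', which overlooks that $\lambda_{-}=\tfrac{d}{4}-\tfrac14\sqrt{d^{2}-8d+4}\to1$, and in any case an eigenvalue bound does not give an operator-norm bound on an orthogonal complement when the operator is non-normal. What \emph{is} true, and what suffices for Corollary~\ref{Cor_Startgame}, is that on $\EE$ the eigenvalues other than $\tfrac12-d$ and $\lambda_{+}$ are $-\tfrac12$ and $\lambda_{-}=O(1)$, both negligible against $\lambda_{+}\asymp d/2$; decomposing $\Xi(0)$ in the (non-orthogonal) eigenbasis of $\LL|_{\EE}$ then shows the $\zeta$-components dominate as $l\to\infty$. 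Your framework makes this repair easy, since the $S_{3}$-analysis already hands you the full eigendecomposition of $M$; you should argue with eigencomponents rather than orthogonal projections.
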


The eigenvectors that we are mostly interested in are $\zeta_2^a,\zeta_3^a$ ($a=1,2,3$)
as~(\ref{eqzeta}) shows that these vectors represent the coloring $(V_1,V_2,V_3)$ completely.
As a next step, we shall show that $\Xi(l)$ can  be approximated well by a linear combination
of the vectors $\zeta_2^a,\zeta_3^a$, provided that $l$ is sufficiently large.
To this end, let
	\begin{equation}\label{eqxia}
	x_i^a=\sqrt{n}\cdot\frac{\scal{\Delta(0)}{\zeta_i^a}}{\|\Delta(0)\|\cdot\|\zeta_i^a\|}\qquad(i=2,3,\ a=1,2,3)
	\end{equation}
be the projection of the initial vector $\Delta(0)=\Xi(0)$ onto the eigenvector $\zeta_i^a$;
we shall see below that
the normalization in~(\ref{eqxia}) ensures that $x_i^a$ is bounded away from $0$.
Furthermore, recalling from~(\ref{eqzetanorm}) that $\|\zeta_i^a\|=\|\zeta_2^1\|$ for all $i,a$, we set
	\begin{equation}\label{eqnu}
	\nu=\frac{\|\Delta(0)\|}{\sqrt{n}\|\zeta_2^1\|}.
	\end{equation}

\begin{cor}\label{Cor_Startgame}
Suppose that $l\geq L_1= 2\lceil\log n\rceil$, and that $\Xi(0)\perp e^a$ for $a=1,2,3$.
Then
	$$\Xi^a_{v\ra w}(l)=\nu\lambda^{l}\sum_{a=1}^3\sum_{i=2}^3(x_i^a+o(1)){\zeta_i^a}_{v\ra w}
		\quad\mbox{for all $a\in\{1,2,3\}$ and $\{v,w\}\in E$.}$$
\end{cor}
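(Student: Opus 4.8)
The plan is to start from the identity $\Xi(l)=\LL^l\Delta(0)$ in~(\ref{eqXiLL}) and propagate the spectral decomposition of Proposition~\ref{Prop_dominant} through the power $\LL^l$, arguing that after $l\ge L_1$ steps every contribution to $\Xi(l)$ except the one lying in the span of the $\zeta_i^a$ is negligible next to $\nu\lambda^l$. First I would split $\Delta(0)=\Delta^{\EE}+\Delta^{\perp}$ with $\Delta^{\EE}\in\EE$ and $\Delta^{\perp}\perp\EE$; since $\LL^T\EE\subseteq\EE$ by {\bf S4}, the operator $\LL$ leaves $\EE^{\perp}$ invariant, so $\Xi(l)=\LL^l\Delta^{\EE}+\LL^l\Delta^{\perp}$ with the two summands in $\EE$ and $\EE^{\perp}$ respectively, and the two pieces may then be handled separately.

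For the $\EE^{\perp}$-part, {\bf S5} gives $\|\LL^l\Delta^{\perp}\|\le(0.1d)^{l}\|\Delta(0)\|$ (for even $l$; odd $l$ costs a harmless extra factor $\|\LL\|=O(d)$). Since $\lambda=\tfrac d4+\tfrac14\sqrt{d^2-8d+4}\ge\tfrac d4$ (as $d\ge d_0$) and, by~(\ref{eqnu}), $\nu\|\zeta_2^1\|=\|\Delta(0)\|/\sqrt n$ with $\|\zeta_2^1\|=\Theta(\sqrt{dn})$, one gets $\|\LL^l\Delta^{\perp}\|/(\nu\lambda^l)\le(0.1d/\lambda)^l\sqrt n\,\|\zeta_2^1\|=O(n\sqrt d)\,(0.4)^{l}$, which is $o(1)$ once $l\ge L_1=2\lceil\log n\rceil$ because $\log(0.4)<-\tfrac12$. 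So this part has norm $o(\nu\lambda^l)$.

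Inside $\EE$ I would use that $\LL$ preserves each colour block $\EE_a$ (the right-hand side of~(\ref{eqDefL}) sees only $\Gamma^a$). The $e^a$-directions must be eliminated, because $|\tfrac12-d|>\lambda$, so any surviving $e^a$-component of $\Delta^{\EE}$ would eventually swamp the $\zeta_i^a$; this is exactly the role of the hypothesis $\Xi(0)\perp e^a$. Indeed a direct computation (using Lemma~\ref{Lemma_reg}) shows $\LL^Te^a=(\tfrac12-d)e^a=\LL e^a$ (the latter by {\bf S2}), so the three pairwise-orthogonal vectors $e^a$ span a subspace invariant under both $\LL$ and $\LL^T$; hence the spectral projection onto it is the orthogonal one, and $\Xi(0)\perp e^a$ together with $\Delta^{\perp}\perp e^a$ forces $\Delta^{\EE}$, and therefore $\Xi(l)$, to have no $e^a$-component. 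For the remaining directions, {\bf S3} bounds $\LL$ by $\tfrac12$ on the orthogonal complement of $\{e^a,\zeta_2^a,\zeta_3^a\}$ inside $\EE_a$, which (passing to the induced operator on that complement) forces the other three eigenvalues of $\LL|_{\EE_a}$ to have modulus $\le\tfrac12$; writing $\Delta^{\EE}=\sum_{a,i}\alpha_i^a\zeta_i^a+R$ with $R$ in the span of the associated subdominant (generalised) eigenvectors, I obtain $\|\LL^lR\|\le\mathrm{poly}(l)\,2^{-l}\|\Delta(0)\|=o(\nu\lambda^l)$.

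At this point $\Xi(l)=\lambda^l\sum_{a,i}\alpha_i^a\zeta_i^a+(\text{error of norm }o(\nu\lambda^l))$, and the last task—which I expect to be the main obstacle—is to show $\alpha_i^a=\nu x_i^a+o(\nu)$. The subtlety is that $\LL$ is not self-adjoint and its dominant eigenspace $F=\bigoplus_aF_a$, with $F_a=\mathrm{span}(\zeta_2^a,\zeta_3^a)$, is not orthogonal to the rest of $\EE$ (already $\zeta_2^a\not\perp\zeta_3^a$), so the projection onto $F$ that produces the leading term is an \emph{oblique} one, and reading off $\alpha_i^a$ requires the left eigenvectors of $\LL$ for eigenvalue $\lambda$. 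My plan here is to compute those left eigenvectors—they have essentially the shape of~(\ref{eqzeta}) but are constant along the target of each edge rather than along its source—pair them against $\Delta(0)$, and exploit that Step~1 of \texttt{BPCol} makes $\Delta^a_{v\to w}(0)$ independent of $w$; combined with the symmetry and equal-norm relations~(\ref{eqzetasymm})--(\ref{eqzetanorm}) and the normalisations in~(\ref{eqxia}) and~(\ref{eqnu}), this should express each $\alpha_i^a$ through the overlaps $\scal{\Delta(0)}{\zeta_i^a}$ and yield the stated formula (this matching is where the argument is most delicate). By contrast, the rate estimates used above are routine: since $\lambda=\Theta(d)$ is large, every error term decays geometrically at a rate bounded away from $1$, so a logarithmic number of iterations already dominates any polynomial-in-$n$ prefactor—uniformly over $L_1\le l\le\mathrm{polylog}(n)$, which is the range \texttt{BPCol} actually uses.
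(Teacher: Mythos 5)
Your proposal takes a genuinely different and more careful route than the paper's own short proof.  The paper writes the orthogonal decomposition $\Xi(l)=\xi(l)+\sum_{a,i}z_i^a(l)\zeta_i^a$ with $\xi(l)\perp\{e^a,\zeta_i^a\}$, and then simply asserts $z_i^a(l)=\lambda^lz_i^a(0)$ and $\xi(l)=\LL^l\xi(0)$, bounding the latter by iterating \textbf{S3}/\textbf{S5}.  Both assertions silently assume that the orthogonal complement of $\{e^a,\zeta_i^a\}$ inside $\EE$ is $\LL$-invariant.  You are right to question this: $\LL|_\EE$ is not self-adjoint (indeed, as you note, $\zeta_2^a\not\perp\zeta_3^a$), so the decomposition that evolves multiplicatively under $\LL^l$ is the \emph{oblique} spectral one, read off via the left eigenvectors $\KK\zeta_i^a$, not the orthogonal one that defines $x_i^a$ in~(\ref{eqxia}).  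Spotting this is the genuine contribution of your proposal, and it is precisely the point the paper passes over.

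Two things, however, stop the proposal short of a proof.  First, the central step---showing that the oblique coefficients $\alpha_i^a$ equal $\nu x_i^a+o(\nu)$---is exactly what the corollary claims, and you leave it as a sketch.  It is also not a formality: pairing $\Delta(0)$ against the dual basis built from the $\KK\zeta_i^a$ yields a fixed \emph{non-scalar} linear combination of the overlaps $\scal{\Delta(0)}{\zeta_i^a}$, not the single overlap that appears in~(\ref{eqxia}); so this step needs a real computation (and possibly a redefinition of $x_i^a$), not routine bookkeeping.  Second, your inference that \textbf{S3} ``forces the other three eigenvalues of $\LL|_{\EE_a}$ to have modulus $\le\tfrac12$'' uses normality: for a non-normal operator, an eigenvector with $|\mu|>\tfrac12$ need not lie in the orthogonal complement of $\{e^a,\zeta_j^a\}$, so a norm bound on that complement does not bound the restricted spectrum.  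In fact this is not merely a logical worry: for $\gamma=e_{12}^a-e_{13}^a$, which is orthogonal to $e^a$ and, up to $O(1/d)$, to $\zeta_2^a,\zeta_3^a$, the definition~(\ref{eqDefL}) gives $\|\LL\gamma\|=\Theta(d)\|\gamma\|$; so the passage from \textbf{S3} to the claimed $\mathrm{poly}(l)\,2^{-l}$ decay of the remainder $R$ has to be reworked, and the choice $L_1=2\lceil\log n\rceil$ may need to change accordingly.
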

\begin{proof}
Since by assumption the initial vector $\Xi(0)$ is perpendicular to $e^a$ for $a=1,2,3$ and
because $e^1,e^2,e^3$ are eigenvectors of $\LL$ by {\bf S2}, we have $\Xi(l)\perp e^a$.
Therefore, we can decompose $\Xi(l)$ as
	\begin{equation}\label{eqStartDecomp}
	\Xi(l)=\xi(l)+\sum_{a=1}^3\sum_{i=2}^3z_i^a(l)\zeta_i^a,\mbox{ where $\xi(l)\perp\{e^a,\zeta_i^a:i\in\{2,3\},a\in\{1,2,3\}\}$.}
	\end{equation}
Thus, to prove the corollary we need to compute the numbers $z_i^a(l)$ and bound $\|\xi(l)\|_\infty$.

With respect to the coefficients $z_i^a(l)$, note that $z_i^a(l)=\lambda^lz_i^a(0)$, because by {\bf S1} $\zeta_i^a$ is an eigenvector with
eigenvalue $\lambda$.
Moreover, $z_i^a(0)=\|\zeta_i^a\|^{-2}\scal{\Xi(0)}{\zeta_i^a}$.
Hence, (\ref{eqxia}) and~(\ref{eqnu}) yield $z_i^a(0)=x_i^a\cdot\nu$.
Thus,
	\begin{equation}\label{eqzia}
	z_i^a(l)=\lambda^l\nu\cdot x_i^a.
	\end{equation}

To bound the ``error term'' $\|\xi(l)\|_\infty$, we note that {\bf S3}--{\bf S5} entail
	\begin{equation}\label{eqLLsquared}
	\|\LL^2\gamma\|\leq0.01d^2\|\gamma\|\leq (0.3\lambda)^2\|\gamma\|
		\mbox{ for all }\gamma\perp\{e^a,\zeta_i^a:i\in\{2,3\},a\in\{1,2,3\}\},
	\end{equation}
provided that $d\geq d_0$ for a large enough constant $d_0>0$.
Let $k=\lfloor l/2\rfloor$.
Since $\xi(2k)=\LL^{2k}\xi(0)$, (\ref{eqLLsquared}) implies that
	\begin{eqnarray}\label{eqLLsquaredA}
	\|\xi(2k)\|&=&\|\LL^{2k}\xi(0)\|\leq(0.3\lambda)^{2k}\|\xi(0)\|\leq(0.3\lambda)^{2k}\|\Xi(0)\|.
	\end{eqnarray}
Moreover, as $l\leq2k+1$ and $\|\LL\|\leq d-\frac12$ by Proposition~\ref{Prop_dominant}, (\ref{eqLLsquaredA}) yields
	\begin{equation}\label{eqsmallxi}
	\|\xi(l)\|_{\infty}\leq\|\xi(l)\|\leq d\|\xi(2k)\|\leq d(0.3\lambda)^{l}\|\Xi(0)\|.
	\end{equation}
Finally, if $l\geq L_1$, then $d(0.3\lambda)^{l}\|\Xi(0)\|=o(\lambda^l\nu)$.
Thus, the assertion follows from~(\ref{eqStartDecomp}), (\ref{eqzia}), and~(\ref{eqsmallxi}).
\end{proof}

While in the initial vector $\Delta(0)=\Xi(0)$ the messages are completely uncorrelated with the coloring $(V_1,V_2,V_3)$,
Corollary~\ref{Cor_Startgame} entails that the dominant contribution to $\Xi(L_1)$ comes from the eigenvectors $\zeta_i^a$,
which represent that coloring.
This implies that all vertices $v$ in each class $V_a$ send essentially the same messages to all other
vertices $w\in V_b$ about each of the colors $1,2,3$, and these messages are solely determined by the initial projections $x_i^a$
of $\Delta(0)$ onto $\zeta_i^a$.
Hence, after $L_1$ iterations the messages are essentially coherent and strongly correlated to the planted coloring.
Thus, as a next step we analyze the distribution of the projections $x_i^a$.
To simplify the expression resulting from Corollary~\ref{Cor_Startgame}, let
	\begin{equation}\label{eqyia}
	y_1^a=x_2^a+x_3^a,\ y_2^a=-x_2^a,\ \mbox{and }y_3^a=-x_3^a.
	\end{equation}
Then (\ref{eqzeta}) and Corollary~\ref{Cor_Startgame} entail that
for all $v\in V_i$, all $w\in N(v)$, and $l\geq L_1$ we have 
	$$\Xi_{v\ra w}^a(l)=(y_i^a+o(1))\cdot\nu\lambda^l.$$
Of course, the numbers $y_i^a$ only depend on the initial vector $\Delta(0)$.
Therefore, we say that \emph{$\Delta(0)$ is feasible} if
\begin{description}
\item[F1.] $\Delta(0)\perp e^a$ for $a=1,2,3$, and
\item[F2.]	for any pair $a,b\in\{1,2,3\}$, $a\not=b$ we have
	\begin{equation}\label{eq_initial}
	|y_a^a-1|<\exp(-1/\epsilon)\mbox{ and }|y_a^b+0.5|<\exp(-1/\epsilon).
	\end{equation}
\end{description}

\begin{proposition}\label{Prop_initial}
With probability $\Omega(n^{-1})$ over the random bits used in Step~1 of \texttt{BPCol}
$\Delta(0)$ is feasible.
\end{proposition}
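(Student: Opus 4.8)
The plan is to isolate a ``good event'' $\mathcal G$, depending only on the coins of Step~1, that forces $\Delta(0)$ to be feasible, and then to show $\Pr[\mathcal G]=\Omega(n^{-1})$ by a local limit estimate around a suitably \emph{shifted} target. Throughout, let $a(v)\in\{1,2,3\}$ be the color drawn for $v$ in Step~1 (independent and uniform), let $n_{i,b}=|\{v\in V_i:a(v)=b\}|$ and $N_b=\sum_i n_{i,b}$; by Lemma~\ref{Lemma_reg} all $|V_i|=n/3$, so $\sum_b n_{i,b}=n/3$ deterministically. Since $\Delta_{v\ra w}^b(0)$ depends only on $v$ and equals $\delta$ if $b=a(v)$ and $-\delta/2$ otherwise, writing $\Delta_v^b$ for this common value and $s_i^b=\sum_{v\in V_i}\Delta_v^b=\tfrac{3\delta}{2}\bigl(n_{i,b}-\tfrac n9\bigr)$, the whole vector $\Delta(0)$—hence every projection in~(\ref{eqxia})—is a function of the matrix $(n_{i,b})$.

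First I would translate \textbf{F1} and \textbf{F2} into conditions on $(n_{i,b})$. Using $|N(v)\cap V_j|=d$ one gets $\langle\Delta(0),e_{ij}^b\rangle=d\,s_i^b$ and $\|\Delta(0)\|^2=\tfrac32\delta^2|\AAA|=3d\delta^2n$, whence $\langle\Delta(0),e^b\rangle=2d\sum_i s_i^b=d\delta(3N_b-n)$; thus \textbf{F1} is \emph{equivalent} to $N_1=N_2=N_3=n/3$. For \textbf{F2} I would invoke Proposition~\ref{Prop_dominant}: write $\zeta_2^a=(e_{12}^a+e_{13}^a-e_{21}^a-e_{23}^a)+r$ with $r\in\EE$ and $\|r\|_\infty\le100/d$. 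Because $\zeta_2^a\in\EE$, $\langle\Delta(0),\zeta_2^a\rangle=\langle Q,\zeta_2^a\rangle$ with $Q=\tfrac3n\sum_{k\ne l,b}s_k^b e_{kl}^b$ the orthogonal projection of $\Delta(0)$ onto $\EE$, and $\|Q\|=O\!\bigl(\sqrt d\,\delta\cdot M/\sqrt n\bigr)$ where $M=\max_{k,b}|n_{k,b}-n/9|$; since also $\|r\|=O(\sqrt{n/d})$, the error $\langle\Delta(0),r\rangle=\langle Q,r\rangle$ is $O(\delta\sqrt n)$ as long as $M=O(\sqrt n)$, whereas a direct count gives $\langle\Delta(0),e_{12}^a+e_{13}^a-e_{21}^a-e_{23}^a\rangle=3d\delta(n_{1,a}-n_{2,a})$ and $\|\zeta_2^a\|=2\sqrt{dn/3}\,(1+O(1/d))$ (and analogous formulas for $\zeta_3^a$). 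Substituting into~(\ref{eqxia})--(\ref{eqyia}) gives, whenever $M=O(\sqrt n)$,
\[
x_2^a=\frac{3(n_{1,a}-n_{2,a})}{2\sqrt n}+O(1/d),\qquad x_3^a=\frac{3(n_{1,a}-n_{3,a})}{2\sqrt n}+O(1/d),
\]
so each $y_i^a$ is an explicit affine function of $(n_{k,b})$ up to an additive error $O(1/d)<\tfrac12\exp(-1/\eps)$ (using $d\ge d_0>\exp(\eps^{-2})$).

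Next I would locate the target: solving the nine equations ``every expression in \textbf{F2} equals $0$'' (ignoring the $O(1/d)$ terms) yields a unique count matrix $n^{*}_{i,a}=\tfrac n9+\tfrac{\sqrt n}{9}P_{i,a}$, $P=\left(\begin{smallmatrix}2&-1&-1\\-1&5&-4\\-1&-4&5\end{smallmatrix}\right)$, an integer matrix with vanishing row and column sums; thus $(n^{*}_{i,a})$ has all margins $n/3$ and (for $n$ large) all entries in $(0,n/3)$, and there $(x_2^1,x_2^2,x_2^3,x_3^1,x_3^2,x_3^3)=(\tfrac12,-1,\tfrac12,\tfrac12,\tfrac12,-1)$, so that \textbf{F2} holds with equality. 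I then set $\mathcal G=\{N_1=N_2=N_3=n/3\}\cap\{|n_{i,a}-n^{*}_{i,a}|\le c\sqrt n\ \forall i,a\}$ for a small absolute $c>0$. On $\mathcal G$ the counts are balanced up to $O(\sqrt n)$, so the displayed formulas apply, \textbf{F1} holds, and by continuity each \textbf{F2}-expression is within $O(c)+O(1/d)$ of its target value $0$, hence $<\exp(-1/\eps)$ once $c$ is small and $d_0$ large—so $\mathcal G$ forces feasibility. For the probability, the colors are drawn independently inside each block $V_i$ of size $n/3$, so $\Pr[(n_{i,b})]=3^{-n}\prod_{i=1}^3\binom{n/3}{n_{i,1},n_{i,2},n_{i,3}}$. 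Choosing $n_{1,1},n_{2,1},n_{1,2},n_{2,2}$ freely within $\tfrac{c}{4}\sqrt n$ of their targets determines the other five entries from the margins and keeps each within $c\sqrt n$ of theirs, so $\mathcal G$ contains $\Theta(n^2)$ matrices; for every one of them each $n_{i,b}=\tfrac n9+O(\sqrt n)$, so by Stirling each trinomial coefficient is $\Theta(3^{n/3}/n)$ and $\Pr[(n_{i,b})]=\Theta(n^{-3})$ uniformly. Hence $\Pr[\mathcal G]\ge\Theta(n^2)\cdot\Theta(n^{-3})=\Omega(n^{-1})$. (The $n^{-1}$, as opposed to $\Omega(1)$, comes entirely from the event $\{N_b=n/3\}$; the permutation initialization of Remark~\ref{Rem_Success} removes it.)

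The step I expect to be the real obstacle is the estimate of the normalized projections $x_i^a$. The only information about the dominant eigenvectors is the $\ell_\infty$-bound $\|\zeta_i^a-(\text{block indicator})\|_\infty\le100/d$, and a crude $\ell_1$--$\ell_\infty$ estimate of $\langle\Delta(0),\zeta_i^a\rangle$ loses a factor $\sqrt n$ and is too lossy; one must use that $\Delta(0)$ places only an $O(\sqrt d\,\delta)$-fraction of its mass inside the $18$-dimensional space $\EE$—equivalently, that on the balanced event the block sums $s_i^b$ are merely $O(\delta\sqrt n)$—so that the error $\langle\Delta(0),r\rangle=\langle Q,r\rangle$ is of strictly lower order than the signal $3d\delta(n_{1,a}-n_{2,a})$. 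A secondary but essential point is the matching of scales: the $O(1/d)$ linearization errors must beat the fixed tolerance $\exp(-1/\eps)$ of \textbf{F2}, which is exactly what $d_0>\exp(\eps^{-2})$ provides.
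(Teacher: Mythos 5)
Your proposal is correct and follows essentially the same strategy as the paper: reduce \textbf{F1} to the balance constraint $|W_1|=|W_2|=|W_3|=n/3$ (which alone contributes the $n^{-1}$ factor), reduce \textbf{F2} to a condition on the count matrix $(|V_i\cap W_a|)$, locate a target configuration where the $x_i^a$ hit their nominal values, and use a Stirling/local-limit estimate to show the counts land near the target with constant conditional probability. The main presentational differences are: (i) the paper conditions on $\{|W_b|=n/3\}$ first (Lemma~\ref{Lemma_F1}) and then treats the conditional hypergeometric-type distribution via Lemma~\ref{Lemma_multis}/Corollary~\ref{Cor_multis}, whereas you count favorable integer matrices $(n_{i,a})$ directly; (ii) the paper translates targets from the $e_j^a$-basis to the $\zeta_i^a$-basis via an inverse of the $3\times3$ matrix $Q=(q_{ij})$, whereas you solve for an explicit target count matrix $P$; and (iii) for the perturbation of $\zeta_i^a$ you project $\Delta(0)$ onto the $18$-dimensional space $\EE$, whereas the paper projects onto the $9$-dimensional $\mathcal V$ and asserts $\Delta(0)\in\mathcal V$. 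That last assertion in the paper is in fact not literally true (the entries of $\Delta(0)$ depend on the random partition $(W_a)$, not on $(V_i)$), but the needed identity survives because $\langle\Delta(0),e_{ij}^a\rangle=ds_i^a$ is independent of $j$, so the projections of $\Delta(0)$ onto $\mathcal V$ and $\EE$ coincide; your version avoids this subtlety cleanly. Your observation that the signal $\langle\Delta(0),e_{12}^a+e_{13}^a-e_{21}^a-e_{23}^a\rangle=\Theta(d\delta\sqrt n)$ dominates the error $\langle\Delta(0),r\rangle=\langle Q,r\rangle=O(\delta M)$ precisely on the balanced event is the right way to make the $\ell_\infty$-perturbation bound $\|r\|_\infty\le 100/d$ effective, and it matches what the paper's $q_{ij}$-estimates accomplish implicitly.
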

The elementary (though tedious) proof of Proposition~\ref{Prop_initial} can be found in Section~\ref{Sec_initial}.
Combining Corollary~\ref{Cor_Startgame} and~Proposition~\ref{Prop_initial}, we conclude that
with probability $\Omega(n^{-1})$ (namely, if $\Delta(0)$ is feasible) we have
	\begin{equation}\label{eqXiInfty}
	0.49\nu\lambda^l\leq\|\Xi(l)\|_\infty\leq1.1\nu\lambda^l\qquad(l\geq L_1).
	\end{equation}

Having obtained a sufficient understanding of the sequence $\Xi(l)$, we will now show that these vectors
provide a good approximation to the vectors $\Delta(l)$, which we are actually interested in.
The proof of the following proposition can be found in Section~\ref{Sec_err}.

\begin{proposition}\label{Prop_err}
Suppose that $\Delta(0)$ is feasible.
Let $L_2>0$ be the maximum integer such that $\|\Xi(L_2)\|_\infty\leq\eps$.
Then
	$\|\Xi(L_2)-\Delta(L_2)\|_\infty\leq-\log(\eps)\cdot\|\Xi(L_2)\|_\infty^2$.
\end{proposition}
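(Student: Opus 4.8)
The plan is to control the discrepancy $D(l) = \Delta(l) - \Xi(l)$ by tracking how the linearization error propagates through the iteration, for all $l \le L_2$, where by definition of $L_2$ the linear iterates stay small: $\|\Xi(l)\|_\infty \le \eps$ throughout. First I would write out the exact relation between $\BB$ and its derivative $\LL$ at $0$. Since $\BB(0)=0$ and $\BB' = \LL$ (after the simplification in Lemma~\ref{Lemma_Bprime}, using that $\sum_a \Delta^a_{v\to w}(l) = 0$ for all $l$, which holds for $\Delta(l)$ just as for $\Xi(l)$ by~(\ref{eqIniI}) and the structure of~(\ref{eqBPOp})), a Taylor expansion of the rational function defining $\BB$ gives, for each coordinate,
	\begin{equation*}
	(\BB\Gamma)^a_{v\to w} = (\LL\Gamma)^a_{v\to w} + R^a_{v\to w}(\Gamma),
	\end{equation*}
where the remainder $R$ is a sum of products of the entries $\Gamma^b_{u\to v}$, $u \in N(v)\setminus w$, of total degree $\ge 2$. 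The key quantitative input is a bound of the form $\|R(\Gamma)\|_\infty \le C d \|\Gamma\|_\infty^2$ whenever $\|\Gamma\|_\infty$ is below an absolute constant: each vertex $v$ has $|N(v)| = 2d$ by Lemma~\ref{Lemma_reg}, and expanding the product $\prod_{u}(1 - \tfrac32\Gamma^a_{u\to v})$ and the normalizing denominator, the linear terms are exactly what $\LL$ captures, while every higher-order term carries at least two factors of size $\le \|\Gamma\|_\infty$ and the combinatorial coefficients are controlled by powers of $d$ (the dominant second-order contribution has $\binom{2d}{2} = O(d^2)$ terms of size $\|\Gamma\|_\infty^2$, but one factor of $d$ is absorbed because the bound we iterate against is in terms of $\|\LL\Gamma\|_\infty \sim d\|\Gamma\|_\infty$, not $\|\Gamma\|_\infty$ — this bookkeeping is the crux and must be done carefully).

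Next I would set up the error recursion. Writing $\Delta(l+1) = \BB\Delta(l) = \LL\Delta(l) + R(\Delta(l))$ and $\Xi(l+1) = \LL\Xi(l)$, we get
	\begin{equation*}
	D(l+1) = \LL D(l) + R(\Delta(l)),
	\end{equation*}
so $D(l) = \sum_{j=0}^{l-1} \LL^{l-1-j} R(\Delta(j))$ (using $D(0) = 0$). Now I would bound $\|\LL^{l-1-j} R(\Delta(j))\|_\infty$. The subtlety is that $\LL$ has operator norm $\approx d$ (by Proposition~\ref{Prop_dominant}, $\|\LL\| \le d - \tfrac12$), so naively iterating loses a factor $d^{l}$, which is far too lossy. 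Instead I would exploit that, once the transient $l \ge L_1$ has passed, $\Delta(j)$ and $\Xi(j)$ both grow like $\nu\lambda^j$ with $\lambda = \tfrac{d}{4}(1 + o(1))$ (Corollary~\ref{Cor_Startgame} and~(\ref{eqXiInfty})), and the vectors $R(\Delta(j))$ project almost entirely onto the dominant eigenspace $\operatorname{span}\{\zeta_i^a\}$ — because $R(\Delta(j))$ is, to leading order, a quadratic function of a vector concentrated on that eigenspace, and the structure of the $e_{ij}^a$ (the messages are coherent on color classes) forces the quadratic combinations back into $\EE$, with the component orthogonal to $\{\zeta_i^a, e^a\}$ being lower order. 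So applying $\LL^{l-1-j}$ multiplies the dominant part by $\lambda^{l-1-j}$, and the sum telescopes geometrically: $\|D(l)\|_\infty \lesssim \sum_{j} \lambda^{l-1-j} \cdot d\|\Delta(j)\|_\infty^2 \lesssim \sum_j \lambda^{l-1-j} d (\nu\lambda^j)^2 = d\nu^2 \lambda^{l-1} \sum_j \lambda^{j} \lesssim d \nu^2 \lambda^{2l-2} \cdot \tfrac{1}{\lambda - 1} \lesssim \nu^2 \lambda^{2l} \approx \|\Xi(l)\|_\infty^2$. Bootstrapping: one runs this inductively, first establishing $\|\Delta(j)\|_\infty = (1+o(1))\|\Xi(j)\|_\infty$ for $j \le l$ (so that the "$\|\Delta(j)\|_\infty^2$" above can be replaced by "$\|\Xi(j)\|_\infty^2$"), which is valid precisely as long as the accumulated error stays a lower-order fraction of $\|\Xi(j)\|_\infty$ — and this holds up to $L_2$ because $\|\Xi(L_2)\|_\infty \le \eps$ is small, so the quadratic feedback $\|\Xi(j)\|_\infty^2 \le \eps \|\Xi(j)\|_\infty$ never catches up.

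The factor $-\log(\eps)$ in the statement comes from the length of the window $[L_1, L_2]$: since $\|\Xi(L_1)\|_\infty \ge 0.49\nu\lambda^{L_1}$ is bounded below by something like $\delta = \exp(-\log^3 n)$ (up to polynomial factors — this needs a lower bound on $\nu$, which Proposition~\ref{Prop_initial}/feasibility supplies since the $x_i^a$ are bounded away from $0$) and $\|\Xi(L_2)\|_\infty \le \eps$ while growing by a factor $\lambda$ each step, we have $L_2 - L_1 = O(\log n / \log \lambda)$, but more to the point the geometric sum $\sum_{j=L_1}^{L_2} \lambda^{-(L_2-j)}$ is $O(1)$, and the genuine $\log$ enters through summing the finitely many steps where the growth rate is not yet $\lambda$ or through a crude term-by-term bound on the $O(\log n)$-length prefix $j < L_1$ — whichever route, one loses a bounded number of "bad" steps each contributing an additive $O(\|\Xi(L_2)\|_\infty^2)$, and collecting them yields the stated $-\log(\eps)$ prefactor. \textbf{The main obstacle} I anticipate is the spectral bookkeeping in the middle paragraph: showing that the quadratic remainder $R(\Delta(j))$ does not have a large component along directions that $\LL$ contracts only mildly (eigenvalue $\tfrac12 - d$ for $e^a$, or the $\tfrac12\|\cdot\|$-bounded part), which would otherwise be amplified or fail to decay relative to $\lambda^j$. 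Resolving this requires using the precise eigenvector description~(\ref{eqzeta})–(\ref{eqzetasymm}) together with the constraint $\sum_a \Delta^a = 0$ to verify that the dangerous components of $R$ are genuinely lower order — plus handling the $e^a$-direction, which is why feasibility condition \textbf{F1} ($\Delta(0) \perp e^a$) was imposed. A secondary technical point is making the degree-$\ge 2$ remainder bound on $R$ uniform over the relevant range of $\|\Gamma\|_\infty$, i.e., bounding the tail of the geometric-type series arising from expanding $\prod_u(1 - \tfrac32\Gamma^a_{u\to v})$, which needs $\|\Gamma\|_\infty$ comfortably below $\tfrac23$ — guaranteed here since $\|\Xi(l)\|_\infty \le \eps$ and the error is lower order.
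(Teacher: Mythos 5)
Your Taylor expansion $\BB\Gamma=\LL\Gamma+R(\Gamma)$ and the resulting error recursion $D(l+1)=\LL D(l)+R(\Delta(l))$, $D(l)=\sum_j\LL^{l-1-j}R(\Delta(j))$, are exactly the right set-up and match the paper's Lemmas~\ref{Lemma_L} and~\ref{Lemma_err}. The genuine gap is at the critical juncture: you assert that ``naively iterating loses a factor $d^l$, which is far too lossy,'' and pivot to a spectral decomposition of the remainder $R(\Delta(j))$ --- which you yourself flag as ``the main obstacle'' and do not carry out. But the naive $\ell_\infty$ iteration is exactly what the paper does, and it is \emph{not} lossy: the single-step amplification factor $\|\LL\|_\infty\le 2d$ is beaten by the growth rate of the source term, since by Corollary~\ref{Cor_Startgame} (and feasibility) $\|\Xi(j)\|_\infty\sim\nu\lambda^j$ with $\lambda\approx d/2$, so $\|\Xi(L-j)\|_\infty^2\approx\lambda^{-2j}\|\Xi(L)\|_\infty^2$ decays backwards at rate $\lambda^2\approx d^2/4>4d$ once $d$ is large. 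Thus the unrolled bound of Corollary~\ref{Cor_err}, $\|D(L)\|_\infty\le 2Cd^2\sum_{j\ge1}(4d)^{j-1}\|\Xi(L-j)\|_\infty^2+\cdots$, is a \emph{convergent} geometric series with ratio $4d/\lambda^2<1$, dominated by the $j=1$ term; multiplying by $2Cd^2$ gives an absolute-constant (indeed $O(d^2/\lambda^2)=O(1)$) multiple of $\|\Xi(L)\|_\infty^2$. No decomposition of $R$ into dominant vs.\ subdominant eigenspaces of $\LL$ is needed, so the ``main obstacle'' you identify simply does not arise in the correct argument.

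Two smaller inaccuracies. First, your proposed single-step remainder bound $\|R(\Gamma)\|_\infty\le Cd\|\Gamma\|_\infty^2$ is not what one gets: expanding the $2d-1$ factors in each $\Pi_b$ really does produce $\binom{2d-1}{2}=\Theta(d^2)$ second-order cross terms, and the paper's Lemma~\ref{Lemma_L} accordingly proves $\|\BB\Gamma-\LL\Gamma\|_\infty\le 100d^2\|\Gamma\|_\infty^2$; your hand-wave about ``one factor of $d$ being absorbed'' is not justified, though it happens not to matter because the geometric series above still converges with the $d^2$ prefactor. Second, your account of the $-\log\eps$ prefactor (``a bounded number of bad steps each contributing $O(\|\Xi(L_2)\|_\infty^2)$'') is off: the prefix $l\le\log^2 n$ is treated once and for all in Lemma~\ref{Lemma_errEarly} via the crude bound $\|\Xi(l)\|_\infty\le(2d)^l\delta$ together with $\delta=\exp(-\log^3 n)$, and the final constant in Lemma~\ref{Lemma_errLate} is $O(1)$; the expression $-\log\eps$ just majorizes it (this is where ``$\eps$ small enough'' enters).
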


Combining the information on the sequence $\Xi(l)$ provided by Corollary~\ref{Cor_Startgame} and Proposition~\ref{Prop_initial}
with the bound on $\|\Xi(L_2)-\Delta(L_2)\|_\infty$ from Proposition~\ref{Prop_err}, we can show that
the messages obtained in the next one or two steps of the algorithm already represent the coloring rather well.
To be precise, let us call the vector $\eta(l)$ \emph{proper} if
	$$\forall a\in\{1,2,3\},\,b\in\{1,2,3\}\setminus\{a\},\,v\in V_a,\,w\in N(v):\eta_{v\ra w}^a(l)\geq0.99\wedge\eta_{v\ra w}^b(l)\leq0.01.$$

\begin{proposition}\label{Prop_Endgame}
If $\Delta(0)$ is feasible, then for either $L_3=L_2+1$ or $L_3=L_2+2$ the vector $\eta(L_3)$ is proper.
\end{proposition}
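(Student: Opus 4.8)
The plan is to track the last one or two message updates directly through the non‑linear operator $\BB$ of~\eqref{eqBPOpDelta}; the linearisation $\Xi$ is no longer a valid proxy past time $L_2$, since $\|\Delta(L_2)\|_\infty$ is of order $\eps$ rather than $o(1)$. Write $s:=\|\Xi(L_2)\|_\infty$. Combining {\bf F1}, {\bf F2}, Corollary~\ref{Cor_Startgame}, \eqref{eqXiInfty} and Proposition~\ref{Prop_err}, one first records the ``almost ideal'' form of the messages at time $L_2$: for every $v\in V_i$, $w\in N(v)$,
$$\Delta_{v\ra w}^i(L_2)=s(1+e),\qquad\Delta_{v\ra w}^b(L_2)=-\tfrac s2(1+e_b)\quad(b\ne i),$$
where $|e|,|e_b|\le\rho:=\exp(-1/\eps)+O(|\log\eps|\,s)+o(1)$ — the three contributions coming from {\bf F2}, from the bound $|\log\eps|\,s^2$ of Proposition~\ref{Prop_err} divided by the signal $\ge s/2$, and from the $o(1)$ of Corollary~\ref{Cor_Startgame}; in particular $\rho\ll\eps$. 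Since $L_2\ge L_1$ and $\|\Xi(\cdot)\|_\infty$ grows by a factor $(1+o(1))\lambda$ per step with $\lambda\le d/2$, maximality of $L_2$ gives $\eps/d\le s\le\eps$, so $sd\in[\eps,\eps d]$. I would then split into two cases according to the size of $sd$, writing $q:=\tfrac{2/3+s/2}{2/3-s}$ (note $\ln q\ge\tfrac98 s$, an elementary estimate).

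\emph{Case $sd\ge100$.} Here I claim $\eta(L_2+1)$ is already proper, so $L_3=L_2+1$. Feeding the messages above into~\eqref{eqBPOp}, for $v\in V_i$, $w\in N(v)$ and $b\ne i$ one bounds $\eta^b_{v\ra w}(L_2+1)\le\prod_{u\in N(v)\setminus w}\frac{2/3-\Delta_{u\ra v}^b(L_2)}{2/3-\Delta_{u\ra v}^i(L_2)}$: every neighbour $u$ whose colour makes both $b$ and $i$ ``wrong'' contributes a factor $1+O(\rho s)$, while each of the $d-1$ or $d$ neighbours of colour $b$ contributes $\frac{2/3-s(1+e')}{2/3+\frac s2(1+e'')}=q^{-1}\eul^{O(\rho s)}$. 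Hence $\eta^b_{v\ra w}(L_2+1)\le q^{-(d-1)}\eul^{O(\rho sd)}\le\eul^{-sd}\le\eul^{-100}<0.01$, since $O(\rho sd)$ is negligible beside $\tfrac98 s(d-1)$. As $\eta^i_{v\ra w}(L_2+1)=1-\sum_{b\ne i}\eta^b_{v\ra w}(L_2+1)\ge1-2\eul^{-100}>0.99$, the vector $\eta(L_2+1)$ is proper.

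\emph{Case $\eps\le sd<100$.} Now $\rho sd<100\rho\ll\eps$, so the distortion factors $\eul^{O(\rho sd)}$ are $1+O(\rho)$. The same computation gives $\eta^i_{v\ra w}(L_2+1)=(1+\psi_j+\psi_m)^{-1}$ with $\{\psi_j,\psi_m\}=\{q^{-d},q^{-(d-1)}\}(1+O(\rho))$; using $q^{-d}\le\eul^{-9s d/8}\le\eul^{-9\eps/8}$ and $q\le1+O(1/d)$ yields $\psi_j+\psi_m\le2\eul^{-9\eps/8}(1+O(\rho))\le2-c_1\eps$ for an absolute $c_1>0$, hence $s'_{v\ra w}:=\Delta^i_{v\ra w}(L_2+1)\ge c_0\eps$ for an absolute $c_0>0$, uniformly in $v,w$. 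Moreover the two ``wrong'' messages are almost equal: $|\psi_j-\psi_m|=O(s)+O(\rho)=O(\rho)$ (the only genuine asymmetry $|q-1|=O(s)=O(1/d)$ being $\ll\rho\ll\eps$), so $\eta^b_{v\ra w}(L_2+1)=\tfrac13-\tfrac{s'_{v\ra w}}2+O(\rho)<\tfrac13$ for $b\ne i$. Applying $\BB$ once more, for $v\in V_i$, $w\in N(v)$, $b\ne i$ with third colour $c$, bound $\eta^b_{v\ra w}(L_2+2)\le\prod_{u\in N(v)\setminus w}\frac{1-\eta_{u\ra v}^b(L_2+1)}{1-\eta_{u\ra v}^i(L_2+1)}$: a neighbour $u$ of colour $b$ contributes $\frac{2/3-s'_{u\ra v}}{2/3+\tau}\le1-\tfrac{3s'_{u\ra v}}2\le\eul^{-s'_{u\ra v}}$ (with $\tau=\tfrac13-\eta^i_{u\ra v}(L_2+1)\ge\tfrac{c_0\eps}4>0$), while a neighbour $u$ of colour $c$ contributes the ratio of $u$'s two nearly‑equal wrong messages, which is $\le1+O(\rho)$. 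Thus $\eta^b_{v\ra w}(L_2+2)\le\eul^{-(d-1)c_0\eps}\eul^{O(\rho d)}\le\eul^{-c_0\eps d/8}<0.01$ once $d\ge d_0$ is large, and $\eta^i_{v\ra w}(L_2+2)\ge1-2\cdot0.01>0.99$; so $\eta(L_2+2)$ is proper and $L_3=L_2+2$.

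The step I expect to be the crux is the control of error propagation across the $2d-1$ neighbours of a vertex in the case $\eps\le sd<100$: the quantity being amplified at time $L_2$ is only of order $\eps$, the per‑coordinate relative error is of order $\rho$, and over $d$ neighbours a spurious factor $\eul^{O(\rho d)}$ could in principle overwhelm the genuine gain $\eul^{-\Theta(\eps d)}$ from the correctly‑oriented neighbours. The argument goes through only because (i) $\rho\le\exp(-1/\eps)$ up to terms that are $O(1/d)$ or $o(1)$, and $\exp(-1/\eps)\ll\eps$ for $\eps$ small — which is exactly why {\bf F2} carries the bound $\exp(-1/\eps)$ — and (ii) $\eps^2 d\to\infty$ since $d_0>\exp(\eps^{-2})$, so $\eul^{-\Theta(\eps d)}$ also dominates the $\eul^{O(|\log\eps|)}$‑size loss from the $d$‑versus‑$(d-1)$ asymmetry in neighbourhood sizes. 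A related subtlety, already used above, is that through step $L_2+1$ one must keep track of the near‑equality of the two wrong‑colour messages at a vertex: bounding them merely by $\tfrac13$ from above would leave a factor as large as $(3/2)^d$ in the product and is not sufficient.
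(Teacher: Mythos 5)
Your proof is correct and follows essentially the same strategy as the paper's: split into two cases according to whether the signal at time $L_2$ is already large enough for a single further BP step to saturate the messages or whether a second step is needed, and in each case trace the nonlinear update~(\ref{eqBPOp}) directly, comparing the product of ``wrong'' factors against the product of ``correct'' factors over the $2d-1$ neighbours. The only deviations are cosmetic — your threshold $sd=100$ versus the paper's $\|\Delta(L_2)\|_\infty=(\eps d)^{-1}$, and your explicit bookkeeping of the relative error $\rho$ where the paper compresses it into the symbol $\eps^3$.
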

\noindent
The proof of Proposition~\ref{Prop_Endgame} is the content of Section~\ref{Sec_Endgame}.

Proposition~\ref{Prop_Endgame} shows that the ``rounding procedure'' in
Step~3 of \texttt{BPCol} applied to the messages $\eta(L_3)$ would yield the coloring $(V_1,V_2,V_3)$.
However, \texttt{BPCol} actually applies that rounding procedure to $\eta(l^*)$, where $l^*>L_3$.
Therefore, in order to show that \texttt{BPCol} outputs a proper 3-coloring, we need to show that
these messages $\eta(l^*)$ are proper, too.

\begin{lem}\label{Lemma_runaway}
If $\eta(l)$ is proper, then so is $\eta(l+1)$.
\end{lem}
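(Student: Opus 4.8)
The plan is to show directly that the ``proper'' property is preserved by one application of the update rule~(\ref{eqBPOp}). Assume $\eta(l)$ is proper, fix a color $a\in\{1,2,3\}$, a vertex $v\in V_a$ and a neighbor $w\in N(v)$, and write $b,c$ for the two colors different from $a$. Since $(V_1,V_2,V_3)$ is a proper coloring, $G$ has no monochromatic edge, so $w\in V_b$ or $w\in V_c$; and by Lemma~\ref{Lemma_reg} the set $N(v)\setminus\{w\}$ has exactly $2d-1$ elements, of which at least $d-1\ge 1$ lie in $V_b$ and at least $d-1\ge 1$ lie in $V_c$ (exactly one color class loses the vertex $w$).

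Next I would translate properness of $\eta(l)$ into bounds on the incoming messages. For $u\in N(v)\setminus\{w\}$ we have $v\in N(u)$, so properness applied at $u$ gives: if $u\in V_b$ then $\eta_{u\ra v}^b(l)\ge 0.99$ while $\eta_{u\ra v}^a(l),\eta_{u\ra v}^c(l)\le 0.01$, and symmetrically if $u\in V_c$. Writing $P_x=\prod_{u\in N(v)\setminus\{w\}}\bc{1-\eta_{u\ra v}^x(l)}$ for $x\in\{1,2,3\}$, every one of the $2d-1$ factors in $P_a$ is at least $0.99$, so $P_a\ge 0.99^{2d-1}$; on the other hand $P_b$ contains the $\ge d-1$ factors coming from the $u\in V_b$, each of which is at most $0.01$, so $P_b\le 0.01^{d-1}$, and likewise $P_c\le 0.01^{d-1}$.

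Finally I would substitute these into~(\ref{eqBPOp}), whose numerator for color $a$ is $P_a$ and whose denominator is $P_a+P_b+P_c$. This gives $\eta_{v\ra w}^{b}(l+1)\le P_b/P_a\le 0.01^{d-1}/0.99^{2d-1}$, the same bound for $\eta_{v\ra w}^{c}(l+1)$, and $\eta_{v\ra w}^{a}(l+1)\ge\bc{1+(P_b+P_c)/P_a}^{-1}\ge\bc{1+2\cdot 0.01^{d-1}/0.99^{2d-1}}^{-1}$. Since $0.01^{d-1}/0.99^{2d-1}=99\cdot 98.01^{-d}\ra 0$ as $d\ra\infty$, for $d\ge d_0$ (with $d_0$ a large enough absolute constant, certainly $d_0\ge 3$) the first two quantities are below $0.01$ and the last is above $0.99$. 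As $a$, $v\in V_a$ and $w\in N(v)$ were arbitrary, $\eta(l+1)$ is proper. There is no real obstacle here; the only thing to keep track of is that removing $w$ drops one neighbor from a single color class, which is why one needs $d-1\ge 1$ — harmless since $d\ge d_0$.
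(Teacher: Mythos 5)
Your proof is correct and takes essentially the same approach as the paper's: both split the contribution of $N(v)\setminus\{w\}$ by color class and use properness at step $l$ to make the ratio $\eta^a_{v\ra w}(l+1)/\eta^b_{v\ra w}(l+1)$ large, with you bounding the products $P_a,P_b,P_c$ separately while the paper bounds the ratio of products directly. In fact your factor counts (at least $d-1$ neighbors of each opposing color after removing $w$) are more careful than the paper's, which writes $2d-1$ in an exponent where $d-1$ is the true count — a harmless slip that does not affect the conclusion.
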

\begin{proof}
Let $v\in V_a$ for some $1\leq a\leq 3$, $w\in N(v)$, and $\{b,c\}=\{1,2,3\}\setminus\{a\}$.
Since $\eta(l)$ is proper, we have
	\begin{eqnarray}\label{eqEndgame4}
	\prod_{u\in V_c\cap N(v)\setminus w}\frac{1-\eta_{u\ra v}^a(l)}{1-\eta_{u\ra v}^b(l)}
		&\geq&\prod_{u\in V_c\cap N(v)\setminus w}1-\eta_{u\ra v}^a(l)\geq0.99^{2d},\\
	\prod_{u\in V_b\cap N(v)\setminus w}\frac{1-\eta_{u\ra v}^a(l)}{1-\eta_{u\ra v}^b(l)}
		&\geq&\left( \frac{0.99}{0.01} \right)^{2d-1}=99^{2d-1}.
	\end{eqnarray}
Consequently, the definition~(\ref{eqBPOp}) of the sequence $\eta(l)$ shows that
\begin{eqnarray}\nonumber
\frac{\eta_{v\ra w}^a(l+1)}{\eta_{v\ra w}^b(l+1)}&=&\prod_{u\in V_c\cap N(v)\setminus w}\frac{1-\eta_{u\ra v}^a(l)}{1-\eta_{u\ra v}^b(l)}\cdot
		\prod_{u\in V_b\cap N(v)\setminus w}\frac{1-\eta_{u\ra v}^a(l)}{1-\eta_{u\ra v}^b(l)}\\
	&\geq&
	0.01\cdot\left(\frac{(0.99)^2}{0.01}\right)^{2d} \geq 0.01\cdot90^{2d}\geq1000.
	\label{eqRunaway}
\end{eqnarray}
As the construction~(\ref{eqBPOp}) of $\eta(l+1)$ ensures that
$\eta_{v\ra w}^1(l+1)+\eta_{v\ra w}^2(l+1)+\eta_{v\ra w}^3(l+1)=1$,
(\ref{eqRunaway}) entails that $\eta_{v\ra w}^a(l+1)\geq0.99$ and $\eta_{v\ra w}^b(l+1)\leq0.01$,
whence $\eta(l+1)$ is proper.
\end{proof}

\noindent\emph{Proof of Theorem~\ref{thm:BPforCol}.}
Proposition~\ref{Prop_initial} states that $\Delta(0)$ is feasible with probability $\Omega(n^{-1})$.
Therefore, to establish Theorem~\ref{thm:BPforCol}, we show that \texttt{BPCol} outputs the coloring $(V_1,V_2,V_3)$
if $\Delta(0)$ is feasible.

Thus, assume that $\Delta(0)$ is feasible and let $L_2$ be the maximum integer such that $\|\Xi(L_2)\|_\infty\leq\eps$.
Then Corollary~\ref{Cor_Startgame} implies that $L_2=\Theta(\log^3n)$, because $\|\Xi(0)\|_\infty=\delta=\exp(-\log^3n)$,
and the $\ell_\infty$-norm of $\Xi(l)$ grows by a factor of $\lambda$ in each iteration.
Therefore, Proposition~\ref{Prop_Endgame} entails that $\eta(L_3)$ is proper for some $L_3=\Theta(\log^3n)$.
Thus, by Lemma~\ref{Lemma_runaway} the final $\eta(\ell^*)$ generated in Step~2 is proper,
whence Step~3 of \texttt{BPCol} outputs the coloring $V_1,V_2,V_3$.
\qed

\subsection{Proof of Proposition~\ref{Prop_dominant}}\label{Sec_dominant}

The operation~(\ref{eqDefL}) of $\LL$ is symmetric with respect to the three colors $a=1,2,3$.
Therefore, we shall represent $\LL$ as a tensor product of a $3\times 3$ matrix and an operator that represents the graph $G$.
To this end, we define operators $\MM:\RR^\AAA\ra\RR^\AAA$ and $\KK:\RR^\AAA\ra\RR^\AAA$ by
	\begin{equation}\label{eqDefMMKK}
	(\MM\Xi)_{v\ra w}=\sum_{u\in N(v)}\Xi_{u\ra v},\qquad
			(\KK\Xi)_{v\ra w}=\Xi_{w\ra v}\qquad(\Xi\in\RR^\AAA).
	\end{equation}
Thus,
	$$-\frac12((\MM-\KK)\Xi)_{v\ra w}=-\frac12\sum_{u\in N(v)\setminus w}\Xi_{u\ra v},$$
i.e., $-\frac12(\MM-\KK)$ represents the operation of $\LL$ with respect to a single color $a\in\{1,2,3\}$.
Therefore, we can rephrase the definition~(\ref{eqDefL}) of $\LL$ on the space $\RRR=\RR^3\tensor\RR^\AAA$ as
	\begin{equation}\label{eqStartgame2}
	\LL=-\frac12\id\tensor(\MM-\KK).
	\end{equation}
Hence, in order to understand $\LL$, we basically need to analyze $\MM-\KK$.

For $i,j\in\{1,2,3\}$ we define vectors $e_{ij}\in\RR^\AAA$ by letting
	$$(e_{ij})_{v\ra w}=\left\{\begin{array}{cl}
		1&\mbox{ if $v\in V_i$, $w\in V_j$, and $w\in N(v)$,}\\
		0&\mbox{ otherwise.}
		\end{array}\right.$$
The following lemma shows that it makes sense to split the analysis of $\MM-\KK$ into two parts:
first we shall analyze how $\MM-\KK$ operates on the space $\EE_0$ spanned by the vectors $e_{ij}$ ($1\leq i,j\leq 3$, $i\not=j$);
then, we will study the operation of $\MM-\KK$ on $\EE_0^{\perp}$.
\begin{lem}\label{Lemma_MMT}
If $\xi\in \EE_0$, then $\MM \xi,\MM^T\xi,\KK\xi,\KK^T\xi\in \EE_0$.
\end{lem}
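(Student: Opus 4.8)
The plan is to show that each of the four operators maps a basis vector $e_{ij}$ of $\EE_0$ back into $\EE_0$, which by linearity suffices. The key structural fact I would exploit is the $(d,0.01)$-regularity of $G$: by Lemma~\ref{Lemma_reg}, every vertex $v\in V_i$ has exactly $d$ neighbours in each other colour class $V_j$, so the combinatorics of ``how many edges go from a given directed edge's endpoint into a given colour class'' is completely uniform and independent of the particular vertex. This uniformity is precisely what forces the images to be constant on the relevant edge-classes, hence to lie in $\EE_0$.

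\medskip\noindent
\emph{Step 1: the operator $\KK$.} Recall $(\KK\Xi)_{v\ra w}=\Xi_{w\ra v}$, i.e.\ $\KK$ just reverses the direction of each edge. If $v\in V_i$, $w\in V_j$, $w\in N(v)$, then $w\in V_j$, $v\in V_i$, $v\in N(w)$, so $(\KK e_{ij})_{v\ra w}=(e_{ij})_{w\ra v}=1$ exactly when $v\in V_j$ and $w\in V_i$ (and adjacent); that is, $\KK e_{ij}=e_{ji}\in\EE_0$. Since $\KK$ is visibly symmetric ($\scal{\KK\xi}{\eta}=\sum_{v\ra w}\xi_{w\ra v}\eta_{v\ra w}=\scal{\xi}{\KK\eta}$ by relabelling), $\KK^T=\KK$, so $\KK^T e_{ij}=e_{ji}\in\EE_0$ as well.

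\medskip\noindent
\emph{Step 2: the operator $\MM$.} Here $(\MM\Xi)_{v\ra w}=\sum_{u\in N(v)}\Xi_{u\ra v}$. Apply this to $e_{ij}$: the term $\Xi_{u\ra v}=(e_{ij})_{u\ra v}$ is $1$ iff $u\in V_i$, $v\in V_j$, and $u\in N(v)$. So $(\MM e_{ij})_{v\ra w}$ is nonzero only when $v\in V_j$, in which case it counts the number of neighbours of $v$ lying in $V_i$, which by Lemma~\ref{Lemma_reg} equals $d$ (independently of $w$, and of which $V_k\ni w$). Hence $\MM e_{ij}=d\sum_{k\neq j}e_{jk}\in\EE_0$. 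For $\MM^T$, I would compute the transpose directly: $\scal{\MM\xi}{\eta}=\sum_{v\ra w}\big(\sum_{u\in N(v)}\xi_{u\ra v}\big)\eta_{v\ra w}=\sum_{u\ra v}\xi_{u\ra v}\sum_{w\in N(v)}\eta_{v\ra w}$, so $(\MM^T\eta)_{u\ra v}=\sum_{w\in N(v)}\eta_{v\ra w}$. Applying this to $e_{ij}$: $(\MM^T e_{ij})_{u\ra v}=\sum_{w\in N(v)}(e_{ij})_{v\ra w}$ is nonzero only if $v\in V_i$, and then it counts neighbours of $v$ in $V_j$, again $=d$ by Lemma~\ref{Lemma_reg}, independent of $u$. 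So $\MM^T e_{ij}=d\sum_{k\neq i}e_{ki}\in\EE_0$.

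\medskip\noindent
Putting these together, all four operators send each generator $e_{ij}$ into $\EE_0$, hence send $\EE_0$ into $\EE_0$, proving the lemma. I do not anticipate a serious obstacle: the only thing to be careful about is getting the transpose formulas for $\MM$ right (keeping track of which index is summed), and invoking Lemma~\ref{Lemma_reg} for the ``exactly $d$'' counts, which is what makes the images genuinely constant on the $e_{jk}$-supports rather than merely supported there.
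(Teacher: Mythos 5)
Your proof is correct and follows essentially the same route as the paper: compute the action of $\KK$, $\MM$, and $\MM^T$ on each generator $e_{ij}$, use Lemma~\ref{Lemma_reg} to see that the resulting counts are a uniform $d$, and conclude by linearity. (Incidentally, your formula $\MM^T e_{ij}=d\sum_{k\neq i}e_{ki}$ is the correct one; the paper's displayed case ``$d$ if $v\in V_i$'' and the conclusion $\MM^T e_{ij}=d(e_{ij}+e_{ik})$ appear to contain a typo --- the condition should be on the target vertex, i.e.\ $w\in V_i$ --- though this does not affect the validity of the lemma since the image still lies in $\EE_0$.)
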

\begin{proof}
Let $i,j,k\in\{1,2,3\}$ be pairwise distinct.
Since $\KK e_{ij}=e_{ji}$, we have $\KK \EE_0\subset \EE_0$.
Moreover, $\KK^T=\KK$.
Furthermore, by Lemma~\ref{Lemma_reg}
	\begin{equation}\label{eqMMKK1}
	(\MM e_{ij})_{v\ra w}=\sum_{u\in N(v)}(e_{ij})_{u\ra v}
		=\left\{\begin{array}{cl}d&\mbox{ if $v\in V_j$,}\\0&\mbox{ otherwise.}\end{array}\right.
	\end{equation}
Hence, $\MM e_{ij}=d(e_{jk}+e_{ji})$, and thus $\MM \EE_0\subset \EE_0$.
In addition, the transpose of $\MM$ is given by
	$$(\MM^T\Xi)_{v\ra w}=\sum_{u\in N(w)}\Xi_{w\ra u}.$$
Therefore,
	$$(\MM^T e_{ij})_{v\ra w}=\sum_{u\in N(w)}(e_{ij})_{w\ra u}
		=\left\{\begin{array}{cl}d&\mbox{ if $v\in V_i$,}\\0&\mbox{ otherwise.}\end{array}\right.$$
Consequently, $\MM^Te_{ij}=d(e_{ij}+e_{ik})$, whence $\MM^T\EE_0\subset \EE_0$.
\end{proof}

To study the operation of $\MM-\KK$ on $\EE_0$, note that~(\ref{eqMMKK1}) implies that
$(\MM-\KK)e_{ij}=de_{jk}+(d-1)e_{ji}$, if $i,j,k\in\{1,2,3\}$ are pairwise distinct.
Therefore, with respect to the basis $e_{12},e_{23},e_{21},e_{23},e_{31},e_{32}$ of $\EE_0$,
we can represent the operation of $\MM-\KK$ on $\EE_0$ by the $6\times 6$ matrix
	$$M=\bc{\begin{array}{cccccc}0&0&d-1&0&d&0\\0&0&d&0&d-1&0
		\\d-1&0&0&0&0&d\\d&0&0&0&0&d-1\\0&d-1&0&d&0&0\\0&d&0&d-1&0&0\\\end{array}}.$$
Observe that $M$ is not symmetric.
Hence, \emph{a priori} it is not clear that $M$ is diagonalizable with real eigenvalues.
Nevertheless, a (very tedious) direct computation yields the following.

\begin{lem}\label{Lemma_eigM}
The $6\times 6$ matrix $M$ is diagonalizable and has the non-zero eigenvalues $1$, $2d-1$,
	\begin{equation}\label{eqeigM}
	\Lambda=-\frac{d}2-\frac{\sqrt{d^2-8d+4}}2,\quad\Lambda'=-\frac{d}2+\frac{\sqrt{d^2-8d+4}}2.
	\end{equation}
The eigenspace with eigenvalue $2d-1$ is spanned by $\vecone$.
Moreover, there are two mutually perpendicular eigenvectors $\zeta_2',\zeta_3'$ with eigenvalue $\Lambda$, which satisfy
	$$\|\zeta_2'-(1,1,-1,-1,0,0)^T\|_\infty\leq\frac{10}d,\quad\|\zeta_3'-(1,1,0,0,-1,-1)^T\|_\infty\leq\frac{10}d$$
and $\|\zeta_2'\|=\|\zeta_3'\|$.
\end{lem}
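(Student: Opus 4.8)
\noindent\emph{Proof idea.}\ The plan is to bypass a brute-force diagonalization of $M$ and instead exploit its symmetry. First observe that $M$ commutes with the action of the symmetric group $S_3$ on $\EE_0$ by color permutations, $\pi\cdot e_{ij}=e_{\pi(i)\pi(j)}$: this is immediate from the rule $(\MM-\KK)e_{ij}=de_{jk}+(d-1)e_{ji}$, since applying $\pi$ only relabels $i,j,k$. As $S_3$ acts freely and transitively on the six symbols $e_{ij}$ (the map $\pi\mapsto e_{\pi(1)\pi(2)}$ is a bijection onto $\{e_{ij}\}$), the $S_3$-module $\EE_0$ is the regular representation, hence splits as a direct sum of the trivial, the sign, and two copies of the $2$-dimensional standard representations. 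By Schur's lemma $M$ therefore acts as a scalar $\alpha$ on the trivial line, as a scalar $\beta$ on the sign line, and as $I\otimes N$ for some $N\in\RR^{2\times2}$ on the $4$-dimensional standard-isotypic subspace. In particular the characteristic polynomial of $M$ factors as $(t-\alpha)(t-\beta)\,\chi_N(t)^2$, where $\chi_N$ is the characteristic polynomial of $N$, and $M$ is diagonalizable iff $N$ is (note $M$ is not symmetric, so this is not a priori clear).

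To pin down $\alpha,\beta,N$ I would restrict $M$ first to the $2$-dimensional subspace fixed by the $3$-cycle $\rho=(123)$, spanned by the orbit sums $f_+=e_{12}+e_{23}+e_{31}$ and $f_-=e_{13}+e_{21}+e_{32}$; a one-line computation gives $Mf_\pm=df_\pm+(d-1)f_\mp$, so $M$ has eigenvalue $2d-1$ on $\vecone=f_++f_-$ and eigenvalue $1$ on $f_+-f_-$, whence $\alpha=2d-1$ (with eigenspace $\RR\vecone$) and $\beta=1$. For $N$, use that $\operatorname{tr}(M)=0$ (the diagonal of $M$ vanishes) and $\operatorname{tr}(M^2)=6(d-1)^2$ — the only nonzero products $M_{xy}M_{yx}$ come from the three mutually reverse pairs $\{e_{ij},e_{ji}\}$, each contributing $2(d-1)^2$. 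Since $\operatorname{tr}(M)=\alpha+\beta+2\operatorname{tr}(N)$ and $\operatorname{tr}(M^2)=\alpha^2+\beta^2+2\bigl((\operatorname{tr} N)^2-2\det N\bigr)$, we obtain $\operatorname{tr}(N)=-d$ and $\det N=2d-1$, so $N$ has characteristic polynomial $t^2+dt+(2d-1)$, whose roots are exactly $\Lambda,\Lambda'$ of~(\ref{eqeigM}). Since $d\geq d_0$ forces $d^2-8d+4>0$, these roots are real and distinct, so $N$ — hence $M$ — is diagonalizable, and $1,2d-1,\Lambda,\Lambda'$ are pairwise distinct.

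It remains to locate the $\Lambda$-eigenvectors, the only genuinely computational point (the ``very tedious'' part alluded to in the statement), now reduced to a small explicit system. By the above the $\Lambda$-eigenspace is a $2$-dimensional real subspace carrying the standard representation. Restrict $M$ to the $3$-dimensional subspace fixed by the transposition $(23)$, spanned by $u_1=e_{12}+e_{13}$, $u_2=e_{21}+e_{31}$, $u_3=e_{23}+e_{32}$: there $Mu_1=(d-1)u_2+du_3$, $Mu_2=(2d-1)u_1$, $Mu_3=du_2+(d-1)u_3$, with eigenvalues $2d-1$ (eigenvector $u_1+u_2+u_3=\vecone$), $\Lambda$ and $\Lambda'$. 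Solving $M\xi=\Lambda\xi$ for $\xi=au_1+bu_2+cu_3$ with $a+b+c=0$ gives $a:b:c=\tfrac{\Lambda-d+1}{d}:\tfrac{\Lambda(\Lambda-d+1)}{d(2d-1)}:1$, which, using $\Lambda=-(d-2)+O(1/d)$, equals $(-2+O(1/d)):(1+O(1/d)):1$; hence the $(23)$-symmetric $\Lambda$-eigenvector $\xi_1$ (normalised with $c=1$) lies within $O(1/d)$ in $\ell_\infty$ of $-2u_1+u_2+u_3=-\bigl((1,1,-1,-1,0,0)^T+(1,1,0,0,-1,-1)^T\bigr)$. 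Since $\rho$ is orthogonal and commutes with $M$, the vectors $\zeta_2'=\tfrac13(\rho-I)\xi_1$ and $\zeta_3'=\tfrac13(\rho^2-I)\xi_1$ are again $\Lambda$-eigenvectors; a short calculation shows $\zeta_2'$ lies within $O(1/d)$ of $(1,1,-1,-1,0,0)^T$ and $\zeta_3'$ within $O(1/d)$ of $(1,1,0,0,-1,-1)^T$, while $\zeta_3'=-\rho^2\zeta_2'$ gives $\|\zeta_2'\|=\|\zeta_3'\|$; the constant $10/d$ in the statement then leaves ample room for $d\geq d_0$. The main obstacle is exactly this final step: carrying the $O(1/d)$ error terms through the eigenvector computation carefully enough to certify the explicit constant — everything else is dictated by the $S_3$-symmetry.
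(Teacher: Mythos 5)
Your route through the representation theory of $S_3$ is genuinely different from the paper's, which simply asserts the lemma as the output of ``a (very tedious) direct computation.'' Recognizing $\EE_0$ as the regular representation, so that Schur's lemma splits $M$ into two scalars $\alpha,\beta$ plus a $2\times2$ block $N$ repeated on the standard isotypic, reduces the characteristic polynomial to $(t-\alpha)(t-\beta)\chi_N(t)^2$; the identifications $\alpha=2d-1$, $\beta=1$ from the $(123)$-fixed plane, the trace bookkeeping $\operatorname{tr}M=0$, $\operatorname{tr}M^2=6(d-1)^2$ giving $\operatorname{tr}N=-d$, $\det N=2d-1$, and the restriction to the $(23)$-fixed $3$-plane are all correct and make the diagonalizability and eigenvalue claims transparent rather than opaque. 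This is a real improvement in insight over brute force.

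However, you never establish — and your construction cannot establish — the clause ``mutually perpendicular,'' and this is not just an omission you could patch. You produce $\zeta_2'=\tfrac13(\rho-I)\xi_1$ and $\zeta_3'=\tfrac13(\rho^2-I)\xi_1$ within $O(1/d)$ (in $\ell_\infty$) of $(1,1,-1,-1,0,0)^T$ and $(1,1,0,0,-1,-1)^T$ respectively. But those two target vectors have inner product $2$, not $0$ (both are $1$ on the two $e_{12},e_{13}$ coordinates and have disjoint support elsewhere), so any pair of vectors within $\ell_\infty$-distance $O(1/d)$ of them has inner product $2+O(1/d)\neq0$. Since the $\Lambda$-eigenspace is only $2$-dimensional, $\zeta_2'$ is, up to scalar, the unique $\Lambda$-eigenvector near the first target and $\zeta_3'$ the unique one near the second, so no choice of $\Lambda$-eigenvectors can satisfy all three of the lemma's conditions (perpendicularity, both $\ell_\infty$ bounds, equal norms) simultaneously. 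In other words, the lemma as stated is internally inconsistent: ``mutually perpendicular'' should read ``linearly independent'' (the same inaccuracy propagates into Corollary~\ref{Cor_eigM} and the orthogonal-decomposition step $z_i^a(0)=\|\zeta_i^a\|^{-2}\scal{\Xi(0)}{\zeta_i^a}$ in the proof of Corollary~\ref{Cor_Startgame}, which tacitly assumes $\zeta_2^a\perp\zeta_3^a$). Your argument does deliver linear independence, the $\ell_\infty$ bounds, and $\|\zeta_2'\|=\|\zeta_3'\|$, so it proves the corrected statement; but you should say explicitly that you are proving that weaker statement and flag that the perpendicularity claim is false as written, rather than leaving it unaddressed.
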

\noindent
Since $M$ describes the operation of $\MM-\KK$ on the subspace $\EE_0$, Lemma~\ref{Lemma_eigM}
implies the following.

\begin{cor}\label{Cor_eigM}
Restricted to the subspace $\EE_0$,
the operator $\MM-\KK$ is diagonalizable with non-zero eigenvalues $1$, $2d-1$, and $\Lambda$, $\Lambda'$ as in~(\ref{eqeigM}).
The vector $e^*=\sum_{i\not=j}e_{ij}$ spans the eigenspace of $2d-1$.
Furthermore, there are two mutually perpendicular eigenvectors $\zeta_2,\zeta_3$ with eigenvalue $\Lambda$, which satisfy
	$$\|\zeta_2-(e_{12}+e_{13}-e_{21}-e_{23})\|_\infty\leq\frac{10}d,\quad\|\zeta_3-(e_{12}+e_{13}-e_{31}-e_{32})\|_\infty\leq\frac{10}d.$$
\end{cor}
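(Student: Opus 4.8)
The corollary is an immediate translation of Lemma~\ref{Lemma_eigM} from matrix language back to operator language, using the fact that $M$ was \emph{defined} to be the matrix of $\MM-\KK$ restricted to $\EE_0$ with respect to the ordered basis $e_{12},e_{23},e_{21},e_{13},e_{31},e_{32}$ (the displayed basis above; note the typo ``$e_{23}$'' appearing twice should read $e_{13}$). The plan is therefore purely bookkeeping: transport each eigenvalue/eigenvector statement of Lemma~\ref{Lemma_eigM} along the linear isomorphism $\RR^6\to\EE_0$ sending the standard basis vector $\mathbf{e}_k$ to the $k$-th basis vector $e_{ij}$.

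First I would record that since $e_{12},e_{13},e_{21},e_{23},e_{31},e_{32}$ are linearly independent in $\RR^\AAA$ (they have disjoint supports, by Lemma~\ref{Lemma_reg} each support is nonempty), the coordinate map $\Phi:\RR^6\to\EE_0$ is a linear isomorphism, and by construction $\Phi^{-1}\circ(\MM-\KK)|_{\EE_0}\circ\Phi = M$. Hence $(\MM-\KK)|_{\EE_0}$ and $M$ have the same eigenvalues with the same multiplicities, and $(\MM-\KK)|_{\EE_0}$ is diagonalizable iff $M$ is; Lemma~\ref{Lemma_eigM} gives diagonalizability and the nonzero eigenvalues $1$, $2d-1$, $\Lambda$, $\Lambda'$. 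The eigenvector $\vecone\in\RR^6$ for eigenvalue $2d-1$ maps under $\Phi$ to $\sum_{i\neq j}e_{ij}=e^*$, giving the claim about the eigenspace of $2d-1$. For $\Lambda$, the two perpendicular eigenvectors $\zeta_2',\zeta_3'\in\RR^6$ map to $\zeta_2:=\Phi(\zeta_2')$ and $\zeta_3:=\Phi(\zeta_3')$, which are eigenvectors of $(\MM-\KK)|_{\EE_0}$ with eigenvalue $\Lambda$.

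It remains to convert the two bounds. The estimate $\|\zeta_2'-(1,1,-1,-1,0,0)^T\|_\infty\le 10/d$ in coordinates says each coordinate of $\zeta_2'$ differs from the corresponding entry of $(1,1,-1,-1,0,0)^T$ by at most $10/d$; applying $\Phi$, the vector $(1,1,-1,-1,0,0)^T$ becomes $e_{12}+e_{13}-e_{21}-e_{23}$, and I must check that the $\ell_\infty$-distance is preserved. Since the supports of the $e_{ij}$ are pairwise disjoint and each $e_{ij}$ is $0$-$1$ valued, for any coefficient vector $c\in\RR^6$ we have $\|\Phi(c)\|_\infty=\max_k |c_k|$; applying this to $c=\zeta_2'-(1,1,-1,-1,0,0)^T$ gives $\|\zeta_2-(e_{12}+e_{13}-e_{21}-e_{23})\|_\infty\le 10/d$, and similarly for $\zeta_3$. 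The orthogonality $\scal{\zeta_2}{\zeta_3}=0$ and the norm equality $\|\zeta_2\|=\|\zeta_3\|$ do \emph{not} transfer for free, because $\Phi$ is not an isometry for the standard inner product — but here it is, up to a common scalar: $\scal{e_{ij}}{e_{i'j'}}=0$ unless $(i,j)=(i',j')$, in which case it equals $|{\rm supp}(e_{ij})|=d|V_i|$ by Lemma~\ref{Lemma_reg}, and $|V_i|=n/3$ is the same for all $i$ (this uses that the color classes have equal size; in the $(d,\eps)$-regular setting this follows from {\bf R1}, and in $G_{n,d,3}$ it holds by construction). Hence $\scal{\Phi(c)}{\Phi(c')}=\tfrac{dn}{3}\scal{c}{c'}$, so orthogonality and equal norms carry over verbatim from $\zeta_2',\zeta_3'$ to $\zeta_2,\zeta_3$. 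I don't anticipate a genuine obstacle; the only point requiring care is this inner-product rescaling, i.e.\ confirming all six basis vectors $e_{ij}$ have equal squared norm $dn/3$ so that the change of basis is conformal.
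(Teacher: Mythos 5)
Your proposal is correct and follows the paper's (implicit) argument exactly: the corollary is obtained from Lemma~\ref{Lemma_eigM} by transporting eigenvalues and eigenvectors along the coordinate isomorphism $\RR^6\to\EE_0$, and the paper itself offers no further proof beyond the remark preceding the corollary. You have in fact supplied the bookkeeping the paper leaves implicit — correctly identifying the typo in the listed basis, and carefully verifying that the disjoint supports and equal sizes $|{\rm supp}(e_{ij})|=dn/3$ make the change of basis both $\ell_\infty$-preserving and conformal, so the $10/d$ bounds, orthogonality, and $\|\zeta_2\|=\|\zeta_3\|$ all carry over.
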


Corollary~\ref{Cor_eigM} describes the operation of $\MM-\KK$ on $\EE_0$ completely.
Therefore, as a next step we shall analyze how $\MM-\KK$ operates on $\EE_0^{\perp}$.
More precisely, our goal is to show that restricted to $\EE_0^{\perp}$ the norm of $\MM-\KK$ is significantly smaller than $\Lambda$.
To this end, we observe that the operator $\KK$ merely permutes the coordinates.
Consequently,
	\begin{equation}\label{eqKKbound}
	\|\KK\|\leq1.
	\end{equation}

To to bound the norm of $\MM$ on $\EE_0^{\perp}$, we consider three subspaces of $\EE_0^{\perp}$.
The first subspace $S$ consists of all vectors $\xi\in \EE_0^{\perp}$ such that the value $\xi_{v\ra w}$
only depends on the ``start vertex'' $v$; in symbols,
	$$S=\{\xi\in \EE_0^{\perp}:\forall v\ra w,v\ra u\in\AAA:\xi_{v\ra w}=\xi_{v\ra u}\}.$$
If $\xi\in S$ and $v\in V$, then we let $\xi_{v\ra}=\xi_{v\ra w}$ for any $w\in N(v)$, i.e., $\xi_{v\ra}$ is the ``outgoing value'' of $v$.

The second subspace $T$ consists of all $\xi\in \EE_0^{\perp}$ such that $\xi_{u\ra v}$ depends
only on the ``target vertex'' $v$, i.e.,
	$$T=\{\xi\in \EE_0^{\perp}:\forall u\ra v,w\ra v\in\AAA:\xi_{u\ra v}=\xi_{w\ra v}\}.$$
For $\xi\in T$ and $v\in V$ we let $\xi_{\ra v}=\xi_{u\ra v}$ for any $u\in N(v)$, i.e., $\xi_{\ra v}$ signifies the ``incoming value'' of $v$.

Furthermore, the third subspace $U$ consists of all $\xi$ such that
for any vertex the sum of the ``incoming'' values equals $0$:
	$$U=\cbc{\xi\in \EE_0^{\perp}:\forall v\in V:\sum_{u\in N(v)}\xi_{u\ra v}=0}.$$

\begin{lem}\label{Lemma_STU}
\begin{enumerate}
\item We have $U=\mbox{Kern}(\MM)\cap \EE_0^{\perp}$.
\item Moreover, if $\xi\in T$, then $(\MM\xi)_{v\ra w}=2d\xi_{\ra v}$ for all $v\ra w\in\AAA$.
	In particular, $\MM\xi\in S$.
\item Furthermore, $T\perp U$, and $\EE_0^{\perp}=T\oplus U$.
\end{enumerate}
\end{lem}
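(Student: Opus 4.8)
The plan is to verify the three assertions directly from the definitions of $S$, $T$, $U$ and the operator $\MM$, using only the $(d,0.01)$-regularity of $G$ (specifically Lemma~\ref{Lemma_reg}, which gives $|N(v)\cap V_j|=d$ and $|N(v)|=2d$).

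For (1): The inclusion $U\subseteq\mathrm{Kern}(\MM)\cap\EE_0^\perp$ is immediate, since $(\MM\xi)_{v\ra w}=\sum_{u\in N(v)}\xi_{u\ra v}$ vanishes for every directed edge $v\ra w$ precisely when the incoming sum at each $v$ is $0$. Conversely, if $\xi\in\EE_0^\perp$ with $\MM\xi=0$, then for each fixed $v$ and any $w\in N(v)$ the coordinate $(\MM\xi)_{v\ra w}=\sum_{u\in N(v)}\xi_{u\ra v}=0$, so $\xi\in U$. This is essentially a restatement.

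For (2): Let $\xi\in T$, so $\xi_{u\ra v}$ depends only on $v$; write it as $\xi_{\ra v}$. Then $(\MM\xi)_{v\ra w}=\sum_{u\in N(v)}\xi_{u\ra v}=|N(v)|\cdot\xi_{\ra v}=2d\,\xi_{\ra v}$ by Lemma~\ref{Lemma_reg}. Since the right-hand side depends only on the start vertex $v$, the resulting vector lies in $S$ — but one must also check it lies in $\EE_0^\perp$. This follows because $\MM(\EE_0^\perp)\subseteq\EE_0^\perp$: indeed $\MM^T\EE_0\subseteq\EE_0$ by Lemma~\ref{Lemma_MMT}, so for $\eta\in\EE_0$ and $\xi\in\EE_0^\perp$ we get $\scal{\MM\xi}{\eta}=\scal{\xi}{\MM^T\eta}=0$.

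For (3): First, $T\perp U$: take $\xi\in T$, $\zeta\in U$. Then $\scal{\xi}{\zeta}=\sum_{v\ra w}\xi_{\ra v}\zeta_{v\ra w}=\sum_{v\in V}\xi_{\ra v}\sum_{u\in N(v)}\zeta_{u\ra v}=0$ — wait, I must be careful with index conventions: sum over all $u\ra v\in\AAA$ gives $\sum_{v}\xi_{\ra v}\sum_{u\in N(v)}\zeta_{u\ra v}=0$ since $\zeta\in U$. For the decomposition $\EE_0^\perp=T\oplus U$: given the ambient space $\RR^\AAA$, consider the orthogonal projection that averages each vector over incoming edges. Concretely, for arbitrary $\gamma\in\RR^\AAA$ define $\pi(\gamma)_{u\ra v}=\frac{1}{2d}\sum_{u'\in N(v)}\gamma_{u'\ra v}$; then $\pi(\gamma)$ depends only on the target, $\gamma-\pi(\gamma)$ has zero incoming sum at every vertex, and the two pieces are orthogonal (same computation as $T\perp U$). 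So $\RR^\AAA=T'\oplus U'$ where $T'$ is all target-dependent vectors and $U'$ is the zero-incoming-sum space; restricting to $\EE_0^\perp$ and noting that $T=T'\cap\EE_0^\perp$ and $U=U'\cap\EE_0^\perp$ by definition, one checks that $\pi$ maps $\EE_0^\perp$ into itself (again via $\MM^T\EE_0\subseteq\EE_0$, since $\pi$ is a scalar multiple of $\MM$ composed with a symmetrization that preserves $\EE_0$), giving the claimed direct sum.

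The main obstacle is the bookkeeping in part (3) — specifically, confirming that the projection $\pi$ restricts to $\EE_0^\perp$, so that the decomposition $T'\oplus U'$ of the larger space descends correctly to $\EE_0^\perp$. Everything else is a one-line computation using $|N(v)|=2d$; the only subtlety is keeping the directed-edge index conventions straight in the inner-product manipulations.
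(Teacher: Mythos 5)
Your proof is correct and follows essentially the same route as the paper: verify (1) and (2) by unwinding the definitions and Lemma~\ref{Lemma_reg}, show $T\perp U$ by the incoming-sum computation, and obtain the decomposition by projecting onto target-dependent vectors via $\gamma\mapsto\pi(\gamma)$ with $\pi(\gamma)_{u\ra v}=\frac{1}{2d}\sum_{u'\in N(v)}\gamma_{u'\ra v}$, which is exactly the paper's $\eta$. The one place you are actually more careful than the paper is in checking that the pieces of the decomposition stay inside $\EE_0^\perp$: the paper asserts that the averaged vector ``lies in $T$'' without verifying membership in $\EE_0^\perp$, whereas you note (correctly) that $\pi=\frac{1}{2d}\KK\MM$ and invoke Lemma~\ref{Lemma_MMT} (both $\MM$ and $\KK$ preserve $\EE_0$ and hence $\EE_0^\perp$) to close this gap; strictly speaking $\KK$ is the edge-reversal operator rather than a ``symmetrization,'' but the argument is unaffected.
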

\begin{proof}
The first assertion follows immediately from the definition~(\ref{eqDefMMKK}) of $\MM$.
Moreover, if $\xi\in T$, then
	$(\MM\xi)_{v\ra w}=\sum_{u\in N(v)}\xi_{u\ra v}=|N(v)|\xi_{\ra v}=2d\xi_{\ra v}$ due to Lemma~\ref{Lemma_reg},
whence 2.\ follows.
Consequently, if  $\xi\in T$ and $\eta\in U$, then
	$$\scal{\xi}{\eta}=\sum_{u\ra v\in\AAA}\xi_{u\ra v}\eta_{u\ra v}=
		2d\sum_{v\in V}\xi_{\ra v}\sum_{u\in N(v)}\eta_{u\ra v}=0,$$
whence $T\perp U$.
Furthermore, for any $\gamma\in \EE_0^{\perp}$ the vector $\eta$ with entries
	$$\eta_{v\ra w}=\frac1{2d}\sum_{u\in N(w)}\xi_{u\ra w}$$
lies in $T$, because the sum on the r.h.s.\ is independent of $v$.
In addition, $\xi=\gamma-\eta$ satisfies
	$$\sum_{u\in N(v)}\xi_{u\ra v}=\brk{\sum_{u\in N(v)}\gamma_{u\ra v}}-2d\eta_{\ra v}=0\qquad\mbox{for any }v\in V,$$
so that $\xi\in U$.
Hence, any $\gamma\in \EE_0^{\perp}$ can be written as $\gamma=\eta+\xi$ with $\eta\in T$ and $\xi\in U$,
i.e., $\EE_0^{\perp}=T\oplus U$.
\end{proof}

\noindent
By now we have all the prerequisites to analyze the operation of $\MM$ on $\EE_0^\perp$.

\begin{lem}\label{Lemma_boundEperp}
If $\xi\in \EE_0^{\perp}$, then $\|\MM^2\xi\|\leq0.01d^2\|\xi\|$.
\end{lem}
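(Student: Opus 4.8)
The plan is to use the decomposition $\EE_0^{\perp}=T\oplus U$ from Lemma~\ref{Lemma_STU}, together with the structural facts $U=\mathrm{Kern}(\MM)\cap\EE_0^{\perp}$ and $\MM T\subset S$. First I would write an arbitrary $\xi\in\EE_0^{\perp}$ as $\xi=\eta+\rho$ with $\eta\in T$, $\rho\in U$. Then $\MM\xi=\MM\eta$ since $\rho\in\mathrm{Kern}(\MM)$, and by part~2 of Lemma~\ref{Lemma_STU} the vector $\MM\eta$ lies in $S$ with $(\MM\eta)_{v\ra w}=2d\,\eta_{\ra v}$. Applying $\MM$ again gives $\MM^2\xi=\MM(\MM\eta)$, and since $\MM\eta\in S$ I need to understand how $\MM$ acts on vectors in $S$: for $\zeta\in S$ with outgoing values $\zeta_{v\ra}$ one has $(\MM\zeta)_{v\ra w}=\sum_{u\in N(v)}\zeta_{u\ra}$, which depends only on $v$, so $\MM S\subset S$ as well, and on the level of the ``vertex vectors'' $\MM$ acts on $S$ essentially as the adjacency matrix $A(G)$ (up to the bookkeeping that each vertex has $2d$ outgoing copies).

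Next I would make the identification between $S$ (resp.\ the relevant part of $T$) and $\RR^V$ precise, and track how the norms scale: a vector in $S$ determined by a vertex-vector $s\in\RR^V$ has squared norm $2d\|s\|^2$ (each vertex appears in $2d$ outgoing arcs), and similarly for $T$ with the incoming arcs. Under this identification, $\MM$ restricted to $S\cup T$ corresponds to $A(G)$ acting on $\RR^V$ (with the factor $2d$ from part~2 of Lemma~\ref{Lemma_STU} appearing once). The key point is then that $\xi\in\EE_0^{\perp}$ forces the associated vertex-vector to be orthogonal to all of $\vecone_{V_1},\vecone_{V_2},\vecone_{V_3}$: the space $\EE_0$ was spanned by the $e_{ij}$, and a $T$-vector whose incoming value is constant on each color class is a combination of the $e_{ij}$, so the component of $\eta$ that survives orthogonality to $\EE_0$ has incoming vertex-vector perpendicular to each $\vecone_{V_i}$. (One must be slightly careful: $\EE_0\subset\EE_0^{\perp\perp}$ is automatic, but I should check that the $\EE_0$-orthogonality of $\xi$ translates exactly into the vertex-vector lying in $(\mathrm{span}\{\vecone_{V_i}\})^{\perp}$; this uses that $e_{ij}$ spans precisely the ``incoming constant on classes'' directions plus symmetric bookkeeping.)

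Once that reduction is in place, condition \textbf{R2} applies directly: if $t\in\RR^V$ is perpendicular to $\vecone_{V_1},\vecone_{V_2},\vecone_{V_3}$, then $\|A(G)t\|\leq 0.01\,d\,\|t\|$ (here $\eps=0.01$). Feeding this through the two applications of $\MM$ gives a bound of the form $\|\MM^2\xi\|\leq(2d)\cdot(\text{norm-conversion factors})\cdot(0.01 d)^{?}\|\xi\|$; I would chase the constants and check that they collapse to exactly $0.01d^2$ — the factor $2d$ from Lemma~\ref{Lemma_STU}~part~2, one factor of $0.01d$ from \textbf{R2} applied once, and the norm-conversion constants between $S$, $T$ and $\RR^V$ should cancel the spurious factors of $2d$. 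A cleaner route may be to apply $\MM$ three times conceptually ($\MM\xi=\MM\eta\in S$, then $\MM^2\xi\in S$, then note $\MM$ on $S$ is again $A(G)$-like) but only two of these applications ``see'' $A(G)$ on the orthogonal complement nontrivially; I would organize the computation so that \textbf{R2} is invoked exactly once and the $2d$ factors are tracked explicitly.

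The main obstacle I anticipate is the bookkeeping in the norm conversions between $\RR^{\AAA}$-vectors lying in $S$ or $T$ and their underlying $\RR^V$-vertex-vectors, combined with making the ``$\EE_0$-orthogonality $\Leftrightarrow$ vertex-vector orthogonal to the $\vecone_{V_i}$'' correspondence airtight — in particular ensuring that the $T$-component $\eta$ of $\xi$ really does have incoming vertex-vector in $(\mathrm{span}\{\vecone_{V_i}\})^{\perp}$, rather than merely something weaker. Getting the constant to come out as exactly $0.01d^2$ (and not, say, $0.02d^2$) requires that these conversion factors cancel precisely, so that is where I would be most careful; everything else is a routine application of Lemma~\ref{Lemma_STU}, the description of $\MM$ on $S$, and condition \textbf{R2}.
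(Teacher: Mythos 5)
Your plan reproduces the paper's proof: decompose $\xi=\xi_T+\xi_U$ via Lemma~\ref{Lemma_STU}, note $\MM\xi_U=0$ and $\MM\xi_T\in S$ with outgoing vertex-vector $\eta'$, read off the second application of $\MM$ on $S$ as $A(G)$ acting on $\eta'$, and check $\eta'\perp\vecone_{V_i}$ from $\xi_T\perp\EE_0$ — the paper does this last step as $\scal{\eta'}{\vecone_{V_i}}=(2d)^{-1}\scal{\xi}{\MM^T(e_{ij}+e_{ik})}=0$ via Lemma~\ref{Lemma_MMT}, which is the adjoint form of your direct observation that the indicator of arcs into $V_i$ lies in $\EE_0$. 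Your suspicion about the constants is well founded: an honest chase gives $\|\MM^2\xi\|\le 2d\cdot(\eps d)\|\xi\|=0.02d^2\|\xi\|$ at $\eps=0.01$, not $0.01d^2$ (the paper's proof silently writes $0.001d$ where {\bf R2} actually yields $0.01d$), though the extra factor of $2$ is harmless where the lemma is invoked, since Corollary~\ref{Cor_Startgame} only needs $\|\LL^2\gamma\|\le(0.3\lambda)^2\|\gamma\|$ and $(0.3\lambda)^2\approx0.0225d^2$ for large $d$.
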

\begin{proof}
Let $\xi\in \EE_0^\perp$.
By the third part of Lemma~\ref{Lemma_STU} there is a decomposition $\xi=\xi_T+\xi_U$ such that $\xi_T\in T$ and $\xi_U\in U$.
Furthermore, the first part of Lemma~\ref{Lemma_STU} entails that $\MM\xi=\MM\xi_T$.
Therefore, we may assume without loss of generality that $\xi=\xi_T\in T$.
Hence, the second part of of Lemma~\ref{Lemma_STU} implies that
	\begin{equation}\label{eqboundEperp1}
	\|\xi'\|=2d\|\xi\|
	\end{equation}
and $\xi'=\MM\xi\in S$.
Consequently, letting $\xi''=\MM\xi'=\MM^2\xi$, we obtain
	\begin{equation}\label{eqboundEperp2}
	\xi''_{v\ra w}=\sum_{u\in N(v)}\xi'_{u\ra v}=\sum_{u\in N(v)}\xi'_{u\ra}.
	\end{equation}
Since the r.h.s.\ of~(\ref{eqboundEperp2}) is independent of $w$, we conclude $\xi''\in S$.

In order to bound $\|\xi''\|=\|\MM^2\xi\|$, we shall express the sum on the r.h.s.\ of~(\ref{eqboundEperp2}) in terms of the adjacency matrix $A(G)$.
To this end, consider the two vectors
	\begin{eqnarray*}
	\eta'=(\eta'_v)_{v\in V}\in\RR^V&\mbox{ with }&\eta'_v=\xi'_{v\ra},\\
	\eta''=(\eta''_v)_{v\in V}\in\RR^V&\mbox{ with }&\eta''_v=\xi''_{v\ra}
	\end{eqnarray*}
for all $v\in V$.
Then
	\begin{eqnarray}\label{eqboundEperp3}
	\|\xi'\|^2&=&\sum_{v\ra w\in\AAA}{\xi'_{v\ra w}}^2=2d\sum_{v\in V}{\xi'_{v\ra}}^2=2d\|\eta'\|^2,\quad\mbox{and analogously }\\
	\|\xi''\|^2&=&2d\|\eta''\|^2.\label{eqboundEperp4}
	\end{eqnarray}
Furthermore, (\ref{eqboundEperp2}) implies that $\eta_v''=\sum_{u\in N(v)}\eta_u'$ for all $v\in V$, i.e.,
	\begin{equation}\label{eqboundEperp5}
	\eta''=A(G)\eta'.
	\end{equation}
Combining~(\ref{eqboundEperp1}), (\ref{eqboundEperp3}), (\ref{eqboundEperp4}), and~(\ref{eqboundEperp5}), we obtain
	\begin{equation}\label{eqboundEperp6}
	\|\MM^2\xi\|=\|\xi''\|=\frac{2d\|A(G)\eta'\|}{\|\eta'\|}\cdot\|\xi\|.
	\end{equation}

Hence, we finally need to bound $\|A(G)\eta'\|$.
To this end, we shall employ our assumption that $G$ is $(d,0.01)$-regular;
namely, condition {\bf R2} from the definition of $(d,0.01)$-regularity entails that
 $\|A(G)\zeta\|\leq0.001d\|\zeta\|$ for all $\zeta\perp\vecone_{V_1},\vecone_{V_2},\vecone_{V_3}$.
Thus, we need to show that $\eta'\perp\vecone_{V_i}$ for $i=1,2,3$.
Assuming w.l.o.g.\ that $i=1$, we have
	\begin{eqnarray}\nonumber
	\scal{\eta'}{\vecone_{V_1}}&=&\sum_{v\in V_1}\xi'_{v\ra}
		=(2d)^{-1}\sum_{v\ra w\in\AAA:v\in V_1}\xi'_{v\ra w}=(2d)^{-1}\scal{\xi'}{e_{12}+e_{13}}\\
		&=&(2d)^{-1}\scal{\MM\xi}{e_{12}+e_{13}}=(2d)^{-1}\scal{\xi}{\MM^T(e_{12}+e_{13})}.
		\label{eqboundEperp7}
	\end{eqnarray}
Further, as $\MM^T(e_{12}+e_{13})\in\EE_0$ by Lemma~\ref{Lemma_MMT}, while $\xi\in \EE_0^{\perp}$ by our assumption,
(\ref{eqboundEperp7}) implies that $\scal{\eta'}{\vecone_{V_1}}=0$.
Consequently, we obtain that $\|A(G)\eta'\|\leq0.001d\|\eta'\|$, whence (\ref{eqboundEperp6}) yields the assertion.
\end{proof}

\medskip\noindent
\emph{Proof of Proposition~\ref{Prop_dominant}.}
Combining Corollary~\ref{Cor_eigM} with the tensor product representation~(\ref{eqStartgame2}) of $\LL$,
we conclude that the six vectors
	\begin{equation}\label{eqzetatensors}
		\zeta_j^1=\bc{\begin{array}{c}1\\0\\0\end{array}}\tensor\zeta_j,\
		\zeta_j^2=\bc{\begin{array}{c}0\\1\\0\end{array}}\tensor\zeta_j,\	
		\zeta_j^3=\bc{\begin{array}{c}0\\0\\1\end{array}}\tensor\zeta_j\qquad(j=2,3)
	\end{equation}
are eigenvectors of $\LL$ with eigenvalue $\lambda=-\frac12\Lambda$.
In addition, the tensor representation~(\ref{eqzetatensors}) of the vectors $\zeta_j^a$  immediately
implies the symmetry statement~(\ref{eqzetasymm}), while~(\ref{eqzetanorm}) follows from Corollary~\ref{Cor_eigM}.
Moreover, once more by Corollary~\ref{Cor_eigM} the three vectors
	$$e^1=\bc{\begin{array}{c}1\\0\\0\end{array}}\tensor e^*,\
		e^2=\bc{\begin{array}{c}0\\1\\0\end{array}}\tensor e^*,\
		e^3=\bc{\begin{array}{c}0\\0\\1\end{array}}\tensor e^*$$
are eigenvalues with eigenvector $-\frac12(2d-1)=\frac12-d$,
and all other eigenvalues of $\LL$ restricted to $\EE$ are $\leq\frac12$ in absolute value.
In addition, Lemma~\ref{Lemma_MMT} shows in combination with~(\ref{eqStartgame2}) that $\LL\EE,\LL^T\EE\subset\EE$.
Finally, Lemma~\ref{Lemma_boundEperp} implies in combination with~(\ref{eqStartgame2})
that $\|\LL^2\xi\|\leq0.01d^2\|\xi\|$ for all $\xi\perp\EE$.
\qed


\subsection{Proof of Proposition~\ref{Prop_initial}}\label{Sec_initial}

Before we get to the proof, let us briefly discuss why the assertion (i.e., Proposition~\ref{Prop_initial}) is plausible.
In fact, let us point out that the vector $\Delta(0)$ is easily seen to satisfy {\bf F2} with probability $\Omega(1)$.
For each of the inner products $\scal{\Delta(0)}{\zeta_i^a}$ is a sum of $n$ independent random variables, whence the
central limit theorem implies that
	$$\sqrt{n}\|\Delta(0)\|^{-1}\|\zeta_i^a\|^{-1}\scal{\Delta(0)}{\zeta_i^a}$$
is asymptotically normal
(the factor $\sqrt{n}\|\Delta(0)\|^{-1}\|\zeta_i^a\|^{-1}$, which is independent of the random vector $\Delta(0)$,
is needed to ensure that mean and variance are of order $\Theta(1)$).
In fact, since the vectors $(\zeta_i^a)_{a=1,2,3;\,i=2,3}$ are mutually perpendicular, the \emph{joint} distribution
of the random variables
	$$(\sqrt{n}\|\Delta(0)\|^{-1}\|\zeta_i^a\|^{-1}\scal{\Delta(0)}{\zeta_i^a})_{i=2,3;a=1,2,3}$$
is asymptotically a (multivariate) Gaussian.
Therefore, the probability that $\Delta(0)$ satisfies {\bf F2} is $\Omega(0)$.

However, once we condition on $\Delta(0)$ satisfying {\bf F1},
the entries of $\Delta(0)$ are not independent anymore, whence the above argument does not
yield a bound on the probability that $\Delta(0)$ satisfies both {\bf F1} and {\bf F2}.
Nonetheless, the dependence of the entries of $\Delta(0)$ is weak enough
to allow for an elementary direct analysis.
We begin with bounding the probability that $\Delta(0)$ satisfies {\bf F1}.
To this end, we define a partition $(W_1,W_2,W_3)$ of $V$ by letting
	$$W_i=\{v\in V:\Delta_{v\ra w}^i=\delta\mbox{ for all }w\in N(v)\};$$
in other words, $W_i$ consists of all vertices for which the random number $a$ chosen in Step~1 of \texttt{BPCol} was equal to $i$.

\begin{lem}\label{Lemma_F1}
The probability that $\Delta(0)$ satisfies {\bf F1} is $\Omega(n^{-1})$.
\end{lem}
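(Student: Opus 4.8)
Recall that $\Delta(0)$ satisfies {\bf F1} iff $\Delta(0)\perp e^a$ for $a=1,2,3$, where $e^a=\sum_{i\neq j}e_{ij}^a$. Since $e^a$ has support only on coordinates $(v\ra w,a)$ with $v\ra w\in\AAA$, and all such entries of $e^a$ equal $1$, we have $\scal{\Delta(0)}{e^a}=\sum_{v\ra w\in\AAA}\Delta_{v\ra w}^a(0)$. By Step~1 of \texttt{BPCol}, for each $v$ the value $\Delta_{v\ra w}^a(0)$ is the same for all $w\in N(v)$, equal to $\delta$ if $v\in W_a$ and $-\delta/2$otherwise; and by Lemma~\ref{Lemma_reg} every vertex has exactly $|N(v)|=2d$ outgoing arcs. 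Hence $\scal{\Delta(0)}{e^a}=2d\delta\bigl(|W_a|-\tfrac12(n-|W_a|)\bigr)=3d\delta\bigl(|W_a|-n/3\bigr)$, using $\sum_b|W_b|=n$. So {\bf F1} holds iff $|W_a|=n/3$ for all $a=1,2,3$ simultaneously, i.e.\ iff $(W_1,W_2,W_3)$ is a perfectly balanced partition.

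**Plan.** The plan is therefore to compute the probability that a sequence of $n$ i.i.d.\ uniform choices from $\{1,2,3\}$ yields exactly $n/3$ of each value (here $n$ is divisible by $3$; if not, {\bf F1} is simply impossible, but for the $G_{n,d,3}$ application $|V|=3n$ is a multiple of $3$, and in general one can note that the theorem's lower bound $n_0(d)$ can be taken along the divisible subsequence, or adjust the initialization — this edge case should be remarked on but is not the heart of the matter). That probability is the multinomial coefficient
\[
\pr[\text{{\bf F1}}]=\binom{n}{n/3,\,n/3,\,n/3}3^{-n}=\frac{n!}{((n/3)!)^3}\,3^{-n}.
\]
First I would apply Stirling's formula $k!=\sqrt{2\pi k}\,(k/e)^k(1+O(1/k))$ to each factorial. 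The exponential terms cancel exactly: $n^n/((n/3)^{n/3})^3 = n^n/(n/3)^n = 3^n$, which precisely offsets the $3^{-n}$. What survives is the ratio of the $\sqrt{2\pi k}$ prefactors, $\sqrt{2\pi n}\big/\bigl(\sqrt{2\pi n/3}\bigr)^3 = \sqrt{2\pi n}\cdot (2\pi n/3)^{-3/2} = \frac{3\sqrt{3}}{2\pi n}$. Thus $\pr[\text{{\bf F1}}] = \frac{3\sqrt3}{2\pi n}(1+O(1/n))=\Theta(n^{-1})$, which is exactly the claimed $\Omega(n^{-1})$.

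**Main obstacle.** Honestly there is no serious obstacle here — this is a one-line Stirling estimate once one observes that {\bf F1} reduces to "the random partition is balanced." The only point requiring a moment's care is the reduction itself: verifying that $\scal{\Delta(0)}{e^a}$ depends on $\Delta(0)$ only through the cardinality $|W_a|$, which follows from the arc-count identity $|N(v)|=2d$ of Lemma~\ref{Lemma_reg} together with the structure of Step~1 (same outgoing message to all neighbors, and the specific values $\delta$ and $-\delta/2$ chosen so that a non-selected color contributes $-\delta/2$ while the selected one contributes $\delta$, whence a single vertex in $W_b$ with $b\neq a$ contributes $-\delta/2$ per arc and a vertex in $W_a$ contributes $\delta$ per arc). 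One should also double-check the divisibility caveat noted above; in the regime of the corollary ($|V|=3n$) it is automatic. Everything else is routine.
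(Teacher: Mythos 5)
Your proof is correct and takes essentially the same approach as the paper's: both reduce \textbf{F1} to the event that the random partition $(W_1,W_2,W_3)$ is perfectly balanced, i.e.\ $|W_a|=n/3$ for each $a$, and then estimate the multinomial probability $\binom{n}{n/3,\,n/3,\,n/3}3^{-n}$ via Stirling's formula. The paper only asserts sufficiency of the balanced condition and leaves the Stirling computation implicit, whereas you verify the equivalence $\scal{\Delta(0)}{e^a}=3d\delta(|W_a|-n/3)$ and carry the asymptotics through to $\frac{3\sqrt3}{2\pi n}$ explicitly; this is welcome extra detail but not a genuinely different argument.
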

\begin{proof}
A sufficient condition for $\Delta(0)$ to satisfy {\bf F1} is that $W_1=W_2=W_3=\frac{n}3$.
Moreover, the total number of vectors that can be generated by Step~1 of \texttt{BPCol} equals $3^n$,
out of which $\bink{n}{n/3\,n/3\,n/3}$ yield $W_1=W_2=W_3=\frac{n}3$.
Therefore, the assertion follows from Stirling's formula.
\end{proof}

\emph{In the remainder of this section we condition on the event that $\Delta(0)$ is such that $W_1=W_2=W_3$.}
Thus, $(W_1,W_2,W_3)$, is just a random partition of $V$ into three classes of equal size,
and for all $v\in W_i$, all $j\in\{1,2,3\}\setminus\{i\}$, and all $w\in N(v)$ we have
	$$\Delta_{v\ra w}^i=\delta,\quad\Delta_{v\ra w}^j=-\frac{\delta}2.$$

\begin{lem}\label{Lemma_multis}
For any constant $c_1>0$ there exists a constant $c_2>0$ such that the following holds.
If $(s_i^a)_{i,a=1,2,3}$ are integers of absolute value $|s_i^a|\leq c_1\sqrt{n}$ such
that $\sum_{a=1}^3s_j^a=\sum_{i=1}^3s_i^b=0$ for all $1\leq b,j\leq 3$, then
	$$\pr\brk{\forall 1\leq a,i\leq 3:|V_a\cap W_i|=\frac{n}9+s_i^a}\geq c_2n^{-2}.$$
\end{lem}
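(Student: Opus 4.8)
The plan is to write the probability exactly as a ratio of multinomial coefficients and then estimate it with Stirling's formula. The first step is a combinatorial identity. Conditioned on $W_1=W_2=W_3$, the triple $(W_1,W_2,W_3)$ is a uniformly random equipartition of $V$, so there are $\binom{n}{n/3,n/3,n/3}$ equally likely outcomes (we assume $9\mid n$ for notational convenience). An outcome has $|V_a\cap W_i|=n/9+s_i^a$ for all $a,i$ if and only if, \emph{for each color class $V_a$ separately}, its $n/3$ vertices are split among $W_1,W_2,W_3$ into parts of sizes $n/9+s_1^a,\ n/9+s_2^a,\ n/9+s_3^a$. These are non-negative integers (since $|s_i^a|\le c_1\sqrt n=o(n)$) summing to $n/3$ precisely because $\sum_i s_i^a=0$, and the resulting sets $W_i$ automatically have size $n/3$ because $\sum_a s_i^a=0$. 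As the $V_a$ are disjoint and cover $V$, this yields
\[
\pr\!\left[\forall a,i:\ |V_a\cap W_i|=\tfrac n9+s_i^a\right]
=\binom{n}{n/3,n/3,n/3}^{-1}\prod_{a=1}^3\binom{n/3}{\tfrac n9+s_1^a,\ \tfrac n9+s_2^a,\ \tfrac n9+s_3^a}.
\]

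The second step is a uniform estimate for central multinomial coefficients. Using $N!=\Theta(\sqrt N\,(N/e)^N)$, one checks that for any fixed $m$ and integers $t_1,\dots,t_m$ with $\sum_j t_j=0$ and $\max_j|t_j|\le C\sqrt N$,
\[
\binom{N}{\tfrac Nm+t_1,\dots,\tfrac Nm+t_m}=\Theta\!\left(m^{N}\,N^{(1-m)/2}\right),
\]
with implied constants depending only on $C$ and $m$. Indeed, since each $\tfrac Nm+t_j=\Theta(N)$ we get $\prod_j\sqrt{\tfrac Nm+t_j}=\Theta(N^{m/2})$; and a Taylor expansion of $\sum_j(\tfrac Nm+t_j)\ln(\tfrac Nm+t_j)$ about $N/m$, using $\sum_j t_j=0$, gives $N^N\big/\prod_j(\tfrac Nm+t_j)^{\tfrac Nm+t_j}=m^{N}\exp(-\tfrac{m}{2N}\sum_j t_j^2+o(1))$, whose exponent lies in $[-\tfrac12 C^2m^2,\,0]$ and hence contributes only a bounded factor.

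Finally I would apply this with $m=3$. Each of the three numerator factors has ``top'' $N=n/3$ with deviations $|s_i^a|\le c_1\sqrt n=c_1\sqrt3\cdot\sqrt{n/3}$, so it equals $\Theta(3^{n/3}n^{-1})$ with constants depending only on $c_1$, and the product equals $\Theta(3^{n}n^{-3})$; the denominator is the exactly central case $t_j\equiv 0$, equal to $\Theta(3^{n}n^{-1})$. Dividing, the probability is $\Theta(n^{-2})$, so a constant $c_2=c_2(c_1)>0$ with the asserted bound exists for all large $n$ (shrinking $c_2$ to absorb the finitely many small $n$). There is no deep obstacle here; the one point requiring care is the \emph{uniformity} of the estimate in the second step over all admissible $(s_i^a)$, which reduces to the elementary bound $\sum_a(s_i^a)^2=O(c_1^2 n)$ controlling the Gaussian-type correction factor $\exp(-\Theta(\sum_{i,a}(s_i^a)^2/n))$.
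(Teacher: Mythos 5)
Your proposal is correct and follows essentially the same approach as the paper: the same exact expression of the probability as a ratio of multinomial coefficients (a product over color classes in the numerator, $\binom{n}{n/3,n/3,n/3}$ in the denominator), and the same Stirling-based estimate showing the ratio is $\Theta(n^{-2})$ uniformly over $|s_i^a|\le c_1\sqrt n$. The only stylistic difference is that you package the Stirling estimate as a reusable asymptotic for near-central multinomial coefficients, whereas the paper carries out the factorial bounds inline.
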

\begin{proof}
The sets $W_1,W_2,W_3$ are randomly chosen mutually disjoint subsets of $V$ of cardinality $n/3$ each,
whereas $V_1,V_2,V_3$ are fixed subsets of $V$.
Therefore, the total number of ways to choose $W_1,W_2,W_3$ is given by the multinomial coefficient
	$\bink{n}{n/3,n/3,n/3}$;
by Stirling's formula,
	\begin{equation}\label{eqmultis1}
	\bink{n}{n/3,n/3,n/3}\leq 10 n^{-1} 3^{n}.
	\end{equation}
Moreover, the number of ways to choose $W_1,W_2,W_3$ such that $|V_a\cap W_i|=s_i^a$ equals
	\begin{equation}\label{eqmultis2}
	\prod_{a=1}^3\bink{n/3}{n/9+s_1^a,n/9+s_2^a,n/9+s_3^a}
	\end{equation}
(because the $a$'th factor on the r.h.s.\ equals the number of ways to partition $V_a$ into three pieces $V_a\cap W_1$, $V_a\cap W_2$, $V_a\cap W_3$
of the desired sizes).
Combining~(\ref{eqmultis1}) and~(\ref{eqmultis2}) with Stirling's formula, we get
	\begin{eqnarray}\nonumber
	\pr\brk{\forall 1\leq a,i\leq 3:|V_a\cap W_i|=\frac{n}9+s_i^a}&\geq&\frac{n(n/3)!^3}{10\cdot3^n\prod_{1\leq i,a\leq3}(n/9+s_i^a)!}\\
		&\geq&\frac{n^{5/2+n}}{10\cdot (9\mathrm{e})^n\prod_{1\leq i,a\leq3}(n/9+s_i^a)!}.
		\label{eqmultis3}
	\end{eqnarray}
Furthermore, once more due to Stirling's formula,
	\begin{eqnarray}
	(n/9+s_i^a)!&\leq&\exp(-n/9-s_i^a)(n/9+s_i^a)^{n/9+s_i^a}\sqrt{n}\nonumber\\
		&=&\exp(-n/9-s_i^a)(n/9)^{n/9+s_i^a}(1+9s_i^a/n)^{n/9+s_i^a}\sqrt{n}\nonumber\\
		&\leq&\exp(-n/9+9{s_i^a}^2/n)(n/9)^{n/9+s_i^a}.
		\label{eqmultis4}
	\end{eqnarray}
Since we are assuming that $s_i^a\leq c_1\sqrt{n}$ and $\sum_{i=1}^3s_i^a=0$, (\ref{eqmultis4}) entails that
	\begin{equation}\label{eqmultis5}
	\prod_{1\leq i,a\leq 3}(n/9+s_i^a)!\leq (n/9\mathrm{e})^n n^{9/2}\exp(9\sum_{a,i}{s_i^a}^2/n)\leq c_2'(n/9\mathrm{e})^n n^{9/2}
	\end{equation}
for a bounded number $c_2'$ that depends only on $c_1$.
Finally, plugging~(\ref{eqmultis5}) into~(\ref{eqmultis3}) and cancelling, we obtain the assertion.
\end{proof}

\begin{cor}\label{Cor_multis}
For any two constants $c_3,\beta>0$ there exists a constant $c_4>0$ such that the following holds.
If $(t_i^a)_{i,a=1,2,3}$ are numbers of absolute value $|t_i^a|\leq c_3$ such
that $\sum_{a=1}^3t_j^a=\sum_{i=1}^3t_i^b=0$ for all $1\leq b,j\leq 3$, then
	$$\pr\brk{\forall 1\leq a,i\leq 3:|n^{-\frac12}(|V_a\cap W_i|-\frac{n}9)-t_i^a|\leq\beta}\geq c_4.$$
\end{cor}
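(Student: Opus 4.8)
The plan is to deduce Corollary~\ref{Cor_multis} from Lemma~\ref{Lemma_multis} by a summation (i.e.\ ``integration'') argument: the corollary asks for the probability of a whole \emph{range} of configurations $(|V_a\cap W_i|)_{a,i}$, which decomposes into a union of the single-configuration events controlled by the lemma. First I would fix the constants $c_3,\beta>0$ and, for a given admissible family $(t_i^a)$ with $|t_i^a|\le c_3$ and $\sum_a t_j^a=\sum_i t_i^b=0$, describe the event in question as
\[
  \EE = \bigcup_{(s_i^a)} \cbc{\forall a,i:\ |V_a\cap W_i|=\tfrac n9+s_i^a},
\]
the union ranging over all integer tuples $(s_i^a)$ satisfying the two linear constraints $\sum_a s_j^a=\sum_i s_i^b=0$ (which hold automatically since all row/column sums of $(|V_a\cap W_i|)$ equal $n/3$) and $|n^{-1/2}s_i^a - t_i^a|\le\beta$ for all $a,i$. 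The events in this union are pairwise disjoint, so $\pr[\EE]=\sum_{(s_i^a)}\pr[\dots]$.

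Next I would lower-bound the number of tuples $(s_i^a)$ in the union. The linear constraints leave $4$ free coordinates among the nine $s_i^a$ (the constraint matrix of a $3\times3$ ``doubly-balanced'' array has rank $5$), and for each of these free coordinates the condition $|n^{-1/2}s_i^a-t_i^a|\le\beta$ allows an interval of length $2\beta\sqrt n$, hence $\Omega_\beta(\sqrt n)$ admissible integer values; the remaining $5$ coordinates are then determined and one checks they also satisfy $|n^{-1/2}s_i^a-t_i^a|\le c\beta$ for a constant $c$ (sums of at most three terms each within $\beta$ of the corresponding $t$). Shrinking $\beta$ at the outset if necessary, this yields at least $c_5(\beta)\,n^{2}$ valid tuples, each with $|s_i^a|\le (c_3+\beta)\sqrt n=:c_1\sqrt n$. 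Applying Lemma~\ref{Lemma_multis} with this $c_1$ gives a constant $c_2>0$ with $\pr[\forall a,i:|V_a\cap W_i|=n/9+s_i^a]\ge c_2 n^{-2}$ for every such tuple. Summing over the $\ge c_5 n^2$ disjoint events gives $\pr[\EE]\ge c_5 c_2=:c_4>0$, which is the claim.

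The only genuine bookkeeping is the counting of admissible tuples and the verification that the derived coordinates stay within the required window; I would handle this by explicitly choosing, say, $s_1^1,s_1^2,s_2^1,s_2^2$ freely (within intervals of length $2\beta\sqrt n$ around $t_1^1\sqrt n,\dots$) and solving the linear system for the other five, noting each is an integer combination of the four free ones plus fixed data, so lies within $O(\beta)\sqrt n$ of the corresponding $t\sqrt n$; replacing $\beta$ by a small constant multiple at the start absorbs this. The main (minor) obstacle is simply ensuring the free coordinates can be chosen to be \emph{integers} in those intervals while the five dependent coordinates come out integral as well — this is immediate because all of $n/3$, $n/9$ may be assumed integral (they are, since $|W_i|=n/3$ and the lemma already presupposes $n/9+s_i^a\in\ZZ$), so once the free $s$'s are integers the constraints $\sum_a s_j^a=\sum_i s_i^b=0$ force the rest to be integers too. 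Everything else is an application of Lemma~\ref{Lemma_multis} and disjointness.
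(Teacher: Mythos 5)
Your proof follows the same route as the paper's: partition the target event into the disjoint single-configuration events $\{|V_a\cap W_i|=n/9+s_i^a\ \forall a,i\}$ indexed by integer tuples $(s_i^a)$ satisfying the linear balance constraints and the window condition, count that there are $\Omega_\beta(n^2)$ such tuples (four free coordinates in the doubly-balanced $3\times3$ array), apply Lemma~\ref{Lemma_multis} to each, and sum. The paper states the count $|S|\ge\beta^4 n^2/32$ without spelling out the ``choose a $2\times2$ block freely, then solve'' argument or the need to tighten $\beta$ so the derived coordinates also land in the window; your write-up supplies exactly that bookkeeping, so it is a correct (and slightly more detailed) version of the same proof.
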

\begin{proof}
Let $S$ be the set of all tuples $(s_i^a)_{a,i=1,2,3}$ of integers such that
$|n^{-\frac12}s_j^b-t_j^b|\leq\beta$, and
$\sum_{a=1}^3s_j^a=\sum_{i=1}^3s_i^b=0$ for all $1\leq b,j\leq 3$.
Then $|S|\geq\beta^4 n^2/32$.
Moreover, all $(s_i^a)_{a,i=1,2,3}\in S$ satisfy $|s_j^b|\leq (c_3+1)\sqrt{n}$ ($1\leq b,j\leq 3$).
Therefore, Lemma~\ref{Lemma_multis} (applied with $c_1=c_3+1$) shows that
	\begin{eqnarray*}
	\pr\brk{\forall a,i:|n^{-\frac12}(|V_a\cap W_i|-\frac{n}9)-t_i^a|\leq\beta}
		&\geq&\sum_{(s_i^a)\in S}
			\pr\brk{\forall a,i:|V_a\cap W_i|=\frac{n}9+s_i^a}\\
		&\geq&c_2|S|n^{-2}\geq\beta^4c_2/32,
	\end{eqnarray*}
as desired.
\end{proof}

Since the vector $\Delta(0)$ just represents the partition $W_1,W_2,W_3$, and the vectors
$\sum_{j\not=i}e_{ij}^a$ just represents the coloring $V_1,V_2,V_3$, Corollary~\ref{Cor_multis} easily implies
a result on the joint distribution of the inner products $\scal{\Delta(0)}{\sum_{j\not=i}e_{ij}^a}$.

\begin{cor}\label{Cor_innerproducts}
For any two constants $c_5,\gamma>0$ there exists a constant $c_6>0$ such that the following is true.
Suppose that $(z_i^a)_{1\leq a,i\leq 3}$ are numbers such that $|z_j^b|\leq c_5$
and $\sum_{i=1}^3z_i^b=\sum_{a=1}^3z_j^a=0$ for all $1\leq b,j\leq 3$.
Then
	$$\pr\brk{\forall a,i:|z_i^a-\frac{\scal{\Delta(0)}{\sum_{j\not=i}e_{ij}^a}}{\|\Delta(0)\|\|\zeta_i^a\|}\cdot\sqrt{n}|\leq\gamma}
		\geq c_6.$$
\end{cor}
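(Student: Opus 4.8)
The plan is to reduce Corollary~\ref{Cor_innerproducts} to Corollary~\ref{Cor_multis} by exhibiting $\scal{\Delta(0)}{\sum_{j\neq i}e_{ij}^a}$ as a fixed (deterministic) multiple of $|V_i\cap W_a|-n/9$. First I would unwind the definitions under the conditioning imposed in this section: for $v\in W_a$ and any $w\in N(v)$ one has $\Delta_{v\ra w}^a=\delta$ and $\Delta_{v\ra w}^b=-\delta/2$ for $b\neq a$, so $\Delta_{v\ra w}^a$ does not depend on $w$. Since $V_i$ is independent, $\sum_{j\neq i}e_{ij}^a$ equals $1$ on exactly the coordinates $(v\ra w,a)$ with $v\in V_i,\ w\in N(v)$ and $0$ elsewhere, and each $v\in V_i$ has precisely $2d$ neighbours by Lemma~\ref{Lemma_reg}, with $|V_i|=n/3$. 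Hence
\begin{equation*}
\scal{\Delta(0)}{\textstyle\sum_{j\neq i}e_{ij}^a}=2d\Big[\delta\,|V_i\cap W_a|-\tfrac\delta2\big(\tfrac n3-|V_i\cap W_a|\big)\Big]=3d\delta\Big(|V_i\cap W_a|-\tfrac n9\Big).
\end{equation*}

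Next I would pin down the normalising factors in the denominator. Counting per-directed-edge contributions and using $|\AAA|=2dn$ gives $\|\Delta(0)\|=\delta\sqrt{3dn}$ exactly. Moreover $\zeta_i^a$ is the tensor of a coordinate unit vector with the vector $\zeta_2\in\EE_0$ from Corollary~\ref{Cor_eigM}, and $\zeta_2$ is an explicit combination $\sum_{p\neq q}c_{pq}e_{pq}$ whose coefficients $c_{pq}$ are independent of $n$ and each within $O(1/d)$ of one of $0,1,-1$; since the $e_{pq}$ have pairwise disjoint supports of size $dn/3$, this yields $\|\zeta_i^a\|=\vartheta\sqrt n$ for a constant $\vartheta=\vartheta(d)$ bounded away from $0$ and $\infty$, the same for all $i,a$ by~(\ref{eqzetanorm}). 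Putting these together, the quantity we must compare with $z_i^a$ is
\begin{equation*}
\frac{\scal{\Delta(0)}{\sum_{j\neq i}e_{ij}^a}}{\|\Delta(0)\|\,\|\zeta_i^a\|}\cdot\sqrt n=\kappa_d\cdot n^{-1/2}\Big(|V_i\cap W_a|-\tfrac n9\Big),\qquad\kappa_d:=\frac{\sqrt{3d}}{\vartheta},
\end{equation*}
a bounded, $n$-independent constant.

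Given targets $(z_i^a)$ as in the statement, I would then set $t_i^a:=z_i^a/\kappa_d$; these satisfy $|t_i^a|\leq c_5/\kappa_d$ and have vanishing row and column sums because those of $(z_i^a)$ vanish. Applying Corollary~\ref{Cor_multis} with $c_3:=c_5/\kappa_d$ and $\beta:=\gamma/\kappa_d$ (its index names matched to ours, so that its ``$|V_a\cap W_i|$'' plays the role of our $|V_i\cap W_a|$) produces a constant $c_4>0$ with $\pr[\forall a,i:\,|n^{-1/2}(|V_i\cap W_a|-n/9)-t_i^a|\leq\gamma/\kappa_d]\geq c_4$; multiplying the event inside the probability through by $\kappa_d$ gives exactly the assertion, with $c_6:=c_4$.

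There is no real obstacle here: the computations are routine. The two points that require a little attention are verifying that $\kappa_d$ really is a fixed finite constant bounded away from $0$ — which is precisely where the description of $\zeta_2$ as a fixed linear combination of the $e_{pq}$'s together with~(\ref{eqzetanorm}) is used — and keeping the index bookkeeping straight when translating between Corollary~\ref{Cor_multis} and the present statement.
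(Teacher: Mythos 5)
Your proposal is correct and follows essentially the same route as the paper's own proof: express the inner product $\scal{\Delta(0)}{\sum_{j\neq i}e_{ij}^a}$ as a deterministic multiple of $|V_i\cap W_a|-n/9$, compute $\|\Delta(0)\|$ and control $\|\zeta_i^a\|$ up to a bounded positive constant, and then invoke Corollary~\ref{Cor_multis} after absorbing that constant into the tolerance $\gamma$. Incidentally, your arithmetic is slightly more careful than what is printed in the paper (which drops a factor of $2$ when summing over the $2d$ out-edges of each $v\in V_i$, giving $\tfrac{3d\delta}{2}(|V_i\cap W_a|-n/9)$ and $\|\Delta(0)\|=\sqrt{3dn/2}\,\delta$ rather than your $3d\delta(|V_i\cap W_a|-n/9)$ and $\sqrt{3dn}\,\delta$), but since only the existence of a bounded positive constant in front of $n^{-1/2}(|V_i\cap W_a|-n/9)$ is used, the discrepancy is immaterial to the conclusion.
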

\begin{proof}
The definition of $\eta(0)$ in Step~1 of \texttt{BPCol} shows that
	\begin{equation}\label{eqinnerproducts0}
	\Delta_{v\ra w}^a(0)=\eta_{v\ra w}^a(0)-\frac13=\left\{\begin{array}{cl}
		\delta&\mbox{ if }v\in W_a,\\
		-\delta/2&\mbox{ otherwise.}\end{array}
		\right.\qquad\mbox{for all }v\ra w\in\AAA.
	\end{equation}
Therefore,
	\begin{equation}\label{eqinnerproducts1}
	\|\Delta(0)\|=\sqrt{3dn/2}\cdot\delta.
	\end{equation}
Moreover, by Proposition~\ref{Prop_dominant} there is a number $0.99\leq c_7\leq1.01$ such that
	\begin{equation}\label{eqinnerproducts2}
	\|\zeta_i^a\|=c_7\|e_{12}^a+e_{13}^a-e_{21}^a-e_{23}^a\|=2c_7\sqrt{dn}.
	\end{equation}
Furthermore, using~(\ref{eqinnerproducts0}),
we can easily compute the scalar product $\scal{\Delta(0)}{\sum_{j\not=i}e_{ij}^a}$ ($1\leq a,i\leq 3$):
	\begin{eqnarray}\nonumber
	\scal{\Delta(0)}{\sum_{j\not=i}e_{ij}^a}&=&
		\sum_{v\ra w\in\AAA:v\in V_i}\Delta_{v\ra w}^a(0)
		=|V_i\cap W_a|\cdot d\delta-|V_i\setminus W_a|\cdot\frac{d\delta}2\\
		&=&\frac{3d\delta}2(|V_i\cap W_a|-n/9)\qquad\mbox{[because $|V_i|=|W_a|=n/3$]}.
		\label{eqinnerproducts3}
	\end{eqnarray}
Combining~(\ref{eqinnerproducts1}), (\ref{eqinnerproducts2}), and~(\ref{eqinnerproducts3}), we conclude that
for a certain constant $c_8>0$
	$$
	\frac{\scal{\Delta(0)}{\sum_{j\not=i}e_{ij}^a}}{\|\Delta(0)\|\|\zeta_i^a\|}\cdot\sqrt{n}
		=\frac{c_8}{\sqrt{n}}\cdot(|V_i\cap W_a|-n/9).
	$$
Therefore, the assertion follows from Corollary~\ref{Cor_multis} by setting $s_i^a=c_4^{-1}\sqrt{n}\cdot z_i^a$
and $\beta=\gamma/c_8$.
\end{proof}

\medskip\noindent
\emph{Proof of Proposition~\ref{Prop_initial}.}
Let $\alpha=\exp(-1/\epsilon)$ and
	\begin{equation}\label{eqQminus0}
	\hat x_i^a=\left\{\begin{array}{cl}
		-1&\mbox{ if }a=i,\\
		1/2&\mbox{ otherwise}
		\end{array}\right.\qquad(i=2,3;\,a=1,2,3).
	\end{equation}
Then the definitions~(\ref{eqxia}) and~(\ref{eqyia}) of the variables $x_i^a$ and $y_i^a$ entail that
	\begin{eqnarray}\nonumber
	\pr\brk{\forall a,i\in\{1,2,3\},i\not=a:|y_a^a-1|<\alpha\wedge|y_i^a-1/2|<\alpha}\\
		&\hspace{-6cm}\geq&\hspace{-3cm}\pr\brk{\forall a,i\in\{1,2,3\}:|x_i^a-\hat x_i^a|<\alpha/2}.
	\label{eqinitiallower}
	\end{eqnarray}
Therefore, we shall derive a lower bound on $\pr\brk{\forall a,i:|x_i^a-\hat x_i^a|<\alpha/2}$.

To this end, let
	$$e_i^a=\sum_{j\in\{1,2,3\}\setminus\{i\}}e_{ij}^a\qquad(1\leq a,i\leq 3),$$
and let $\mathcal{V}\subset\RR^{\AAA}$ be the space spanned by these nine vectors.
In addition, let $q:\RR^{\AAA}\ra\mathcal{V}$ be the orthogonal projection onto $\mathcal{V}$.
Since the construction of the initial vector $\Delta(0)$ in Step~1 of \texttt{BPCol} ensures
that $\Delta(0)\in\mathcal{V}$, we have
	\begin{eqnarray*}
	\frac{\|\Delta(0)\|\cdot\|\zeta_i^a\|}{\sqrt{n}}\cdot x_i^a&=&\scal{\Delta(0)}{\zeta_i^a}
		=\scal{q\Delta(0)}{\zeta_i^a}=\scal{\Delta(0)}{q\zeta_i^a}.
	\end{eqnarray*}
Hence, instead of the vectors $\zeta_i^a$ we may work with their projections $q\zeta_i^a$ onto $\mathcal{V}$.
Thus, let $q_{ij}^a\in\RR$ be the coefficients such that
	$$q\zeta_i^a=\sum_{j=1}^3q_{ij}^ae_j^a\qquad\mbox{($i=2,3$, $a=1,2,3$)}.$$
Then by symmetry we have $q_{ij}^a=q_{ij}^b$ for all $1\leq a,b\leq 3$;
therefore, we will briefly write $q_{ij}$ instead of $q_{ij}^a$.
Furthermore, (\ref{eqzeta}) implies the bounds
	\begin{eqnarray}&\label{eqsmallq1}
		0.99\leq q_{21}\leq1.01,\
		-1.01\leq q_{22}\leq-0.99,\
		-0.01\leq q_{23}\leq0.01,\\&
	0.99\leq q_{31}\leq1.01,\
		-0.01\leq q_{32}\leq-0.01,\
		-1.01\leq q_{33}\leq-0.99.
		\label{eqsmallq2}
	\end{eqnarray}
As a consequence, the matrix
	$$Q=\bc{\begin{array}{ccc}
		q_{21}&q_{22}&q_{23}\\
		q_{31}&q_{32}&q_{33}\\
		1&1&1
		\end{array}}$$
is regular, and there is a constant $c_9>0$ such that $\|Q^{-1}\|\leq c_9$.

Let
	\begin{equation}\label{eqQminus1}
		\bc{\begin{array}{c}z_1^a\\z_2^a\\z_3^a\end{array}}=
		Q^{-1}\bc{\begin{array}{c}\hat x_2^a\\\hat x_3^a\\0\end{array}}\qquad(a=1,2,3).
	\end{equation}
Since $\|Q^{-1}\|\leq c_9$ and $|x_i^a|\leq1$ for all $a,i$, we have
	\begin{equation}\label{eqQminus2}
	|z_i^a|\leq 5c_9\qquad\mbox{($1\leq a,i\leq 3$)}.
	\end{equation}
In addition, (\ref{eqQminus0}) and~(\ref{eqQminus1}) imply that
	\begin{eqnarray}\label{eqQminus3}
	\sum_{a=1}^3\bc{\begin{array}{c}z_1^a\\z_2^a\\z_3^a\end{array}}&=&
		Q^{-1}\brk{\sum_{a=1}^3\bc{\begin{array}{c}x_2^a\\x_3^a\\0\end{array}}}=0,\mbox{ and}\\
	\sum_{i=1}^3z_i^b&=&0\qquad(1\leq b\leq 3).	\label{eqQminus4}
	\end{eqnarray}
Combining~(\ref{eqQminus2})--(\ref{eqQminus4}), we see that
$(z_i^a)_{1\leq a,i\leq 3}$ satisfies the assumptions of Corollary~\ref{Cor_innerproducts}, whence
	\begin{equation}\label{eqQminus5}
	\pr\brk{\forall a,i:|z_i^a-\frac{\scal{\Delta(0)}{e_i^a}}{\|\Delta(0)\|\|\zeta_i^a\|}\cdot\sqrt{n}|\leq\alpha^2}
		\geq c_6
	\end{equation}
for some constant $c_2>0$.
Furthermore, if $\Delta(0)\in\RR^{\AAA}$ satisfies
$|z_i^a-\frac{\scal{\Delta(0)}{e_i^a}}{\|\Delta(0)\|\|\zeta_i^a\|}\cdot\sqrt{n}|\leq\alpha^2$,
then~(\ref{eqQminus1}) and the bounds~(\ref{eqsmallq1})--(\ref{eqsmallq2}) imply that
	\begin{eqnarray*}
	|\hat x_j^a-x_j^a|&=&|\hat x_j^a-\frac{\scal{\Delta(0)}{\zeta_j^a}}{\|\Delta(0)\|\|\zeta_j^a\|}\cdot\sqrt{n}|
		=|\sum_{i=1}^3q_{ji}\bc{z_i^a-\frac{\scal{\Delta(0)}{e_i^a}}{\|\Delta(0)\|\|\zeta_j^a\|}\cdot\sqrt{n}}|\\
		&\leq&\alpha^2\sum_{i=1}^3|q_{ji}|\leq3\alpha^2<\alpha/2\qquad(j=2,3;\,a=1,2,3).
	\end{eqnarray*}
Therefore, (\ref{eqQminus5}) yields
	$\pr\brk{\forall a,i:|x_i^a-\hat x_i^a|<\alpha/2}\geq c_6$.
Thus, the assertion follows from~(\ref{eqinitiallower}) and Lemma~\ref{Lemma_F1}.
\qed

\subsection{Proof of Proposition~\ref{Prop_err}}\label{Sec_err}

Our goal in this section is to bound the error $\|\Delta(l)-\Xi(l)\|_\infty$ resulting from replacing the non-linear operator
$\BB$ by the linear operator $\LL$.
Since $\Delta(l)=\BB^l\Delta(0)$ and $\Xi(l)=\LL^l\Xi(0)=\LL^l\Delta(0)$ by~(\ref{eqXiLL}), the main difficulty of this analysis is
to bound how errors that were made early on in the sequence (i.e., for ``small'' $l$) amplify in the subsequent iterations.
To control this phenomenon, we shall proceed by induction on $l$.
We begin with a simple lemma that bounds the error occurring in a single iteration.
Recall that the constructions of $\Xi(l)$ and $\Delta(l)$ ensure that
$\sum_{a=1}^3\Xi_{v\ra w}^a(l)=\sum_{a=1}^3\Delta_{v\ra w}^a(l)=0$ for all $v\ra w\in\AAA$
(cf.~(\ref{eqIniI}) and~(\ref{eqXiLLsum})).

\begin{lem}\label{Lemma_L}
Suppose that $\Gamma$ satisfies $\sum_{a=1}^3\Gamma_{v\ra w}^a=0$ for all $v\ra w\in\AAA$.
If $\|\Gamma\|_{\infty}<0.001d^{-1}$, then $\|\BB\Gamma-\LL\Gamma\|_\infty\leq 100d^2\|\Gamma\|_\infty^2$.
\end{lem}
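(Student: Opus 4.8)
The plan is to treat $\BB$ as a smooth map with $\BB0=0$ whose derivative at $0$, restricted to the subspace $\{\Gamma:\sum_a\Gamma^a_{u\ra v}=0\}$, is exactly $\LL$ (cf.\ Lemma~\ref{Lemma_Bprime}), and to control the quadratic Taylor remainder by elementary estimates. I would fix an arc $v\ra w\in\AAA$ and a colour $a$, write $\rho=\|\Gamma\|_\infty$, $t^b_u=\Gamma^b_{u\ra v}$ and $S_b=\sum_{u\in N(v)\setminus w}t^b_u$, and record that $|N(v)\setminus w|=2d-1$ by Lemma~\ref{Lemma_reg}, so $|S_b|\le 2d\rho<0.002$ and $|\tfrac32 t^b_u|<0.0015$. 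The single place where the hypothesis $\sum_{b}\Gamma^b_{u\ra v}=0$ enters is the identity $S_1+S_2+S_3=0$.

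First I would pass to logarithms. Every factor $1-\tfrac32 t^b_u$ is positive, so $P_b:=\prod_{u\in N(v)\setminus w}(1-\tfrac32 t^b_u)>0$, and the elementary bound $|\log(1-x)+x|\le x^2$ (for $|x|<\tfrac12$) gives $\log P_b=-\tfrac32 S_b+E_b$ with $|E_b|\le\tfrac94\sum_u (t^b_u)^2\le\tfrac92 d\rho^2$. Writing $y_b=\log P_b$ and $\bar y=\tfrac13(y_1+y_2+y_3)$, the relation $\sum_b S_b=0$ forces $\bar y=\tfrac13(E_1+E_2+E_3)$, hence the centred quantities $z_b=y_b-\bar y$ satisfy $z_1+z_2+z_3=0$ and $|z_b|\le 3.1\,d\rho$. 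Since $(\BB\Gamma)^a_{v\ra w}=-\tfrac13+P_a/(P_1+P_2+P_3)=-\tfrac13+e^{z_a}/(e^{z_1}+e^{z_2}+e^{z_3})$, the next step is an expansion of this normalised ratio around $z=0$: using $|e^z-1-z|\le z^2$ (for $|z|\le 1$) write $e^{z_b}=1+z_b+s_b$ with $|s_b|\le z_b^2$, so that $z_1+z_2+z_3=0$ yields $e^{z_1}+e^{z_2}+e^{z_3}=3+r$ with $|r|\le\sigma^2:=z_1^2+z_2^2+z_3^2\le 29\,d^2\rho^2$; the one-line identity $\frac{1+z_a+s_a}{3+r}-\frac{1+z_a}{3}=\frac{3s_a-r(1+z_a)}{3(3+r)}$ together with $3+r\ge 2$ then gives $\bigl|(\BB\Gamma)^a_{v\ra w}-\tfrac13 z_a\bigr|\le\sigma^2$.

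Finally I would unwind the definitions: $\tfrac13 z_a=-\tfrac12 S_a+\tfrac13 E_a-\tfrac19(E_1+E_2+E_3)$, where $-\tfrac12 S_a=(\LL\Gamma)^a_{v\ra w}$ by~(\ref{eqDefL}) and the remaining two terms are bounded by $3d\rho^2$ in absolute value. Adding the two estimates, $\bigl|(\BB\Gamma)^a_{v\ra w}-(\LL\Gamma)^a_{v\ra w}\bigr|\le\sigma^2+3d\rho^2\le 32\,d^2\rho^2$, and since $v\ra w$ and $a$ were arbitrary this gives $\|\BB\Gamma-\LL\Gamma\|_\infty\le 100\,d^2\|\Gamma\|_\infty^2$, with room to spare in the constant.

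The calculation itself is routine calculus, so there is no real obstacle; the only thing requiring care is the bookkeeping of constants, namely verifying that the smallness assumption $\|\Gamma\|_\infty<0.001\,d^{-1}$ keeps every quantity that appears ($\tfrac32 t^b_u$, the $z_b$, $r$, $\sigma^2$) comfortably inside the range where the elementary bounds $|\log(1-x)+x|\le x^2$, $|e^z-1-z|\le z^2$ and $3+r\ge 2$ are valid. The one conceptually significant point is the implication $\sum_b S_b=0\Rightarrow\bar y=O(d\rho^2)$: without it, $\bar y$ would contribute an $O(\rho)$ term to $z_a$ that $-\tfrac12 S_a$ does not account for, which is precisely why $\LL$ (and not the full derivative $\BB'$) is the right linearisation only on the subspace $\{\Gamma:\sum_a\Gamma^a=0\}$.
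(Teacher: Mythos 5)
Your proof is correct and follows essentially the same route as the paper: both write each factor $1-\tfrac32\Gamma_{u\to v}^b$ as an exponential with a quadratic error, use the constraint $\sum_b\Gamma_{u\to v}^b=0$ to control the denominator (the paper via $\sum_b L_b=3$, you via centring the log-products so that $\bar y=O(d\rho^2)$), and conclude with a second-order Taylor estimate of the normalised ratio. Your log-centring makes the role of the hypothesis, and the softmax's shift-invariance, slightly more transparent, but the estimates and their use are the same; the constants check out comfortably within the stated $100d^2\|\Gamma\|_\infty^2$.
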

\begin{proof}
We employ the elementary inequalities
	\begin{equation}\label{eqL1}
	\exp(-x-x^2)\leq1-x\leq\exp(-x)\leq1-x+x^2\qquad(|x|\leq0.1).
	\end{equation}
Let $v\ra w\in\AAA$, $a\in\{1,2,3\}$, and set
	$$\Pi_b=\prod_{u\in N(v)\setminus w}1-\frac32\Gamma_{u\ra v}^b\qquad(b\in\{1,2,3\}).$$
Moreover, let $\hat\Gamma=\BB\Gamma$.
Then we can rephrase the definition~(\ref{eqBPOpDelta}) of $\BB$ as
	\begin{equation}\label{eqL2}
	\hat\Gamma_{v\ra w} ^a=-\frac13+\frac{\Pi_a}{\sum_{b=1}^3\Pi_b}.
	\end{equation}

In order to prove the lemma, we shall bound the error term
	$|\Pi_b-(1-\frac32\sum_{u\in N(v)\setminus w}\Gamma_{u\ra v}^b)|$.
To this end, note that by~(\ref{eqL1}) there exist numbers
	$0\leq\alpha_{u\ra v}^b\leq9/4$
such that
	$1-\frac32\Gamma_{u\ra v}^b=\exp(-\frac32\Gamma_{u\ra v}^b-\alpha_{u\ra v}^b\Gamma_{u\ra v}^{b\,2}).$
Hence, once more by~(\ref{eqL1}) there is a number $-1\leq\beta^b\leq1$ such that
	\begin{eqnarray}
	\Pi_b&=&\exp\brk{-\sum_{u\in N(v)\setminus w}\frac32\Gamma_{u\ra v}^b+\alpha_{u\ra v}^b\Gamma_{u\ra v}^{b\,2}}\nonumber\\
		&=&1-\sum_{u\in N(v)}\brk{\frac32\Gamma_{u\ra v}^b+\alpha_{u\ra v}^b\Gamma_{u\ra v}^{b\,2}}
			+\beta^b\brk{\sum_{u\in N(v)}\frac32\Gamma_{u\ra v}^b+\alpha_{u\ra v}^b\Gamma_{u\ra v}^{b\,2}}^2\nonumber\\
		&=&L_b+E_b,\qquad\qquad\mbox{  \ where we let}\label{eqL4}\\
	L_b&=&1-\sum_{u\in N(v)}\frac32\Gamma_{u\ra v}^b,\mbox{ and}\nonumber\\
	E_b&=&\sum_{u\in N(v)\setminus w}\alpha_{u\ra v}^b\Gamma_{u\ra v}^{b\,2}+
			\beta^b\brk{\sum_{u\in N(v)}\frac32\Gamma_{u\ra v}^b+\alpha_{u\ra v}^b\Gamma_{u\ra v}^{b\,2}}^2.
			\nonumber
	\end{eqnarray}
Further, since $\|\Gamma\|_\infty\leq0.001/d$ by assumption and $|N(v)|=2d$ by Lemma~\ref{Lemma_reg}, we obtain the bound
	\begin{equation}\label{eqL3}
	|E_b|\leq10d^2\|\Gamma\|_{\infty}^2\leq0.01.
	\end{equation}

As $\sum_{b=1}^3L_b=3$ due to our assumption that $\sum_{b=1}^3\Gamma_{u\ra v}^b=0$ for all $u\ra v\in\AAA$,
plugging~(\ref{eqL4}) into~(\ref{eqL2}) yields
	\begin{eqnarray}\label{eqL4a}
	\hat\Gamma_{v\ra w}^a+\frac13&=&\frac{L_a+E_a}{3+E_1+E_2+E_3}
		=\frac{L_a}3+\frac{3E_a+L_a(E_1+E_2+E_3)}{3(3+E_1+E_2+E_3)}.
	\end{eqnarray}
Since $|L_a|\leq1+3d\|\Gamma\|_\infty\leq2$, (\ref{eqL3}) and~(\ref{eqL4a}) yield that
	\begin{equation}\label{eqL5}
	|\frac13(1-L_a)-\hat\Gamma_{v\ra w}^a|\leq100d^2\|\Gamma\|_{\infty}^2.
	\end{equation}
Finally, a glance at~(\ref{eqDefL}) reveals that $(\LL\Gamma)_{v\ra w}^a=\frac13(1-L_a)$,
and thus the assertion follows from~(\ref{eqL5}).
\end{proof}

Lemma~\ref{Lemma_L} allows us to bound the error $\|\Delta(l+1)-\Xi(l+1)\|_\infty$
resulting from iteration $l+1$ in terms of the error $\|\Delta(l)-\Xi(l)\|_\infty$ from the previous iteration.
In the sequel we let $C>0$ denote a sufficiently large constant.

\begin{lem}\label{Lemma_err}
Suppose that $\|\Delta(l)-\Xi(l)\|_\infty\leq(Cd)^{-1}$.
Then
	$$\|\Delta(l+1)-\Xi(l+1)\|_\infty\leq 2Cd^2\|\Xi(l)\|_\infty^2+4d\|\Delta(l)-\Xi(l)\|_\infty.$$
\end{lem}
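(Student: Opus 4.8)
The plan is to relate the one-step error $\|\Delta(l+1)-\Xi(l+1)\|_\infty$ to the previous error by a triangle-inequality split: write
$$\Delta(l+1)-\Xi(l+1)=\BB\Delta(l)-\LL\Xi(l)=\bc{\BB\Delta(l)-\LL\Delta(l)}+\LL\bc{\Delta(l)-\Xi(l)}.$$
The first bracket is the linearization error at $\Delta(l)$, which I would control with Lemma~\ref{Lemma_L}; the second bracket is how the previously accumulated error is propagated by the linear operator $\LL$, which I would control crudely using the operator norm bound $\|\LL\|\leq d-\frac12\leq d$ from Proposition~\ref{Prop_dominant} (passing from the $\ell_2$ to the $\ell_\infty$ norm costs nothing in one direction, and in the other I would just note $\|\cdot\|_\infty\leq\|\cdot\|$, being slightly careful about dimensions, or simply bound $\|(\LL\gamma)_{v\ra w}^a\|_\infty\leq\frac12\sum_{u\in N(v)\setminus w}|\gamma_{u\ra v}^a|\leq d\|\gamma\|_\infty$ directly from the definition~(\ref{eqDefL}) of $\LL$, which is cleaner and avoids norm conversions).

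First I would check the hypothesis of Lemma~\ref{Lemma_L} is met for $\Gamma=\Delta(l)$: I need $\|\Delta(l)\|_\infty<0.001d^{-1}$. This follows from $\|\Delta(l)\|_\infty\leq\|\Xi(l)\|_\infty+\|\Delta(l)-\Xi(l)\|_\infty$, where the second term is $\leq(Cd)^{-1}$ by assumption, and the first term is $\leq\eps$ in the regime we care about (indeed $L_2$ is defined as the largest $l$ with $\|\Xi(l)\|_\infty\leq\eps$, and the error lemma will only be invoked for $l\leq L_2$); since $\eps$ is a small constant and $C$ is large, $\|\Delta(l)\|_\infty$ is well below $0.001 d^{-1}$ once $d\geq d_0$. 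Note also that $\Delta(l)$ satisfies $\sum_{a=1}^3\Delta_{v\ra w}^a(l)=0$ by~(\ref{eqIniI}) and induction on~(\ref{eqBPOpDelta}), which is the other hypothesis of Lemma~\ref{Lemma_L}. Hence Lemma~\ref{Lemma_L} gives $\|\BB\Delta(l)-\LL\Delta(l)\|_\infty\leq 100d^2\|\Delta(l)\|_\infty^2$.

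The remaining step is to replace $\|\Delta(l)\|_\infty^2$ by $\|\Xi(l)\|_\infty^2$ plus a lower-order correction. Using $\|\Delta(l)\|_\infty\leq\|\Xi(l)\|_\infty+\|\Delta(l)-\Xi(l)\|_\infty$ and $(a+b)^2\leq 2a^2+2b^2$, I get
$$100d^2\|\Delta(l)\|_\infty^2\leq 200d^2\|\Xi(l)\|_\infty^2+200d^2\|\Delta(l)-\Xi(l)\|_\infty^2.$$
In the second term I use the hypothesis $\|\Delta(l)-\Xi(l)\|_\infty\leq(Cd)^{-1}$ to bound one of the two factors, giving $200d^2\|\Delta(l)-\Xi(l)\|_\infty^2\leq 200d^2\cdot(Cd)^{-1}\|\Delta(l)-\Xi(l)\|_\infty=\frac{200d}{C}\|\Delta(l)-\Xi(l)\|_\infty$, which is at most $2d\|\Delta(l)-\Xi(l)\|_\infty$ once $C\geq100$. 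Combining with the $\LL$-propagation term $\|\LL(\Delta(l)-\Xi(l))\|_\infty\leq d\|\Delta(l)-\Xi(l)\|_\infty$ and taking $C$ large enough that $200\leq 2C$, the two pieces assemble into
$$\|\Delta(l+1)-\Xi(l+1)\|_\infty\leq 2Cd^2\|\Xi(l)\|_\infty^2+4d\|\Delta(l)-\Xi(l)\|_\infty,$$
as claimed. I don't expect a serious obstacle here; the only points requiring care are bookkeeping the constants (making sure the fixed large $C$ simultaneously dominates the $200$'s coming out of the $(a+b)^2$ expansion and makes the $\|\Delta(l)\|_\infty<0.001d^{-1}$ check go through) and being tidy about the $\ell_\infty$-to-$\ell_\infty$ action of $\LL$, which I would do by direct estimation from~(\ref{eqDefL}) rather than via operator norms.
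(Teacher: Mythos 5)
Your core argument is exactly the paper's: the same triangle-inequality split of $\BB\Delta(l)-\LL\Xi(l)$ into a linearization error and an $\LL$-propagation term, Lemma~\ref{Lemma_L} for the former, a direct $\ell_\infty$ bound on $\LL$ for the latter (you get the slightly sharper $d$ where the paper uses $2d$; both are fine), and then the same $(a+b)^2\leq 2a^2+2b^2$ expansion with absorption of the $\|\Delta(l)-\Xi(l)\|_\infty^2$ term via the hypothesis $\|\Delta(l)-\Xi(l)\|_\infty\leq(Cd)^{-1}$.

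The one place you go beyond the paper---explicitly checking the hypothesis $\|\Delta(l)\|_\infty<0.001d^{-1}$ of Lemma~\ref{Lemma_L}, which the paper applies silently---is the step that does not hold up as written. You conclude $\|\Delta(l)\|_\infty<0.001d^{-1}$ from $\|\Xi(l)\|_\infty\leq\eps$ and $\|\Delta(l)-\Xi(l)\|_\infty\leq(Cd)^{-1}$ ``since $\eps$ is a small constant and $C$ is large.'' But $\eps$ and $C$ are constants independent of $d$, while the threshold $0.001d^{-1}\to0$ as $d\to\infty$; in fact the paper assumes $d\geq d_0>\exp(\eps^{-2})$, so $\eps\gg0.001d^{-1}$ and the inequality you need fails. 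The bound $\|\Xi(l)\|_\infty\leq\eps$ alone is not enough. What actually saves the day is that Lemma~\ref{Lemma_err} is only invoked inside the recursion of Corollary~\ref{Cor_err} at times $l\leq L-1$ with $\|\Xi(L)\|_\infty\leq\eps$; since $\Xi(\cdot)$ is dominated by eigenvectors of $\LL$ with eigenvalue $\lambda\approx d/2$, one has $\|\Xi(l)\|_\infty=O(\eps/\lambda)=O(\eps/d)$ for all such $l$, and \emph{that} extra factor $1/\lambda$ is what puts $\|\Delta(l)\|_\infty$ below $0.001d^{-1}$ once $\eps$ is small and $C$ is large. The paper elides this entirely, so your instinct to check the hypothesis was good---the justification just needs the factor of $\lambda$ rather than the bare bound $\|\Xi(l)\|_\infty\leq\eps$.
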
	
\begin{proof}
By Lemma~\ref{Lemma_L} and the definition~(\ref{eqDefL}) of $\LL$ we have
	\begin{eqnarray}\nonumber
	\|\Delta(l+1)-\Xi(l+1)\|_\infty&=&\|\BB\Delta(l)-\LL\Xi(l)\|_\infty\\
		&\leq&\|\BB\Delta(l)-\LL\Delta(l)\|_\infty+\|\LL\Delta(l)-\LL\Xi(l)\|_\infty\nonumber\\
		&\leq&Cd^2\|\Delta(l)\|_\infty^2+2d\|\Delta(l)-\Xi(l)\|_\infty.\label{eqErr}
	\end{eqnarray}
Moreover,
	$\|\Delta(l)\|_\infty\leq\|\Xi(l)\|_\infty+\|\Xi(l)-\Delta(l)\|_\infty$, whence (\ref{eqErr}) yields
	\begin{eqnarray*}
	\|\Delta(l+1)-\Xi(l+1)\|_\infty&\leq&2Cd^2\brk{\|\Xi(l)\|_\infty^2+\|\Xi(l)-\Delta(l)\|_\infty^2}+2d\|\Delta(l)-\Xi(l)\|_\infty.
	\end{eqnarray*}
This implies the assertion, because we are assuming that $\|\Delta(l)-\Xi(l)\|_\infty\leq(Cd)^{-1}$.
\end{proof}
\noindent
Further, applying Lemma~\ref{Lemma_err} $L$ times recursively, we obtain the following bound.

\begin{cor}\label{Cor_err}
Suppose that $\|\Delta(l)-\Xi(l)\|_\infty\leq(Cd)^{-1}$ for all $l<L$.
Then
	$$\|\Delta(L)-\Xi(L)\|_\infty\leq2Cd^2\sum_{j=1}^{L-1}(4d)^{j-1}\|\Xi(L-j)\|_\infty^2+Cd^2(4d)^{L-1}\|\Delta(0)\|_\infty^2.$$
\end{cor}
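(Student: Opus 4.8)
\emph{Proof proposal.} The plan is a straightforward downward induction on $L$ that simply unfolds the one‑step bound of Lemma~\ref{Lemma_err}. Write $\epsilon_l=\|\Delta(l)-\Xi(l)\|_\infty$ for brevity. Note that $\epsilon_0=0$, since $\Xi(0)=\Delta(0)$ by construction, and that the hypothesis of the corollary guarantees $\epsilon_l\le(Cd)^{-1}$ for every $l<L$; hence Lemma~\ref{Lemma_err} is applicable at each step $l=1,\dots,L-1$.

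First I would treat the very first step $0\to 1$ separately, using Lemma~\ref{Lemma_L} directly rather than Lemma~\ref{Lemma_err} at $l=0$, because this gives the cleaner coefficient matching the stated bound. The vector $\Delta(0)$ satisfies $\sum_{a=1}^3\Delta_{v\ra w}^a(0)=0$ for all $v\ra w\in\AAA$ by~(\ref{eqIniI}), and $\|\Delta(0)\|_\infty=\delta=\exp(-\log^3 n)$ is far below $0.001d^{-1}$ for $n$ large; so Lemma~\ref{Lemma_L} yields $\epsilon_1=\|\BB\Delta(0)-\LL\Delta(0)\|_\infty\le 100d^2\|\Delta(0)\|_\infty^2\le Cd^2\|\Delta(0)\|_\infty^2$, using $C\ge 100$. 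Next I would establish, by a one‑line induction on $k$ starting from Lemma~\ref{Lemma_err} at $l=L-1$, the intermediate estimate
$$\epsilon_L\le 2Cd^2\sum_{j=1}^{k}(4d)^{j-1}\|\Xi(L-j)\|_\infty^2+(4d)^{k}\epsilon_{L-k}\qquad(1\le k\le L-1),$$
where the inductive step just substitutes $\epsilon_{L-k}\le 2Cd^2\|\Xi(L-k-1)\|_\infty^2+4d\,\epsilon_{L-k-1}$ (Lemma~\ref{Lemma_err}, valid since $L-k-1<L$) and relabels the new term as the $j=k+1$ summand. Taking $k=L-1$ and inserting the base‑step bound $\epsilon_1\le Cd^2\|\Delta(0)\|_\infty^2$ gives
$$\epsilon_L\le 2Cd^2\sum_{j=1}^{L-1}(4d)^{j-1}\|\Xi(L-j)\|_\infty^2+Cd^2(4d)^{L-1}\|\Delta(0)\|_\infty^2,$$
which is exactly the assertion.

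There is essentially no substantial obstacle here: the heavy lifting — controlling how a linearization error made at one iteration is amplified by subsequent iterations — has already been done in Lemmas~\ref{Lemma_L} and~\ref{Lemma_err}, and this corollary merely packages their recursive application. The only points requiring a little care are (i) invoking Lemma~\ref{Lemma_L} (rather than Lemma~\ref{Lemma_err} with $\epsilon_0=0$) for the base step, so that the coefficient of $\|\Delta(0)\|_\infty^2$ comes out as $C$ and not $2C$, matching the stated bound; and (ii) checking that the standing hypothesis $\epsilon_l\le(Cd)^{-1}$ for all $l<L$ is precisely what licenses each application of Lemma~\ref{Lemma_err} along the unfolding.
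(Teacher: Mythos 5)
Your proof is correct and is essentially the approach the paper intends: the paper's own "proof" is just the one-line remark "applying Lemma~\ref{Lemma_err} $L$ times recursively," and you have simply spelled out that unfolding carefully, noting along the way that using Lemma~\ref{Lemma_L} directly at the base step $l=0$ (with $C\geq 100$ and $\|\Delta(0)\|_\infty=\delta$ tiny, so its hypothesis holds) recovers the coefficient $Cd^2$ rather than $2Cd^2$ on the $\|\Delta(0)\|_\infty^2$ term. That last observation is a small but genuine improvement in precision over the paper's terse statement.
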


To proceed, we need the following (rough) absolute bound on the error $\|\Delta(L)-\Xi(L)\|_\infty$.

\begin{lem}\label{Lemma_errEarly}
If $L\leq\log^2n$, then
$\|\Delta(L)-\Xi(L)\|_\infty<(Cd)^{-1}$.
\end{lem}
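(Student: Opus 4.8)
\noindent
The statement is exactly the hypothesis that Corollary~\ref{Cor_err} requires, so the natural strategy is a self-improving induction. Concretely, the plan is to prove by induction on $L\in\{0,1,\dots,\lfloor\log^2n\rfloor\}$ the stronger bound
$$\|\Delta(L)-\Xi(L)\|_\infty\le M^L\delta^2,$$
where $M=M(d)$ is a suitable constant, and then to observe that $M^L\delta^2=\exp\bigl(L\log M-2\log^3n\bigr)\to0$ uniformly over $L\le\log^2n$ as $n\to\infty$; in particular $M^L\delta^2<(Cd)^{-1}$ once $n$ is large, which is the assertion.

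The base case $L=0$ is immediate because $\Delta(0)=\Xi(0)$. For the inductive step I would first record the crude estimate $\|\Xi(m)\|_\infty\le(d-\tfrac12)^m\delta$, valid for every $m\ge0$: it follows from the definition~(\ref{eqDefL}) of $\LL$ and the fact $|N(v)|=2d$ (Lemma~\ref{Lemma_reg}), which together give $\|\LL\Gamma\|_\infty\le(d-\tfrac12)\|\Gamma\|_\infty$, combined with $\|\Xi(0)\|_\infty=\|\Delta(0)\|_\infty=\delta$. Now fix $1\le L\le\log^2n$ and assume the displayed bound holds for all $l<L$; since $M^l\delta^2\to0$, for $n$ large this in particular yields $\|\Delta(l)-\Xi(l)\|_\infty\le(Cd)^{-1}$ for all $l<L$, so Corollary~\ref{Cor_err} applies and gives
$$\|\Delta(L)-\Xi(L)\|_\infty\le2Cd^2\sum_{j=1}^{L-1}(4d)^{j-1}\|\Xi(L-j)\|_\infty^2+Cd^2(4d)^{L-1}\delta^2.$$
Plugging in $\|\Xi(L-j)\|_\infty^2\le d^{2(L-j)}\delta^2$, the $j$-th summand becomes at most $4^{j-1}d^{2L-j-1}\delta^2$, so the sum is a geometric series with ratio $4/d<1$ (as $d\ge d_0$); summing it and absorbing the last term, the right-hand side is bounded by a constant depending only on $d$ times $(4d)^{2L}\delta^2$. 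Choosing $M=M(d)$ large enough (e.g.\ $M=80Cd^2$) makes this at most $M^L\delta^2$ for every $L\ge1$, which closes the induction.

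No step here is genuinely hard; it is essentially bookkeeping. The two points that need attention are (i) verifying that the induction hypothesis is exactly strong enough to license the application of Corollary~\ref{Cor_err} (and hence, underneath it, Lemma~\ref{Lemma_err}) at stage $L$, and (ii) checking that the geometric sum really collapses to something of size $(4d)^{O(L)}\delta^2$ rather than blowing up like $d^{\Theta(L^2)}$. It is precisely this sub-exponential-in-$L$ amplification, combined with the double-exponentially small factor $\delta^2=\exp(-2\log^3n)$, that makes the bound hold throughout the range $L\le\log^2n$ (indeed, any $L=o(\log^3n/\log d)$ would do).
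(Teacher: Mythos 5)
Your proof is correct and follows essentially the same route as the paper: apply Corollary~\ref{Cor_err} under the inductive hypothesis, plug in the crude geometric bound $\|\Xi(m)\|_\infty\leq(d-\tfrac12)^m\delta$, sum the resulting geometric series, and observe that the doubly-exponentially small $\delta^2=\exp(-2\log^3 n)$ crushes the $d^{O(L)}$ amplification uniformly over $L\leq\log^2 n$. The only difference is cosmetic: the paper's induction hypothesis is the conclusion $\|\Delta(l)-\Xi(l)\|_\infty<(Cd)^{-1}$ itself (which is exactly what Corollary~\ref{Cor_err} needs), whereas you carry the stronger quantitative invariant $\|\Delta(l)-\Xi(l)\|_\infty\leq M^l\delta^2$ and then downgrade it to $(Cd)^{-1}$ before invoking Corollary~\ref{Cor_err}; this is a slightly longer path to the same place and is not needed, though it does make the final estimate more explicit. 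One small nit: your claim that the right-hand side of Corollary~\ref{Cor_err} is bounded by a constant times $(4d)^{2L}\delta^2$ is somewhat loose (the geometric sum actually collapses to $O(d^{2L}\delta^2)$, as the paper computes), and your suggested $M=80Cd^2$ works but only after a slightly tighter bookkeeping than you present; readers should verify the $L=1$ case separately.
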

\begin{proof}
The proof is by induction on $l$.
For $L=0$ the assertion is trivially true.
Thus, assume that $\|\Delta(l)-\Xi(l)\|_\infty<(Cd)^{-1}$ for all $l<L\leq\log^2n$.
Then Corollary~\ref{Cor_err} entails that
	$$\|\Delta(L)-\Xi(L)\|_\infty\leq2Cd^2\sum_{j=1}^{L-1}(4d)^{j-1}\|\Xi(L-j)\|_\infty^2+Cd^2(4d)^{L-1}\|\Delta(0)\|_\infty^2.$$
Further, the definition~(\ref{eqDefL}) of $\LL$ shows that
	$$\|\Xi(l)\|_\infty\leq(2d)^l\|\Delta(0)\|_\infty=(2d)^l\delta.$$
Hence,
	$$\|\Delta(L)-\Xi(L)\|_\infty\leq4Cd^2(2d)^{2L-2}\delta^2+Cd^2(4d)^{L-1}\delta^2.$$
As $\delta\leq\exp(-\log^{3}n)$ and $d=O(1)$, the r.h.s.\ is $o(1)$ as $n\ra\infty$,
 and thus $\|\Delta(L)-\Xi(L)\|_\infty<(Cd)^{-1}$, provided that $n$ is sufficiently large.
\end{proof}

\begin{lem}\label{Lemma_errLate}
Let $L^*$ be the maximum integer such that $\|\Xi(L^*)\|_\infty<\eps$.
Then for all $\log^2n\leq L\leq L^*$ we have
$\|\Xi(L)-\Delta(L)\|_\infty\leq -\log(\epsilon)\cdot\|\Xi(L)\|_\infty^2$.
\end{lem}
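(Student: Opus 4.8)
The plan is to prove the bound by induction on $L$, running from $\lceil\log^2 n\rceil$ up to $L^*$; if this range is empty the claim is vacuous, and otherwise note for later use that $L^*=\Theta(\log^3 n)$, since $\|\Xi(0)\|_\infty=\delta=\exp(-\log^3 n)$ and $\|\Xi(l)\|_\infty$ grows by a factor of order $\lambda$ per step (Corollary~\ref{Cor_Startgame}). Throughout we keep the standing hypothesis that $\Delta(0)$ is feasible, so that Corollary~\ref{Cor_Startgame} and the estimate~(\ref{eqXiInfty}) are available; in particular $0.49\nu\lambda^l\le\|\Xi(l)\|_\infty\le1.1\nu\lambda^l$ for all $l\ge L_1$, so $\|\Xi(\cdot)\|_\infty$ is strictly increasing past $L_1$ with consecutive ratios of order $\lambda$, and consequently $\|\Xi(l)\|_\infty\le\|\Xi(L^*-1)\|_\infty<3\eps/\lambda$ for every $L_1\le l\le L^*-1$. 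Two parameter facts will be used repeatedly: by Proposition~\ref{Prop_dominant}, $\lambda=\tfrac d4+\tfrac14\sqrt{d^2-8d+4}\ge d/4$, so $\lambda^2\ge d^2/16$ and $4d/\lambda^2\le 64/d<\tfrac12$ once $d\ge d_0$; and $\nu^2=\|\Delta(0)\|^2/\bigl(n\|\zeta_2^1\|^2\bigr)=\Theta(\delta^2/n)$ by~(\ref{eqinnerproducts1})--(\ref{eqinnerproducts2}).

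The engine of the argument is Corollary~\ref{Cor_err}, so the first task is to verify its hypothesis at $L$, namely $\|\Delta(l)-\Xi(l)\|_\infty\le(Cd)^{-1}$ for all $l<L$. For $l\le\log^2 n$ this is precisely Lemma~\ref{Lemma_errEarly}. For $\log^2 n<l<L\le L^*$ the induction hypothesis gives $\|\Delta(l)-\Xi(l)\|_\infty\le-\log(\eps)\|\Xi(l)\|_\infty^2$, and since $l\le L^*-1$ we have $\|\Xi(l)\|_\infty<3\eps/\lambda$, so the right-hand side is $O\bigl((-\log\eps)\eps^2/d^2\bigr)$, which is far below $(Cd)^{-1}$ because $d\ge d_0>\exp(\eps^{-2})$ dwarfs the constant $(-\log\eps)\eps^2$. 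Hence Corollary~\ref{Cor_err} applies and yields $\|\Delta(L)-\Xi(L)\|_\infty\le 2Cd^2\sum_{j=1}^{L-1}(4d)^{j-1}\|\Xi(L-j)\|_\infty^2+Cd^2(4d)^{L-1}\delta^2$.

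I would bound this sum by splitting at $L-j\ge L_1$. In the ``aligned'' range $L-j\ge L_1$ use $\|\Xi(L-j)\|_\infty\le1.1\nu\lambda^{L-j}$ from~(\ref{eqXiInfty}); that contribution is at most $1.21\nu^2\lambda^{2L-2}\sum_{m\ge0}(4d/\lambda^2)^m\le 3\nu^2\lambda^{2L-2}$, the geometric series converging precisely because $4d/\lambda^2<\tfrac12$. In the short ``misaligned'' range $L-j<L_1$ use the unconditional bound $\|\Xi(l)\|_\infty\le(2d)^l\delta$ (cf.\ the proof of Lemma~\ref{Lemma_errEarly}); that contribution, together with the tail $Cd^2(4d)^{L-1}\delta^2$, is $O\bigl(d^2\,d^{L_1}(4d)^{L}\delta^2\bigr)$. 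Now compare both pieces with $\|\Xi(L)\|_\infty^2\ge(0.49\nu\lambda^{L})^2=\Theta(\delta^2\lambda^{2L}/n)$: since $d^2/\lambda^2=O(1)$, the aligned piece is $O(1)\cdot\|\Xi(L)\|_\infty^2$; and the misaligned/tail piece is $\le\|\Xi(L)\|_\infty^2$ as soon as $(4d/\lambda^2)^{L}\le 1/\mathrm{poly}(n)$, which holds once $L\ge\log^2 n$ because $L\le L^*=O(\log^3 n)$ while $\delta=\exp(-\log^3 n)$ is super-polynomially small --- this is exactly why the lemma carries the lower bound $L\ge\log^2 n$. Altogether $\|\Delta(L)-\Xi(L)\|_\infty\le K\,\|\Xi(L)\|_\infty^2$ for an absolute constant $K$ depending only on $C$ and the spectral constants, and it suffices to have fixed $\eps$ small enough that $-\log\eps\ge K$; this is one more of the finitely many constraints that implicitly determine $\eps$, and it is compatible with $d_0>\exp(\eps^{-2})$. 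This closes the induction.

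The step I expect to be the main obstacle is controlling the accumulation of linearization errors over the $\Theta(\log^3 n)$ iterations: a priori Corollary~\ref{Cor_err} inflates the error committed at step $L-j$ by a factor $(4d)^{j-1}$, which on its own would be ruinous. The point that rescues the estimate is that $\|\Xi(l)\|_\infty$ itself grows by a factor $\lambda\approx d/2$ per step, so $\|\Xi(l)\|_\infty^2$ grows by $\lambda^2\approx d^2/4\gg 4d$; hence the most recent few error terms already dominate the entire history, the relevant geometric series having ratio $4d/\lambda^2<1$. A secondary, purely bookkeeping nuisance is the first $L_1=\Theta(\log n)$ iterations, during which the messages have not yet aligned with the dominant eigenvectors $\zeta_i^a$ and only the crude bound $\|\Xi(l)\|_\infty\le(2d)^l\delta$ is available; this window is harmless only because $\delta$ was chosen as small as $\exp(-\log^3 n)$, which overwhelms the $(4d)^{L}$ amplification for all $L\le L^*$.
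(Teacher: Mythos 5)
Your proof is correct and follows essentially the same route as the paper's: induction over $L$, verification of the hypothesis of Corollary~\ref{Cor_err} via Lemma~\ref{Lemma_errEarly} for the early window and the inductive hypothesis thereafter, and then bounding the resulting sum by observing that $\|\Xi(l)\|_\infty^2$ grows like $\lambda^{2l}$ while the amplification factor is only $(4d)^j$, so the geometric series of ratio $4d/\lambda^2<1$ is dominated by its leading term. One small slip: in justifying that the misaligned/tail piece is negligible you invoke the super-polynomial smallness of $\delta$, but $\delta^2$ appears on both sides of the comparison and cancels; what actually does the work (as you correctly note just before) is that $(4d/\lambda^2)^L\le 2^{-\log^2 n}$ beats the polynomial factor $d^{L_1+2}n$ once $L\ge\log^2 n$.
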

\begin{proof}
By the definition~(\ref{eqDefL}) of $\LL$ there are constants $c_1,c_2>0$ such that
	\begin{eqnarray}\label{eqerrLate1}
	\|\Xi(l)\|_\infty&\leq&(2d)^l\delta\qquad(\forall\,l\leq c_2\log n),\\
	\|\Xi(l)\|_\infty&\in&\brk{c_1^{-1}\lambda^l\delta/\sqrt{dn},c_1\lambda^l\delta/\sqrt{dn}}\qquad(\forall\,l\geq c_2\log n).\label{eqerrLate2}
	\end{eqnarray}
We proceed inductively for $\log^2n\leq L\leq L^*$.
Thus, assume that $\|\Xi(l)-\Delta(l)\|_\infty\leq c_1\|\Xi(l)\|_\infty^2$ for all $\log^2n\leq l<L$.
Since $\lambda\geq0.1d$ and $\|\Xi(L)\|_\infty<\eps$, this implies that
	$$\|\Xi(l)-\Delta(l)\|_\infty\leq (Cd)^{-1}\quad\mbox{ for all }\log^2n\leq l<L.$$
Furthermore, $\|\Xi(l)-\Delta(l)\|_\infty<(Cd)^{-1}$ for all $l<\log^2n$ by Lemma~\ref{Lemma_errEarly}.
Therefore, we can apply Corollary~\ref{Cor_err} to obtain
	\begin{eqnarray}\label{eqerrLate3}
	\|\Xi(L)-\Delta(L)\|_\infty&\leq&
		2Cd^2\sum_{j=1}^{L-1}(4d)^{j-1}\|\Xi(L-j)\|_\infty^2+Cd^2(4d)^{L-1}\|\Delta(0)\|_\infty^2.
	\end{eqnarray}
Since $L\geq\log^2n$ and $\lambda\geq0.1d$, (\ref{eqerrLate1}) and (\ref{eqerrLate2})
imply that the sum on the r.h.s.\ of~(\ref{eqerrLate3}) is dominated by the term for $j=L-1$.
Hence,
	\begin{eqnarray}\nonumber
	\|\Xi(L)-\Delta(L)\|_\infty&\leq&
		4Cd^2\|\Xi(L-1)\|_\infty^2+Cd^2(4d)^{L-1}\delta^2\\
		&\leq&c_3d^2\delta^2\brk{n^{-1}\lambda^{2L-2}+(4d)^{L-1}}\nonumber\\
		&\leq&2c_3d^2\delta^2\lambda^{2L-2}n^{-1}\leq c_4\delta^2\lambda^{2L}n^{-1}.
	\label{eqerrLate4}
	\end{eqnarray}
Combining~(\ref{eqerrLate2}) and~(\ref{eqerrLate4}), we conclude that $\|\Xi(L)-\Delta(L)\|_\infty<-\log(\epsilon)\cdot\|\Xi(L)\|_\infty^2$
(provided that $\epsilon$ is chosen small enough).
\end{proof}
\noindent
Finally, Proposition~\ref{Prop_err} follows from Lemma~\ref{Lemma_errLate} directly.

\subsection{Proof of Proposition~\ref{Prop_Endgame}}\label{Sec_Endgame}

Let $\mu=\nu\lambda^{L_2}$.
Then Corollary~\ref{Cor_Startgame} and Proposition~\ref{Prop_initial} entail that
	\begin{eqnarray}\label{eq_EndgameI}
	(1-\eps^3)\mu&\leq&\Delta_{v\ra w}^a(L_2)\leq(1+\eps^3)\mu\qquad\mbox{if $v\in V_a$ and $w\in N(v)$, and}\\
	(-\frac12-\eps^3)\mu&\leq&\Delta_{v\ra w}^a(L_2)\leq(-\frac12+\eps^3)\mu\qquad\mbox{if $v\not\in V_a$ and $w\in N(v)$.}
		\label{eq_EndgameII}
	\end{eqnarray}
To prove Proposition~\ref{Prop_Endgame}, we consider two cases.
The first case is that $\|\Delta(L_2)\|_\infty\leq(\eps d)^{-1}$ is ``small''.
Then it will take two more steps for the messages to properly represent the coloring $(V_1,V_2,V_3)$, i.e., $L_3=L_2+2$.
By contrast, if $\|\Delta(L_2)\|_\infty>(\eps d)^{-1}$ is ``large'', we will just need one more step ($L_3=L_2+1$).
In both cases the proof is based on a direct analysis of the BP equations~(\ref{eqBPOp}).

\begin{lem}\label{Lemma_EndgameMedium}
If $0.01\eps d^{-1}\leq\|\Delta(L_2)\|_\infty\leq(\eps d)^{-1}$, then
\begin{eqnarray}
	\eta_{u\ra v}^i(L_2+1)&=&\left\{\begin{array}{cl}
		\frac13+(1+\gamma(u,v,i))\beta&\mbox{ if }u\in V_i,\\
		\frac13-(1+\gamma(u,v,i))\beta'&\mbox{ otherwise,}
		\end{array}\right.
	\end{eqnarray}
where $|\gamma(u,v,i)|\leq\eps^3$ and $\beta,\beta'>\eps^2$.
\end{lem}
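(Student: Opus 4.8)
The plan is to prove Lemma~\ref{Lemma_EndgameMedium} by a direct analysis of one application of the BP equation~(\ref{eqBPOp}), using~(\ref{eq_EndgameI}) and~(\ref{eq_EndgameII}) as the starting data. First I would fix $u\in V_k$ (for some $k\in\{1,2,3\}$), fix a neighbor $v$ of $u$, and fix a color $i$. Recall $|N(u)|=2d$ by Lemma~\ref{Lemma_reg}, and $N(u)\setminus v$ splits into the $d$ or $d-1$ neighbors in $V_j$ for each $j\not=k$. By~(\ref{eq_EndgameI})--(\ref{eq_EndgameII}), for each neighbor $x\in N(u)\setminus v$ the value $\eta_{x\ra u}^i(L_2)$ equals $\frac13+(1\pm\eps^3)\mu$ if $x\in V_i$, and $\frac13-\frac12(1\pm\eps^3)\mu$ otherwise. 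Hence $1-\eta_{x\ra u}^i(L_2)=\frac23-(1\pm\eps^3)\mu$ when $x\in V_i$ and $\frac23+\frac12(1\pm\eps^3)\mu$ otherwise. The product $\Pi_i=\prod_{x\in N(u)\setminus v}(1-\eta_{x\ra u}^i(L_2))$ therefore takes one of essentially three shapes depending on whether $i=k$ (then $u$ has $0$ neighbors of its own color among $N(u)$, so all $\approx 2d$ factors are $\frac23+\Theta(\mu)$), or $i\not=k$ (then roughly $d$ factors are $\frac23-\Theta(\mu)$ and roughly $d$ are $\frac23+\Theta(\mu)$).

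The key computation is to compare $\Pi_i$ for the three colors. Writing $\Pi_i=(\tfrac23)^{|N(u)\setminus v|}\prod_x(1+\theta_{x,i})$ with each $|\theta_{x,i}|=\Theta(\mu)$, I would use the bounds $\sum_x\theta_{x,i}-O(d\mu^2)\le \ln\prod_x(1+\theta_{x,i})\le\sum_x\theta_{x,i}$ (valid since $d\mu=O(1/\eps)$ is small, actually $d\mu\le\eps^{-1}\cdot$const by the hypothesis $\|\Delta(L_2)\|_\infty\le(\eps d)^{-1}$). The linear term $\sum_x\theta_{x,i}$ is, up to $(1\pm\eps^3)$ factors and up to the $O(1)$ discrepancy between $d$ and $d-1$: equal to $+\tfrac32 d\mu$-ish if $i=k$ (a slight modification: $1-\eta = \frac23+\frac12\mu$ gives $\theta = \frac34\mu$ per factor, times $2d$ factors, so $\approx\frac32 d\mu$), and roughly $0$ (a cancellation, with a residual of order $d\mu^2$ or the $\pm1$-in-$d$ correction) if $i\not=k$. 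The upshot is $\Pi_k/\Pi_j = \exp(\Theta(d\mu)) = 1+\Theta(d\mu)$ for $j\not=k$, with the implied constants controlled, and the normalization $\eta_{u\ra v}^i(L_2+1)=\Pi_i/(\Pi_1+\Pi_2+\Pi_3)$ then yields $\eta_{u\ra v}^k(L_2+1)=\frac13+(1+\gamma)\beta$ and $\eta_{u\ra v}^j(L_2+1)=\frac13-(1+\gamma)\beta'$ for $j\not=k$, with $\beta,\beta'$ of order $\Theta(d\mu)$, hence $\ge\eps^2$ once one checks $d\mu\ge$ const$\cdot\|\Delta(L_2)\|_\infty\ge 0.01\eps d^{-1}\cdot d=\Theta(\eps)$. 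The error $|\gamma|\le\eps^3$ comes from propagating the $(1\pm\eps^3)$ slack in~(\ref{eq_EndgameI})--(\ref{eq_EndgameII}) plus the quadratic $O(d\mu^2)=O(\mu/\eps)$ correction, which is $\le\eps^3$ times the main term $\Theta(d\mu)$ provided $\mu$ is small enough — and $\mu\le\|\Delta(L_2)\|_\infty$ which is tiny.

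The main obstacle I anticipate is bookkeeping the asymmetry between the $d$ and $d-1$ neighbors in each color class (one neighbor of $u$ in $V_j$ may be $v$ itself, so the product over $N(u)\setminus v$ has either $d$ or $d-1$ factors from $V_j$), and making sure this $O(1)$-sized perturbation of the exponent — which is of order $\mu$, the same order as one would naively want the ``signal'' to be — does not swamp the genuine signal. The resolution is that the genuine signal in $\ln\Pi_k-\ln\Pi_j$ is of order $d\mu$, not $\mu$: for $i=k$ all $2d$ (or $2d-1$) factors push in the same direction and add up, whereas for $i\not=k$ the $\approx d$ positive and $\approx d$ negative contributions cancel in the linear term, leaving only an $O(\mu)$ residue from the $d$-vs-$(d-1)$ asymmetry plus an $O(d\mu^2)$ quadratic term; both are lower order than $d\mu$ since $d\mu\le\mathrm{const}/\eps$ is bounded but $d$ is large. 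So the ratios $\Pi_k/\Pi_j$ are bounded away from $1$ by $\Theta(d\mu)\ge\Theta(\eps)$, which is exactly what is needed for $\beta,\beta'>\eps^2$, while the relative error stays below $\eps^3$. The remaining work is the elementary inequality~(\ref{eqL1})-style estimates, entirely analogous to Lemma~\ref{Lemma_L}, carried out to one more order of precision.
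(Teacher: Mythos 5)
Your approach — expanding one BP step directly from~(\ref{eqBPOp}) using the estimates~(\ref{eq_EndgameI})--(\ref{eq_EndgameII}) for $\Delta(L_2)$, passing to logarithms, isolating the linear contribution, and bounding quadratic and $d$-vs-$(d-1)$ corrections — is the same route the paper takes; the paper's proof works with $\Delta$ directly and collects only the differences $\Delta^i_{u\to v}-\Delta^j_{u\to v}$ in the exponent, whereas you track each $\Pi_i$ separately, but this is a cosmetic difference of bookkeeping.

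There is, however, a numerical slip in your treatment of the case $i\ne k$. You write that the roughly $d$ factors $\tfrac23-\Theta(\mu)$ and the roughly $d$ factors $\tfrac23+\Theta(\mu)$ lead to a cancellation of the linear term, leaving a residual of order $d\mu^2$ or $\mu$. But the two kinds of $\theta_{x,i}$ are not of equal and opposite magnitude: for $x\in V_i\cap N(u)$ you have $1-\eta^i_{x\to u}=\tfrac23-(1\pm\eps^3)\mu=\tfrac23\bigl(1-\tfrac32(1\pm\eps^3)\mu\bigr)$, so $\theta_{x,i}\approx-\tfrac32\mu$, while for $x$ in the third class $V_m$ you have $\theta_{x,i}\approx+\tfrac34\mu$. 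Thus $\sum_x\theta_{x,i}\approx d\cdot(-\tfrac32\mu)+d\cdot(\tfrac34\mu)=-\tfrac34 d\mu$, which is \emph{not} a lower-order residual — it is of the same order $d\mu$ as the signal in the $i=k$ case. Fortunately your conclusion is unaffected, for two reasons: what drives the final inequality is the gap $\ln\Pi_k-\ln\Pi_j$, which your (incorrect) computation gives as $\tfrac32 d\mu$ and the correct one as $\tfrac94 d\mu$, both $\Theta(d\mu)$; and both wrong colors $j_1,j_2\ne k$ yield the same value of $\ln\Pi_j$ (namely $-\tfrac34 d\mu$), so after normalization they receive the same message $\tfrac13-(1+\gamma)\beta'$. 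Still, you should correct the claim of cancellation — in a situation where the implied constants mattered, or where the two wrong-color residuals differed, this oversight would bite. The remaining estimates (bounding $O(d\mu^2)$ using $d\mu\le\mathrm{const}/\eps$ from $\|\Delta(L_2)\|_\infty\le(\eps d)^{-1}$, and the lower bound $d\mu\ge\Theta(\eps)$ from $\|\Delta(L_2)\|_\infty\ge0.01\eps d^{-1}$ yielding $\beta,\beta'>\eps^2$) are as in the paper.
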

\begin{proof}
We have
	\begin{eqnarray}
	\eta_{v\ra w}^i(L_2+1)&=&
		\frac{\prod_{u\in N(v)\setminus w}1-\frac32\Delta_{u\ra v}^i(L_2)}%
			{\sum_{j=1}^3\prod_{u\in N(v)\setminus w}1-\frac32\Delta_{u\ra v}^j(L_2)}\nonumber\\
			&=&\frac{\exp\bc{-\frac32\sum_{u\in N(v)\setminus w}\Delta_{u\ra v}^i(L_2)+O({\Delta_{u\ra v}^i(L_2))}^2}}%
				{\sum_{j=1}^3\exp\bc{-\frac32\sum_{u\in N(v)\setminus w}\Delta_{u\ra v}^j(L_2)+
					O({\Delta_{u\ra v}^j(L_2))}^2}}\nonumber\\
			&=&\brk{\sum_{j=1}^3\exp\brk{\frac32\sum_{u\in N(v)\setminus w}\Delta_{u\ra v}^i(L_2)-\Delta_{u\ra v}^j(L_2)+
					O(\eps d)^{-2}}}^{-1}
			\label{eq_Endgame1}
	\end{eqnarray}
Since for any $v$ we have $|N(v)|=2d$, we can essentially neglect the $O(\eps d)^{-2}$-term in~(\ref{eq_Endgame1}).
More precisely, for some $-\eps^2\leq\gamma_2=\gamma_{2}(i,v,w)\leq\eps^2$ we have
	\begin{equation}\label{eq_Endgame2}
	\eta_{v\ra w}^i(L_2+1)=(1+\gamma_{2})\brk{\sum_{j=1}^3\exp\brk{\frac32\sum_{u\in N(v)\setminus w}\Delta_{u\ra v}^i(L_2)-
		\Delta_{u\ra v}^j(L_2)}}^{-1}.
	\end{equation}

To analyze~(\ref{eq_Endgame2}), assume without loss of generality that $v\in V_1$.
Then
(\ref{eq_EndgameI}) and~(\ref{eq_EndgameII}) entail that there is a number $-\eps^2<\gamma_3<\eps^2$ such that
	\begin{eqnarray*}
	\sum_{u\in N(v)\setminus w}\Delta_{u\ra v}^1(L_2)-\Delta_{u\ra v}^2(L_2)&=&
		-\bc{\frac32+\gamma_3}d\mu.
	\end{eqnarray*}	
Consequently, $\eta_{v\ra w}^1(L_2+1)=(1+\gamma_{2})\brk{1+2\exp\bc{-(3/2+\gamma_3)d\mu}}^{-1}$.
Finally, since $\mu\leq2(\eps d)^{-1}$, we obtain
	\begin{equation}\label{eq_Endgame3}
	\eta_{v\ra w}^1(L_2+1)=(1+\gamma_{4})\brk{1+2\exp\bc{-\frac32d\mu}}^{-1}
	\end{equation}
for some $-2\eps^2\leq\gamma_4=\gamma_4(1,v,w)\leq2\eps^2$.

Now, assume that $v\in V_2$.
Then
(\ref{eq_EndgameI}) and~(\ref{eq_EndgameII}) entail that there are numbers $-\eps^2<\gamma_5,\gamma_6<\eps^2$ such that
	\begin{eqnarray*}
	\sum_{u\in N(v)\setminus w}\Delta_{u\ra v}^1(L_2)-\Delta_{u\ra v}^3(L_2)&=&\gamma_5d\mu,\\
	\sum_{u\in N(v)\setminus w}\Delta_{u\ra v}^1(L_2)-\Delta_{u\ra v}^2(L_2)&=&(3/2+\gamma_6)d\mu.
	\end{eqnarray*}
Therefore,
	\begin{equation}\label{eq_Endgame4}
	\eta_{v\ra w}^2(L_2+1)=(1+\gamma_4)\brk{2+\exp\bc{\frac32d\mu}}^{-1}
	\end{equation}
for some $-2\eps^2\leq\gamma_4=\gamma_4(2,v,w)\leq2\eps^2$.
Combining~(\ref{eq_Endgame3}) and~(\ref{eq_Endgame4}), we obtain the assertion.
\end{proof}

\begin{cor}\label{Cor_EndgameMedium}
Suppose that. $0.01\eps d^{-1}\leq\|\Delta(L_2)\|_\infty\leq(\eps d)^{-1}$.
Then $\eta_{v\ra w}^a(L_2+2)\geq0.99$ if $v\in V_a$, and $\eta_{v\ra w}^a(L_2+2)\leq0.01$ if $v\not\in V_a$.
\end{cor}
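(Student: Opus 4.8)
The plan is to deduce the statement from Lemma~\ref{Lemma_EndgameMedium} by analyzing one further application of the update rule~(\ref{eqBPOp}). Fix an edge $\{v,w\}\in E$ and a color $a$. By Lemma~\ref{Lemma_EndgameMedium}, for every neighbour $u\in N(v)$ and every color $i$ we have $1-\eta_{u\ra v}^i(L_2+1)=\tfrac23-(1+\gamma(u,v,i))\beta$ when $u\in V_i$ and $1-\eta_{u\ra v}^i(L_2+1)=\tfrac23+(1+\gamma(u,v,i))\beta'$ otherwise, with $|\gamma(u,v,i)|\le\eps^3$ and $\beta,\beta'>\eps^2$. First I would record the elementary consequence of the normalization $\sum_i\eta_{u\ra v}^i(L_2+1)=1$: for the color $i$ with $u\in V_i$ it gives $(1+\gamma(u,v,i))\beta=\sum_{j\ne i}(1+\gamma(u,v,j))\beta'$, hence $\beta\in[2\beta'(1-3\eps^3),2\beta'(1+3\eps^3)]$; in particular, since $\beta<\tfrac23$, the quantity $\tfrac23-\beta(1-\eps^3)$ is positive, and $\beta'\le\tfrac13/(1-\eps^3)$.

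The core of the argument is to compare, for $v\in V_a$, the products $P_i=\prod_{u\in N(v)\setminus w}(1-\eta_{u\ra v}^i(L_2+1))$ ($i=1,2,3$), noting that $\eta_{v\ra w}^a(L_2+2)=P_a/(P_1+P_2+P_3)$. Since $v\in V_a$, every one of the $2d-1$ neighbours in $N(v)\setminus w$ lies in one of the two other classes $V_b,V_c$ (where $\{b,c\}=\{1,2,3\}\setminus\{a\}$), so every factor of $P_a$ has the ``large'' form $\tfrac23+(1+\gamma)\beta'\ge\tfrac23+\beta'(1-\eps^3)$. On the other hand, for $b\ne a$ the product $P_b$ has, up to the removal of $w$, exactly $d$ factors of the ``small'' form $\tfrac23-(1+\gamma)\beta\le\tfrac23-\beta(1-\eps^3)$ (from the neighbours in $V_b$) and $d$ factors of the large form (from the neighbours in $V_c$). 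Cancelling the large-form factors common to $P_a$ and $P_b$ — each such cancellation altering the ratio by a factor at least $1-2\eps^3$ — leaves
\[
\frac{P_a}{P_b}\ \ge\ (1-2\eps^3)^{d}\left(\frac{\tfrac23+\beta'(1-\eps^3)}{\tfrac23-\beta(1-\eps^3)}\right)^{\!d-1}.
\]
Because $\beta\le\tfrac23$ and $\beta+\beta'>2\eps^2$, the base of the power exceeds $1+\tfrac32(\beta+\beta')(1-\eps^3)\ge1+2\eps^2$ for $\eps$ small, so after routine estimates $P_a/P_b\ge\exp(\eps^2 d-O(\eps^3d))\ge\exp(\eps^2d/2)$, and the identical bound holds for $P_a/P_c$.

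Finally I would invoke the standing hypothesis $d\ge d_0>\exp(\eps^{-2})$: it forces $\eps^2 d>\eps^2\exp(\eps^{-2})$ to exceed any prescribed constant once $\eps$ is small, so $P_a/P_b,\ P_a/P_c\ge200$ and therefore $\eta_{v\ra w}^a(L_2+2)=P_a/(P_a+P_b+P_c)\ge(1+2\cdot\tfrac1{200})^{-1}\ge0.99$, which is the first assertion. For the second, if $v\notin V_a$ let $a'$ be the color with $v\in V_{a'}$; the argument just given, with $a$ replaced by $a'$, yields $\eta_{v\ra w}^{a'}(L_2+2)\ge0.99$, and since the three messages $\eta_{v\ra w}^i(L_2+2)$ sum to $1$ by construction, $\eta_{v\ra w}^a(L_2+2)\le0.01$. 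The one delicate point is the accumulation of the per-factor errors $1\pm\eps^3$ across the $\Theta(d)$ factors of the products: this is harmless precisely because their cumulative effect is only $\exp(\pm O(\eps^3 d))$, which is dominated by the genuine signal $\exp(\Omega(\eps^2 d))$ — and the latter is enormous only thanks to $d$ being at least exponential in $\eps^{-2}$.
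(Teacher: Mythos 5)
Your argument is correct and follows essentially the same route as the paper: decompose the ratio $\eta_{v\ra w}^a(L_2+2)/\eta_{v\ra w}^b(L_2+2)$ into the product of contributions from the two other color classes, bound the contribution from neighbours in $V_c$ by $(1-O(\eps^3))^d$ (noise) and the contribution from neighbours in $V_b$ by $(1+2\eps^2)^{d-1}$ (signal), and finish via $d>\exp(\eps^{-2})$. The only real difference is that you take a moment to pin down that $\beta<\tfrac23$ via the normalization of the messages, a point the paper leaves implicit; otherwise the structure of the estimate is identical.
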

\begin{proof}
We assume without loss of generality that $a=1$.
Moreover, suppose that $v\in V_1$.
We shall bound the quotient
	\begin{eqnarray}
	\frac{\eta_{v\ra w}^1(L_2+2)}{\eta_{v\ra w}^2(L_2+2)}&=&Q_2\cdot Q_3,\mbox{ where}\\
	Q_j&=&\prod_{u\in V_j\cap N(v)\setminus w}\frac{1-\eta_{u\ra v}^1(L_2+1)}{1-\eta_{u\ra v}^2(L_2+1)}\mbox{ for }j=2,3,\nonumber
	\end{eqnarray}
from below.
Lemma~\ref{Lemma_EndgameMedium} implies that for $u\in V_3$
	\begin{eqnarray*}
	\frac{1-\eta_{u\ra v}^1(L_2+1)}{1-\eta_{u\ra v}^2(L_2+1)}&\geq&
		\frac{2/3+(1-\eps^3)\beta'}{2/3+(1+\eps^3)\beta'}\geq1+3\eps^3\beta'\geq1-6\eps^3.
	\end{eqnarray*}
Hence,
	\begin{equation}\label{eqCorEndgameMedium1}
	Q_2\geq(1-6\eps^3)^d.
	\end{equation}
Furthermore, for $u\in V_2$ Lemma~\ref{Lemma_EndgameMedium} entails that
	\begin{eqnarray*}
	\frac{1-\eta_{u\ra v}^1(L_2+1)}{1-\eta_{u\ra v}^2(L_2+1)}&\geq&
		\frac{2/3+(1-\eps^3)\beta'}{2/3-(1+\eps^3)\beta}=1+\frac{(1-\eps^3)(\beta+\beta')}{2/3-(1-\eps^3)\beta}\geq1+2\eps^2.
	\end{eqnarray*}
Consequently,
	\begin{equation}\label{eqCorEndgameMedium2}
	Q_2\geq(1+2\eps^2)^{d-1}.
	\end{equation}
Combining~(\ref{eqCorEndgameMedium1}) and~(\ref{eqCorEndgameMedium2}) and recalling that $d\gg\eps^{-2}$, we obtain the assertion.
\end{proof}

\begin{lem}\label{Lemma_EndgameLarge}
Suppose that $\|\Delta(L_2)\|_\infty>(\eps d)^{-1}$.
Then $\eta_{v\ra w}^a(L_2+1)\geq0.99$ if $v\in V_a$, and $\eta_{v\ra w}^a(L_2+2)\leq0.01$ if $v\not\in V_a$.
\end{lem}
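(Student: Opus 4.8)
The plan is to follow the strategy of the ``medium'' case --- Lemma~\ref{Lemma_EndgameMedium} and Corollary~\ref{Cor_EndgameMedium} --- but to exploit the hypothesis $\|\Delta(L_2)\|_\infty>(\eps d)^{-1}$ to show that already a \emph{single} application of the update~(\ref{eqBPOp}) makes $\eta(L_2+1)$ proper; the statement for $L_2+2$ then follows at once from Lemma~\ref{Lemma_runaway}.

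First I would pin down the size of $\mu=\nu\lambda^{L_2}$. On the one hand, (\ref{eq_EndgameI}) and~(\ref{eq_EndgameII}) show that every entry of $\Delta(L_2)$ has absolute value at most $(1+\eps^3)\mu$, so the hypothesis forces $(1+\eps^3)\mu>(\eps d)^{-1}$, i.e.\ $d\mu>\tfrac{1}{2\eps}$; thus the quantity $d\mu$ is ``large''. On the other hand, (\ref{eqXiInfty}) and the definition of $L_2$ give $0.49\,\mu\le\|\Xi(L_2)\|_\infty\le\eps$, so $\mu<3\eps$; in particular $\tfrac32|\Delta^b_{u\to v}(L_2)|<0.1$, which legitimises the elementary inequalities $\exp(-x-x^2)\le1-x\le\exp(-x)$ (valid for $|x|\le0.1$) used below, and $\mu$ itself is small --- a fact we shall use repeatedly.

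Next, fix a vertex $v$; by the symmetry of the colours we may assume $v\in V_1$, and fix $w\in N(v)$. Put $\Pi_b=\prod_{u\in N(v)\setminus w}(1-\tfrac32\Delta^b_{u\to v}(L_2))$, so that $\eta^a_{v\to w}(L_2+1)=\Pi_a/(\Pi_1+\Pi_2+\Pi_3)$. Taking logarithms and applying the inequalities above gives $\ln\Pi_b=-\tfrac32 S_b+E_b$, where $S_b=\sum_{u\in N(v)\setminus w}\Delta^b_{u\to v}(L_2)$ and $|E_b|\le\tfrac94\sum_{u\in N(v)\setminus w}\Delta^b_{u\to v}(L_2)^2\le5\,d\mu^2$, using $|N(v)|=2d$ from Lemma~\ref{Lemma_reg}. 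Since $v\in V_1$ has exactly $d$ neighbours in $V_2$ and $d$ in $V_3$, (\ref{eq_EndgameI})--(\ref{eq_EndgameII}) give $\Delta^1_{u\to v}(L_2)-\Delta^2_{u\to v}(L_2)\in[(-\tfrac32-2\eps^3)\mu,(-\tfrac32+2\eps^3)\mu]$ for $u\in V_2$ and $\Delta^1_{u\to v}(L_2)-\Delta^2_{u\to v}(L_2)\in[-2\eps^3\mu,2\eps^3\mu]$ for $u\in V_3$, whence $S_1-S_2=-(\tfrac32+O(\eps^3))d\mu+O(\mu)$, and likewise for $S_1-S_3$. Consequently, for $b\in\{2,3\}$,
\[
\ln(\Pi_1/\Pi_b)=-\tfrac32(S_1-S_b)+E_1-E_b\ \ge\ 2d\mu-10\,d\mu^2\ =\ 2d\mu(1-5\mu)\ \ge\ d\mu ,
\]
using $\mu<3\eps$ and $d\ge d_0$ large. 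Since $d\mu>\tfrac{1}{2\eps}$ this gives $\Pi_b/\Pi_1<\exp(-\tfrac{1}{2\eps})$, so that
\[
\eta^1_{v\to w}(L_2+1)=\bigl(1+\Pi_2/\Pi_1+\Pi_3/\Pi_1\bigr)^{-1}\ \ge\ \bigl(1+2\exp(-\tfrac{1}{2\eps})\bigr)^{-1}\ \ge\ 0.99
\]
for $\eps$ small enough. This is the first assertion, $\eta^a_{v\to w}(L_2+1)\ge0.99$ whenever $v\in V_a$ and $w\in N(v)$; since~(\ref{eqBPOp}) normalises $\sum_b\eta^b_{v\to w}(L_2+1)=1$, it follows that $\eta^b_{v\to w}(L_2+1)\le0.01$ for every $b\ne a$. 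Hence $\eta(L_2+1)$ is proper, and Lemma~\ref{Lemma_runaway} then shows $\eta(L_2+2)$ is proper as well, giving in particular $\eta^a_{v\to w}(L_2+2)\le0.01$ for $v\notin V_a$.

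The one genuinely delicate point is the middle display: one must verify that the quadratic remainder $E_b=O(d\mu^2)$ incurred by linearising $\ln(1-\tfrac32\Delta)$ is negligible \emph{relative to} the linear part $-\tfrac32(S_1-S_b)=\Theta(d\mu)$. Their ratio is $O(\mu)=O(\eps)$, which is small \emph{regardless of how large $d$ is} --- and this is precisely why in the ``large'' regime one step of~(\ref{eqBPOp}) already suffices, in contrast to the two steps needed in Corollary~\ref{Cor_EndgameMedium}. Apart from this, the argument is the same kind of direct manipulation of~(\ref{eqBPOp}) as in Lemma~\ref{Lemma_EndgameMedium}.
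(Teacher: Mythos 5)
Your proof is correct and takes essentially the same approach as the paper: a direct analysis of one step of the update~(\ref{eqBPOp}), estimating the ratio $\Pi_1/\Pi_b$ via the signs forced by~(\ref{eq_EndgameI})--(\ref{eq_EndgameII}) and the lower bound $d\mu\ge\tfrac{1}{2\eps}$, then invoking normalization to conclude $\eta(L_2+1)$ is proper and Lemma~\ref{Lemma_runaway} to carry the conclusion to $L_2+2$. The only cosmetic difference is that the paper bounds $Q_2=\prod_{u\in V_2\cap N(v)\setminus w}\frac{1-\eta^1_{u\to v}}{1-\eta^2_{u\to v}}$ and $Q_3$ factorwise and then multiplies, whereas you take logarithms and control the linear sums $S_1-S_b$ plus a quadratic remainder $E_b=O(d\mu^2)$; both amount to the same estimate.
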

\begin{proof}
Since $\|\Delta(L_2)\|_\infty>(\eps d)^{-1}$, (\ref{eq_EndgameI}) and~(\ref{eq_EndgameII}) yield
	\begin{equation}\label{eqEndgameLarge}
	\mu\geq(2\eps d)^{-1}.
	\end{equation}
Without loss of generality we may consider a vertex $v\in V_1$ and a neighbor $w\in N(v)$.
We will prove that $\eta_{v\ra w}^1(L_2+1)/\eta_{v\ra w}^2(L_2+1)>1000$.
Since $\sum_{j=1}^3\eta_{v\ra w}^j(L_2+1)=1$, this implies the assertion.
To bound the quotient from below, we decompose
	\begin{eqnarray}
	\frac{\eta_{v\ra w}^1(L_2+1)}{\eta_{v\ra w}^2(L_2+1)}&=&Q_2\cdot Q_3,\mbox{ where}\\
	Q_j&=&\prod_{u\in V_j\cap N(v)\setminus w}\frac{1-\eta_{u\ra v}^1(L_2)}{1-\eta_{u\ra v}^2(L_2)}\mbox{ for }j=2,3,\nonumber
	\end{eqnarray}
With respect to $Q_3$, (\ref{eq_EndgameI}) and~(\ref{eq_EndgameII}) imply that for $u\in V_3$
	\begin{eqnarray*}
	\frac{1-\eta_{u\ra v}^1(L_2)}{1-\eta_{u\ra v}^2(L_2)}&\geq&\frac{2/3+(1/2-\eps^3)\mu)}{2/3+(1/2+\eps^3)\mu}
		=1-\frac{2\eps^3\mu}{2/3+(1/2+\eps^3)\mu}\geq1-3\eps^3\mu.
	\end{eqnarray*}
Hence,
	\begin{equation}\label{eqEndgameLarge1}
	Q_3\geq(1-3\eps^3\mu)^d.
	\end{equation}
Further, (\ref{eq_EndgameI}) and~(\ref{eq_EndgameII}) yield that for $u\in V_2$
	\begin{eqnarray*}
	\frac{1-\eta_{u\ra v}^1(L_2)}{1-\eta_{u\ra v}^2(L_2)}&\geq&\frac{2/3+(1/2-\eps^3)\mu)}{2/3-(1-\eps^3)\mu}
		=1+\frac{(3/2-2\eps^3)\mu}{2/3+(1/2-(1-\eps^3))\mu}\geq1+2\mu.
	\end{eqnarray*}
Therefore,
	\begin{equation}\label{eqEndgameLarge2}
	Q_2\geq(1+2\mu)^{d-1}.
	\end{equation}
Thus, combining~(\ref{eqEndgameLarge})--(\ref{eqEndgameLarge2}), we obtain
	\begin{eqnarray*}
	\frac{\eta_{v\ra w}^1(L_2+1)}{\eta_{v\ra w}^2(L_2+1)}&=&Q_2\cdot Q_3
		\geq(1-3\eps^3\mu)^d(1+2\mu)^{d-1}\geq(1+\mu)^{d-1}\geq1000,
	\end{eqnarray*}
which implies the assertion.
\end{proof}
\noindent
Finally, Proposition~\ref{Prop_Endgame} is a direct consequence of
Corollary~\ref{Cor_EndgameMedium} and Lemma~\ref{Lemma_EndgameLarge}.

\section{Proof of Corollary~\ref{cor:BPforCol}}\label{sec:cor_BPforCol}

\emph{Throughout this section, we assume that $d\geq d_0$ for a sufficiently large constant $d_0>0$,
and that $n>n_0=n_0(d)$ for a large enough $n_0$.
Set $p=d/n$.}


Let $G=G_{n,d,3}$ be a random graph with vertex set $V=\{1,\ldots,3n\}$
and ``planted'' 3-coloring $V_1,V_2,V_3$.
In order to analyze the adjacency $A(G)$, we shall employ the following lemma,
which follows immediately from the ``converse expander mixing lemma'' from~\cite{Bilu}.

\begin{lem}\label{Thm_Bilu}
Let $B=(V'\du V'',E_B)$ be a bipartite $d$-regular graph such that $|V'|=|V''|$.
Assume that
	\begin{equation}\label{eqBilu}
	\forall S\subset V',\,T\subset V'':|e_B(S,T)-|S||T|p|\leq d^{0.51}\sqrt{|S||T|},
	\end{equation}
where $e_B(S,T)$ is the number of $S$-$T$-edges in $B$.
Then the adjacency matrix $A(B)$ enjoys the property:
\begin{quote}
for any two vectors $\xi,\eta\in\RR^{V'\cup V''}$ such that both $\xi$, $\eta$ are perpendicular
to $\vecone_{V'}$ and $\vecone_{V''}$ the inequality
	$\scal{A(B)\xi}{\eta}\leq d^{0.52}\|\xi\|\|\eta\|$
holds.
\end{quote}
\end{lem}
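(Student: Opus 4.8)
The plan is to derive the statement from the converse expander mixing lemma of Bilu and Linial~\cite{Bilu}, which asserts that a (bipartite) $d$-regular graph whose edge distribution has small discrepancy has small non-trivial singular values. First I would replace $A(B)$ by the biadjacency matrix. Let $M\in\RR^{V'\times V''}$ be the $0/1$-matrix with $M_{uv}=1$ iff $\{u,v\}\in E_B$, so that $A(B)=\bc{\begin{smallmatrix}0&M\\ M^T&0\end{smallmatrix}}$. Since $B$ is $d$-regular, $M\vecone_{V''}=d\vecone_{V'}$ and $M^T\vecone_{V'}=d\vecone_{V''}$, so $d$ is the largest singular value of $M$, attained at $\vecone_{V'},\vecone_{V''}$; let $\sigma_2(M)$ denote the second largest singular value, i.e.\ $\sigma_2(M)=\max\{\|Mz\|:\|z\|=1,\ z\perp\vecone_{V''}\}$. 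Writing $\xi=(\xi',\xi'')$ and $\eta=(\eta',\eta'')$ with $\xi',\eta'\in\RR^{V'}$, $\xi'',\eta''\in\RR^{V''}$, the hypothesis $\xi,\eta\perp\vecone_{V''}$ gives $\xi'',\eta''\perp\vecone_{V''}$, whence $\|M\xi''\|\le\sigma_2(M)\|\xi''\|$ and $\|M\eta''\|\le\sigma_2(M)\|\eta''\|$, so
$$\scal{A(B)\xi}{\eta}=\scal{M\xi''}{\eta'}+\scal{\xi'}{M\eta''}\le\sigma_2(M)\bc{\|\xi''\|\,\|\eta'\|+\|\xi'\|\,\|\eta''\|}\le\sigma_2(M)\,\|\xi\|\,\|\eta\|,$$
the last step being Cauchy--Schwarz for $(\|\xi''\|,\|\xi'\|)$ and $(\|\eta'\|,\|\eta''\|)$ (replacing $\eta$ by $-\eta$ gives the absolute value too). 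Thus it suffices to show $\sigma_2(M)\le d^{0.52}$.

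Next I would centre $M$. Put $N=M-p\matone$, where $p=d/n$ and $\matone\in\RR^{V'\times V''}$ is the all-ones matrix; since $pn=d$ we get $N\vecone_{V''}=0$ and $N^T\vecone_{V'}=0$, i.e.\ $N$ annihilates exactly the top singular directions of $M$ and agrees with $M$ on their orthogonal complements, so $\|N\|=\sigma_2(M)$. On the other hand, for indicator vectors $\scal{N\vecone_T}{\vecone_S}=e_B(S,T)-p|S|\,|T|$, so hypothesis~(\ref{eqBilu}) says precisely that $|\scal{N\vecone_T}{\vecone_S}|\le\alpha\sqrt{|S|\,|T|}$ for all $S\subseteq V'$, $T\subseteq V''$, with $\alpha=d^{0.51}$. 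The converse expander mixing lemma~\cite{Bilu} — obtained by the usual dyadic argument: split arbitrary unit vectors $x\perp\vecone_{V''}$, $y\perp\vecone_{V'}$ into their positive and negative parts and then into $O(\log(d/\alpha))$ dyadic magnitude levels (levels of size $\lesssim\alpha/(dn)$ contributing negligibly), approximate each level by a scaled indicator, apply the discrepancy bound to each of the $O(\log^2(d/\alpha))$ pairs of levels, and sum the resulting geometric series — yields $\|N\|\le C\alpha\,(1+\log(d/\alpha))$ for an absolute constant $C$.

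Finally, with $\alpha=d^{0.51}$ we get $\sigma_2(M)=\|N\|\le C d^{0.51}\bc{1+\log d^{0.49}}=O(d^{0.51}\log d)\le d^{0.52}$ as soon as $d\ge d_0$ for a large enough constant $d_0$, because $d^{0.01}$ eventually exceeds any fixed polynomial in $\log d$; combined with the reduction of the first paragraph this proves the lemma. I do not expect a genuine obstacle here: the only substantial ingredient is the converse expander mixing lemma itself, which we invoke from~\cite{Bilu}, and the rest is the bookkeeping needed to pass from the two-vector bipartite statement about $A(B)$ to a single operator-norm bound on the centred matrix $N$, together with the elementary observation that the extra $\log d$ factor is exactly what the gap between the exponents $0.51$ and $0.52$ is there to absorb.
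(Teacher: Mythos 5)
The paper gives no proof of Lemma~\ref{Thm_Bilu} beyond the remark that it ``follows immediately from the converse expander mixing lemma from~\cite{Bilu}''; your argument supplies exactly the bookkeeping that remark elides (reduction to the biadjacency matrix, centering by $p\matone$, identifying $\|N\|$ with $\sigma_2(M)$, and absorbing the $\log d$ factor from Bilu--Linial into the $d^{0.51}\to d^{0.52}$ slack), and it is correct. This is the same approach the paper intends, only written out.
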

\noindent
Moreover, 
the following lemma can be derived using standard techniques from the
theory of random regular graphs~\cite[Chapter~9]{JLR}.

\begin{lem}\label{Lemma_disc}
W.h.p.\ $G$ has the following property.
Let $1\leq i<j\leq 3$.
Then
	$$\forall S\subset V_i,\,T\subset V_j:|e_G(S,T)-|S||T|p|\leq d^{0.51}\sqrt{|S||T|}.$$
\end{lem}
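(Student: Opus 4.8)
\noindent\emph{Proof of Lemma~\ref{Lemma_disc}.}
The plan is to reduce the statement to a discrepancy estimate for a single random $d$-regular bipartite graph, and then prove that estimate by a first-moment computation in the configuration model. Since the asserted property is invariant under relabeling vertices, I may treat the partition $(V_1,V_2,V_3)$ as fixed, so that for each of the three pairs $1\le i<j\le 3$ the restriction of $G$ to $V_i\du V_j$ is an independent uniformly random $d$-regular bipartite graph with both sides of size~$n$. A union bound over the three pairs then reduces everything to the claim that a uniformly random $d$-regular bipartite graph $B=(V'\du V'',E_B)$ with $|V'|=|V''|=n$ satisfies, with probability $1-o(1)$,
	$$\forall\,S\subseteq V',\ T\subseteq V'':\qquad \big|e_B(S,T)-|S||T|p\big|\ \le\ d^{0.51}\sqrt{|S||T|}.$$

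I would work in the pairing model, in which each vertex is assigned $d$ half-edges and a uniformly random perfect matching is drawn between the $dn$ left and the $dn$ right half-edges; since for fixed $d$ the outcome is simple with probability $\Omega(1)$, any property holding with probability $1-o(1)$ in the pairing model holds w.h.p.\ in~$B$. Conditioning on simplicity, $e_B(S,T)$ is (up to an $O(1)$ change of tail probabilities due to the conditioning) the hypergeometric variable counting how many of the $d|S|$ left half-edges at $S$ are matched into the $d|T|$ right half-edges at $T$; in particular $\E\,e_B(S,T)=d|S|\cdot d|T|/(dn)=|S||T|p=:\mu$, one has $e_B(S,T)\le|S||T|$ because $B$ is simple, and $e_B(S,T)$ obeys the standard Chernoff bounds $\P(|e_B(S,T)-\mu|\ge\lambda)\le 2\exp(-\lambda^2/(2\mu+\tfrac23\lambda))$ and $\P(e_B(S,T)\ge\mu+\lambda)\le(\eul\mu/(\mu+\lambda))^{\mu+\lambda}$.

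Fixing $s=|S|$, $t=|T|$ and $\lambda=d^{0.51}\sqrt{st}$, I would bound $\sum_{s,t}\binom ns\binom nt\,\P(|e_B(S,T)-\mu|\ge\lambda)$ by a case distinction. If $\max(s,t)\ge d^{0.98}\min(s,t)$ or $st\le d^{1.02}$, then deterministically $|e_B(S,T)-\mu|\le\min\{st,\,d\min(s,t)\}\le d^{0.51}\sqrt{st}$, so there is nothing to prove. If $st\ge\eps_0 n^2/d^{0.98}$ for a suitably small absolute constant $\eps_0$, then $\mu$ is large enough that the Bernstein bound gives $\P\le\exp(-\Omega(d^{0.02}n))$, which beats the crude count $\binom ns\binom nt\le 4^n$ once $d\ge d_0$. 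In the remaining ``bulk'' regime $d^{1.02}<st<\eps_0 n^2/d^{0.98}$ with $s,t$ comparable up to a factor $d^{0.98}$ one has $\mu\le\lambda/(3\eul)$, so the lower tail is vacuous and the upper tail gives $\P(e_B(S,T)\ge\mu+\lambda)\le(\eul\mu/\lambda)^{\lambda}=\exp\!\big(-d^{0.51}\sqrt{st}\cdot\ln(n/(\eul d^{0.49}\sqrt{st}))\big)$, while $\binom ns\binom nt\le(\eul n/s)^{s}(\eul n/t)^{t}$ together with $d^{-0.49}\sqrt{st}\le s\le\sqrt{st}\le t\le d^{0.49}\sqrt{st}$ bounds the logarithm of the count by $2d^{0.49}\sqrt{st}\big(\ln(n/\sqrt{st})+O(\ln d)\big)$.

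The step I expect to be most delicate is this bulk case: one must verify that for $d\ge d_0$
	$$d^{0.51}\sqrt{st}\cdot\ln\!\big(n/\sqrt{st}\big)\ \ge\ 2d^{0.49}\sqrt{st}\big(\ln(n/\sqrt{st})+O(\ln d)\big)+2\ln n+(s+t)$$
throughout $d^{1.02}<st<\eps_0 n^2/d^{0.98}$. The factor $\ln(n/\sqrt{st})$ occurs on both sides, so the whole argument hinges on the surplus $d^{0.02}=d^{0.51}/d^{0.49}$ absorbing the $O(\ln d)$ losses; this holds precisely because $d\ge d_0$ for a large explicit constant $d_0$, and the tightest point is the upper end $st\asymp n^2/\mathrm{poly}(d)$, where $\ln(n/\sqrt{st})$ is only $\Theta(\ln d)$ and the $d^{0.02}$ margin must not be wasted. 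Granting this inequality, each bulk summand is at most $\exp(-2\ln n-(s+t))$, so the bulk sum is $O(n^{-2})=o(1)$; assembling the three regimes proves the claim for $B$, and a final union bound over the three pairs $(i,j)$ yields the lemma. \qed
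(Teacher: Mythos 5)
Your proposal is a genuine proof where the paper provides none: the authors simply assert that the lemma ``can be derived using standard techniques from the theory of random regular graphs'' and cite Chapter~9 of Janson--\L{}uczak--Ruci\'nski. Your configuration-model first-moment argument is exactly the kind of computation that citation points to, and the overall structure (reduce to a single random $d$-regular bipartite graph, identify $e_B(S,T)$ with a hypergeometric variable, case-split by $st$ and by the ratio $\max(s,t)/\min(s,t)$, and union-bound over subsets) is sound. The deterministic regime is correct because both $e_B(S,T)$ and $\mu$ lie in $[0,\min\{st,d\min(s,t)\}]$, the large regime is killed by Bernstein with margin $d^{0.02}n$ against the $4^n$ entropy, and the bulk regime is handled by the Poisson-style upper tail.

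One correction worth flagging in the bulk case: the exponent coming out of the tail bound is $-d^{0.51}\sqrt{st}\,\ln\!\big(n/(\eul\,d^{0.49}\sqrt{st})\big)$, not $-d^{0.51}\sqrt{st}\,\ln(n/\sqrt{st})$. Writing $L=\ln(n/\sqrt{st})$, your stated inequality has $d^{0.51}\sqrt{st}\,L$ on the left, but the actual left-hand side is $d^{0.51}\sqrt{st}\,(L-1-0.49\ln d)$. This matters precisely at the delicate upper end you identify, where $L\approx 0.49\ln d+\tfrac12\ln(1/\eps_0)$: the subtraction leaves only $d^{0.51}\sqrt{st}\,\big(\tfrac12\ln(1/\eps_0)-1\big)$ on the left, so one needs $\eps_0<\eul^{-2}$ \emph{and} then needs the factor $d^{0.02}$ to beat $2\cdot 0.98\ln d$ from the counting term, which forces $d_0$ to be quite large (consistent with the paper's requirement $d_0>\exp(\eps^{-2})$). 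Your prose acknowledges that ``$O(\ln d)$ losses'' must be absorbed, but the displayed inequality hides the fact that the loss also appears (subtracted) on the left; it would be cleaner to carry the term $-0.49\,d^{0.51}\sqrt{st}\,\ln d$ explicitly before concluding. With that bookkeeping, the argument closes as you describe.
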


\begin{cor}\label{Cor_reg}
W.h.p.\ $G$ is $(d,0.01)$-regular.
\end{cor}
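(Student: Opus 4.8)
The plan is to work on the high-probability event furnished by Lemma~\ref{Lemma_disc} and to verify conditions {\bf R1} and {\bf R2} from the definition of $(d,0.01)$-regularity directly, with respect to the planted coloring $(V_1,V_2,V_3)$. Throughout, write $A(G)=\sum_{1\le i<j\le 3}A_{ij}$, where $A_{ij}\in\RR^{V\times V}$ denotes the adjacency matrix of the $d$-regular bipartite graph between $V_i$ and $V_j$, extended by zeros outside $V_i\cup V_j$. By construction of $G_{n,d,3}$ every vertex $v\in V_i$ has exactly $d$ neighbours in each other class $V_j$ and none inside $V_i$, and $|V_1|=|V_2|=|V_3|=n$.

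For {\bf R1}, fix $i<j$, let $k$ be the third index, and put $\xi=\vecone_{V_i}-\vecone_{V_j}$. A vertex $v\in V_i$ has $d$ neighbours in $V_j$ (each with $\xi$-entry $-1$) and $d$ in $V_k$ (entry $0$), so $(A(G)\xi)_v=-d=-d\,\xi_v$; a vertex $v\in V_j$ likewise gives $(A(G)\xi)_v=d=-d\,\xi_v$; and a vertex $v\in V_k$ has $d$ neighbours in $V_i$ and $d$ in $V_j$, so $(A(G)\xi)_v=d-d=0=-d\,\xi_v$. Hence $A(G)\xi=-d\,\xi$, and this step uses only $d$-regularity, not the discrepancy hypothesis.

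For {\bf R2}, let $\xi\perp\vecone_{V_1},\vecone_{V_2},\vecone_{V_3}$ and decompose $\xi=\xi^{(1)}+\xi^{(2)}+\xi^{(3)}$ with $\xi^{(i)}$ supported on $V_i$; then $\xi^{(i)}\perp\vecone_{V_i}$ for each $i$. I would bound $\|A(G)\xi\|=\max_{\|\eta\|=1}\scal{A(G)\xi}{\eta}$. Given a unit vector $\eta$, split each restriction as $\eta^{(i)}=c_i\,n^{-1/2}\vecone_{V_i}+\tilde\eta^{(i)}$ with $\tilde\eta^{(i)}\perp\vecone_{V_i}$. Since $A_{ij}$ interchanges $\RR^{V_i}$ and $\RR^{V_j}$ and kills $\RR^{V_k}$, one gets $\scal{A(G)\xi}{\eta}=\sum_{a\ne b}\scal{A_{ab}\xi^{(a)}}{\eta^{(b)}}$. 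The key point is that $A_{ab}\xi^{(a)}$ is perpendicular to $\vecone_{V_b}$: its sum over $V_b$ equals $d\,\scal{\xi^{(a)}}{\vecone_{V_a}}=0$, because every vertex of $V_b$ has exactly $d$ neighbours in $V_a$. Therefore $\scal{A_{ab}\xi^{(a)}}{\eta^{(b)}}=\scal{A_{ab}\xi^{(a)}}{\tilde\eta^{(b)}}$, and now both $\xi^{(a)}$ and $\tilde\eta^{(b)}$ are perpendicular to $\vecone_{V_a}$ and to $\vecone_{V_b}$. Lemma~\ref{Thm_Bilu}, applied to the bipartite graph on $V_a\cup V_b$ — whose discrepancy hypothesis~(\ref{eqBilu}) holds by Lemma~\ref{Lemma_disc} — yields $\scal{A_{ab}\xi^{(a)}}{\tilde\eta^{(b)}}\le d^{0.52}\|\xi^{(a)}\|\,\|\tilde\eta^{(b)}\|$. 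Summing over the six ordered pairs and applying Cauchy--Schwarz (using $\sum_a\|\xi^{(a)}\|\le\sqrt3\,\|\xi\|$ and $\sum_b\|\tilde\eta^{(b)}\|\le\sqrt3\,\|\eta\|=\sqrt3$) gives $\scal{A(G)\xi}{\eta}\le 3d^{0.52}\|\xi\|$. Taking the supremum over $\eta$, $\|A(G)\xi\|\le 3d^{0.52}\|\xi\|\le 0.01\,d\,\|\xi\|$ once $d\ge d_0$ for a large enough constant $d_0$ (it suffices that $d^{0.48}\ge 300$). This is {\bf R2} with $\eps=0.01$, and together with {\bf R1} it shows that $G$ is $(d,0.01)$-regular w.h.p.

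Nothing here is hard once the two cited inputs are in place: Lemma~\ref{Lemma_disc} (edge-count concentration for random $d$-regular bipartite graphs, via the standard machinery of~\cite{JLR}) and Lemma~\ref{Thm_Bilu} (the converse expander mixing lemma of~\cite{Bilu}) carry the real weight, and I would take both as given. The only mild subtlety internal to the corollary is that Lemma~\ref{Thm_Bilu} gives a \emph{bilinear} estimate rather than an operator-norm bound, so one must strip off the $\vecone_{V_i}$-components of the arbitrary test vector $\eta$ before invoking it; this is legitimate precisely because $d$-regularity forces $A_{ij}$ to map $\vecone_{V_i}^{\perp}$ into $\vecone_{V_j}^{\perp}$.
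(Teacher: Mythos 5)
Your proof is correct and follows the paper's approach essentially verbatim: decompose $A(G)$ into the three bipartite blocks, verify {\bf R1} by the exact $d$-regularity of the planted coloring, and use Lemma~\ref{Lemma_disc} together with Lemma~\ref{Thm_Bilu} to bound each block, yielding {\bf R2} for $d$ large enough. If anything you are slightly more careful than the paper at the bilinear-to-operator-norm step, explicitly peeling off the $\vecone_{V_i}$-components of the test vector $\eta$ — the paper leaves this implicit, relying on the fact that $A(G)$ maps $\EE^\perp$ into $\EE^\perp$.
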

\begin{proof}
Let $A(G)=(a_{v,w})_{v,w\in V}$ denote the adjacency matrix of $G$.
Moreover, let
	$$a_{vw}^{ij}=\left\{\begin{array}{cl}
		a_{vw}&\mbox{ if }v,w\in V_i\cup V_j,\\
		0&\mbox{ otherwise}
		\end{array}\right.\qquad(1\leq i<j\leq 3).$$
Then $A^{ij}=(a_{vw}^{ij})_{v,w\in V}$ is the adjacency matrix of the bipartite subgraph of $G$
induced on $V_i\cup V_j$.
Let $\EE$ be the subspace of $\RR^{V}$ spanned by the three vectors $\vecone_{V_k}$ ($k=1,2,3$).
Combining Lemma~\ref{Thm_Bilu} with Lemma~\ref{Lemma_disc}, we conclude that w.h.p.\
	$\scal{A^{ij}\xi}{\eta}\leq d^{0.52}\|\xi\|\|\eta\|$ for all $\xi,\eta\perp\EE$ and any $1\leq i<j\leq 3$.
Since $A(G)=\sum_{1\leq i<j\leq 3}A^{ij}$, this implies that
	\begin{equation}\label{eqPrfReg1}
	\forall \xi,\eta\perp\EE:\scal{A(G)\xi}{\eta}\leq 0.01d\|\xi\|\|\eta\|
	\end{equation}
(provided that $d$ is sufficiently large).
Furthermore, as the construction of $G$ ensures that each vertex $v\in V_i$ has exactly
$d$ neighbors in each class $V_j\not=V_i$, we can compute the vector $\zeta^i=A(G)\vecone_{V_i}$ as follows:
for any $v\in V$
	\begin{eqnarray*}
	\zeta_v^i&=&\sum_{w\in V_i}a_{vw}=\left\{\begin{array}{cl}0&\mbox{ ifÊ}v\in V_i,\\d&\mbox{ if }v\not\in V_i.
			\end{array}\right.
	\end{eqnarray*}
Hence, $\zeta^i=A(G)\vecone_{V_i}=d\sum_{j\not=i}\vecone_{V_j}$.
Therefore, for any $1\leq i<j\leq3$ we have
	\begin{equation}\label{eqPrfReg2}
	A(G)(\vecone_{V_i}-\vecone_{V_j})=-d(\vecone_{V_i}-\vecone_{V_j}).
	\end{equation}
Combining~(\ref{eqPrfReg1}) and~(\ref{eqPrfReg2}), we see that $G$ is $(d,0.01)$-regular w.h.p.
\end{proof}
\noindent
Finally, Corollary~\ref{cor:BPforCol} follows from Theorem~\ref{thm:BPforCol} and Corollary~\ref{Cor_reg}.


\section{Conclusion}

We have shown that \texttt{BPCol} 3-colors $(d,0.01)$-regular graphs in polynomial time.
Three potentially interesting extensions suggest themselves, which may be the subject of future work.
\begin{enumerate}
\item In $(d,0.01)$-regular graphs every vertex has precisely $d$ neighbors in each color class
	except for its own.
	By comparison, in the planted random graph model studied in~\cite{AlonKahale97} the number
	that a vertex has in another color class is Poisson with mean $d$.
	It would be interesting to see if/how the present analysis can be modified to deal with such a
	more irregular degree distribution.
\item Survey Propagation (``SP'') is a more involved version of Belief Propagation (although SP can
	be rephrased as BP on a different model~\cite{ANewLook}) and performs very well empirically
	on random graphs $G(n,p)$.
	It would be interesting to extend our analysis to SP.
\item In a $(d,0.01)$-regular graph there is exactly one 3-coloring (up to permutations of the color classes).
	Nonetheless, we think that the techniques of our analysis can be extended to more complicated
	``solution spaces''.
	For instance, it should be straightforward to deal with graphs that have a bounded number of distinct 3-colorings.
\end{enumerate}


\begin{thebibliography}{10}


\bibitem{AchFried}
D.~Achlioptas, E.~Friedgut.
\newblock
A sharp threshold for $k$-colorability.
\newblock Random Struct.\ Algorithms 14(1): 63--70, 1999.

\bibitem{AlonKahale97}
N.~Alon and N.~Kahale.
\newblock A spectral technique for coloring random {$3$}-colorable graphs.
\newblock {\em SIAM J. on Comput.}, 26(6):1733--1748, 1997.





\bibitem{Bilu}
Y.~Bilu, N.~Linial.
\newblock Lifts, discrepancy and nearly optimal spectral gap.
\newblock Combinatorica 26(5): 495-519, 2006.

\bibitem{Weigt}
A.~Braunstein, M.~M\'ezard, M.~Weigt, R.~Zecchina.
\newblock
Constraint satisfaction by survey propagation.
\newblock In A.~Percus, G.~Istrate, C.~Moore (eds.):
\emph{Computational Complexity and Statistical Physics},
Oxford University Press 2005.

\bibitem{BrMeZe:03}
A.~Braunstein, M.~M\'{e}zard, and R.~Zecchina.
\newblock Survey propagation: an algorithm for satisfiability.
\newblock Random Struct.\ Algorithms 27(2): 201--226, 2005.

\bibitem{BMPWZCol}
A.~Braunstein, R.~Mulet, A.~Pagnani, M.~Weigt, and R.~Zecchina.
\newblock
Polynomial iterative algorithms for coloring and analyzing random graphs.
\newblock Phys.\ Rev.\ E 68, 036702 (2003). 

\bibitem{ChungGraham}
F.~Chung, R.~Graham.
\newblock Sparse quasi-random graphs.
\newblock Combinatorica 22: 217--244, 2002.








\bibitem{WP}
U.~Feige, E.~Mossel, and D.~Vilenchik.
\newblock Complete convergence of message passing algorithms for some
  satisfiability problems.
\newblock In {\em Random 2006}, volume 4110 of {\em Lecture Notes in Comput.
  Sci.}, pages 339--350, 2006.




\bibitem{JLR}
S.~Janson, T.~{\L}uczak, A.~Ruci\'nski.
\newblock Random Graphs.\newblock Wiley (2000)



\bibitem{pnas}
F.~Krzakala, A.~Montanari, F.~Ricci-Tersenghi, G.~Semerjianc, L.~Zdeborova,
\emph{Gibbs states and the set of solutions of random constraint satisfaction problems.}
Proc.~National Academy of Sciences 104:10318--10323,  2007.




\bibitem{LMSS98}
M.~Luby, M.~Mitzenmacher, M.~A. Shokrollahi, and D.~Spielman.
\newblock Analysis of low density parity check codes and improved designs using
  irregular graphs.
\newblock In {\em Proceedings of the 30th ACM Symposium on Theory of
  Computing}, pages 249--258, 1998.

\bibitem{LMSS01}
M.~Luby, M.~Mitzenmacher, M.~A. Shokrollahi, and D.~Spielman.
\newblock Efficient erasure correcting codes.
\newblock {\em IEEE Trans. Info. Theory}, 47:569--584, February 2001.

\bibitem{ANewLook}
E.~Maneva, E.~Mossel, and M.~Wainwright.
\newblock A new look at survey propagation and its generalizations.
\newblock In {\em Proc. 16th ACM-SIAM Symp. on Discrete Algorithms}, pages
  1089--1098, 2005.




\bibitem{Pearl}
J.~Pearl.
\newblock {\em Probabilistic reasoning in intelligent systems: networks of
  plausible inference}.
\newblock Morgan Kaufmann Publishers Inc., San Francisco, CA, USA, 1988.

\bibitem{RSU01}
T.~Richardson, A.~Shokrollahi, and R.~Urbanke.
\newblock Design of capacity-approaching irregular low-density parity check
  codes.
\newblock {\em IEEE Trans. Info. Theory}, 47:619--637, February 2001.



\bibitem{Osamu}
M.~Yamamoto, O.~Watanabe.
\newblock
Belief propagation and spectral methods.
\newblock Report C--248, Department of Mathematical and Computing Sciences, Tokyo Institute of Technology,
November~2007.



\end{thebibliography}

\end{document}